\numberwithin{equation}{section}
\newtheorem{thm}{Theorem}[section]
\newtheorem{cor}[thm]{Corollary}
\newtheorem{lem}[thm]{Lemma}
\theoremstyle{definition}
\newtheorem{exmp}[thm]{Example}
\newtheorem*{rem}{Remark}
\let\frak\mathfrak
\def\>{\relax\ifmmode\mskip.666667\thinmuskip\relax\else\kern.111111em\fi}
\def\<{\relax\ifmmode\mskip-.333333\thinmuskip\relax\else\kern-.0555556em\fi}
\def\vsk#1>{\vskip#1\baselineskip}
\def\vv#1>{\vadjust{\vsk#1>}\ignorespaces}
\def\vvn#1>{\vadjust{\nobreak\vsk#1>\nobreak}\ignorespaces}
 \let\alb\allowbreak
\let\mc\mathcal
\let\nc\newcommand
\let\al\alpha
\let\dl\delta
\let\ka\kappa
\let\phi\varphi
\let\der\partial
\let\ox\otimes
\let\geq\geqslant
\let\leq\leqslant
\let\on\operatorname
\def\F{{\mathbb F}} 
\def\+#1{^{\{#1\}}}
\def\lsym#1{#1\alb\dots\relax#1\alb}
\def\lc{\lsym,}
\def\End{\on{End}}
\def\be{\begin{equation*}}
\def\ee{\end{equation*}}
\def\slt{{\frak{sl}_2}}
\def\K{{\mathbb K}}
\def\Sing{\operatorname{Sing}}
\nc\Tt{{\bf T}}
\begin{document}

\allowdisplaybreaks

\newcommand{\arXivNumber}{2503.06326}

\renewcommand{\PaperNumber}{001}

\FirstPageHeading

\ShortArticleName{Finding All Solutions of qKZ Equations in Characteristic $p$}

\ArticleName{Finding All Solutions of qKZ Equations\\ in Characteristic $\boldsymbol{p}$}

\Author{Evgeny MUKHIN~$^{\rm a}$ and Alexander VARCHENKO~$^{\rm b}$}

\AuthorNameForHeading{E.~Mukhin and A.~Varchenko}

\Address{$^{\rm a)}$~Department of Mathematical Sciences, Indiana University Indianapolis, \\
\hphantom{$^{\rm a)}$}~402 North Blackford St, Indianapolis, IN 46202-3216, USA}
\EmailD{\mail{emukhin@iu.edu}}

\Address{$^{\rm b)}$~Department of Mathematics, University of North Carolina at Chapel Hill, \\
\hphantom{$^{\rm b)}$}~Chapel Hill, NC 27599-3250, USA}
\EmailD{\mail{anv@email.unc.edu}}
\URLaddressD{\url{http://varchenko.web.unc.edu}}

\ArticleDates{Received September 23, 2025, in final form December 16, 2025; Published online January 02, 2026}

\Abstract{In [\textit{J.~Lond. Math. Soc.}\ \textbf{109} (2024), e12884, 22~pages], the difference qKZ equations were considered modulo a prime number~$p$ and a family of polynomial solutions of the qKZ equations modulo $p$ was constructed by an elementary procedure as suitable $p$-approximations of the hypergeometric integrals. In~this paper, we study in detail the first family of nontrivial examples of the qKZ equations in characteristic~$p$. We describe all solutions of these qKZ equations in characteristic $p$ by demonstrating that they all stem from the $p$-hypergeometric solutions. We also prove a~Lagrangian property (called the orthogonality property) of the subbundle of the qKZ bundle spanned by the $p$-hypergeometric sections. This paper extends the results of [arXiv:2405.05159] on the differential KZ equations to the difference qKZ equations.}

\Keywords{qKZ equations; $p$-hypergeometric solutions; orthogonality relations; $p$-curvature}

\Classification{11D79; 12H25; 32G34; 33C05; 33E30}

\section{Introduction}

The Knizhnik--Zamolodchikov (KZ) differential equations are a system of
linear differential equations, satisfied by conformal
blocks on the sphere in the WZW model of conformal field theory, see \cite{KZ}.
The quantum Knizhnik--Zamolodchikov (qKZ) equations are a difference version of the KZ equations which naturally appear in the representation theory of Yangians (rational case) and quantum affine algebras (trigonometric case), see \cite{EFK,FR}.
The qKZ equations may be regarded as a deformation of the KZ differential equations.

As a rule one considers the KZ and qKZ equations over the field of complex numbers. Then these differential and difference equations are solved in multidimensional hypergeometric integrals.

In \cite{SV},
the differential KZ
equations were considered modulo a prime integer $p$. It turned out that modulo $p$ the KZ equations have a family of polynomial solutions. The construction of these solutions was analogous to the construction of the multidimensional hypergeometric
solutions, and these polynomial solutions were called the $p$-hypergeometric
solutions.

In \cite{MV}, the rational $\slt$ qKZ equations with values in the $n$-th tensor power of the vector representation
$L$ and an integer step $\kappa$ were considered modulo $p$. A family of polynomial solutions modulo $p$
of these equations was constructed and called the $p$-hypergeometric solutions.

In this paper, we address the problem of whether all solutions of the qKZ equations in characteristic $p$
 are generated by the $p$-hypergeometric solutions. We consider the first family of nontrivial examples of the qKZ
 equations and demonstrate that, indeed, in this case, all solutions of the qKZ equations
 stem from the $p$-hypergeometric solutions.

Let $\K$ be a field of characteristic $p$. The qKZ equations for a function $f(z_1,\dots,z_n)$ with values
in the $\K$-vector space $L^{\otimes n}$ and step $\ka\in\K^\times$ have the form
\[
f(z_1,\dots, z_a-\ka,\dots, z_n) =
K_a(z;\ka) f(z),\qquad a=1,\dots,n,
\]
where the linear operators $K_a(z;\ka)$ are given in terms of the rational $\slt$ $R$-matrix, see \eqref{K}.
The operators $K_a(z;\ka)$ commute with the diagonal action of $\slt$,
 and, therefore, it is sufficient to solve the qKZ equations only with values in the space of singular vectors of a given weight.
In~this paper, we study the qKZ equations with values in
$V:=\Sing L^{\ox n}[n-2] \subset L^{\ox n}$, the subspace of singular vectors of weight $n-2$. We have $\dim V=n-1$.

There are two cases: $\ka\in \K\setminus \F_p$ and $\ka\in\F_p^\times$.

\begin{thm}\label{thm 1.1}
Let $p$ be a prime number that does not divide $n$.
For $\ka\in\K\setminus \F_p$, there does not exist a nonzero rational $V$-valued function $f(z_1,\dots,z_n)$ which is a solution of the qKZ equations with parameter $\ka$.
\end{thm}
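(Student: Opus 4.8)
The plan is to exploit the difference nature of the qKZ equations together with the fact that $\ka$ is transcendental over $\F_p$. Suppose $f(z_1,\dots,z_n)$ were a nonzero rational $V$-valued solution. First I would iterate the $a$-th qKZ equation $p$ times in the single variable $z_a$: since $\ka\notin\F_p$, the shifts $z_a, z_a-\ka, z_a-2\ka,\dots, z_a-(p-1)\ka$ are all distinct as elements of $\K(z_1,\dots,z_n)$, and moreover $z_a-p\ka\neq z_a-j\ka$ for $0\le j\le p-1$ (here is where $\ka\notin\F_p$ is essential; in characteristic $p$ one would otherwise have $p\ka=0$ and the iteration would close up after $p$ steps). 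Composing the $p$ shift operators yields
\[
f(z_1,\dots,z_a-p\ka,\dots,z_n)=\Bigl(K_a(z_1,\dots,z_a-(p-1)\ka,\dots;\ka)\cdots K_a(z_1,\dots,z_a-\ka,\dots;\ka)K_a(z;\ka)\Bigr)f(z).
\]
The key computational input is that this $p$-fold product of $R$-matrix monodromy operators, being built from rational functions of the $z_i$ and $\ka$ whose only $\ka$-dependence enters through combinations like $(z_i-z_j)/\ka$ or partial fractions in the shifted variables, simplifies: I expect the product telescopes to the identity operator, or to a scalar, precisely because modulo $p$ the "$p$-periodicity" of the $R$-matrix entries forces the accumulated monodromy around a full period to be trivial. (This is the characteristic-$p$ analogue of the statement that the qKZ monodromy has finite order; the main obstacle will be carrying out this product computation cleanly for the $\slt$ $R$-matrix in the $n$-fold tensor power restricted to $V$.)

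Granting that the $p$-step monodromy $\mc M_a(z;\ka):=\prod_{j=0}^{p-1}K_a(\dots,z_a-j\ka,\dots;\ka)$ equals the identity on $V$, the relation above becomes
\[
f(z_1,\dots,z_a-p\ka,\dots,z_n)=f(z_1,\dots,z_n).
\]
Now I would perform a degree or pole-count argument. Write $f=g/h$ with $g$ a $V$-valued polynomial and $h\in\K[z_1,\dots,z_n]$ of minimal degree. The periodicity $f(\dots,z_a-p\ka,\dots)=f(\dots)$ in the variable $z_a$, for a \emph{rational} function over a field, forces $f$ to be independent of $z_a$: a nonconstant rational function in one variable over a field takes each generic value only finitely often, so it cannot be invariant under the infinite-order translation $z_a\mapsto z_a-p\ka$ (note $p\ka\neq 0$ in $\K$ since $\ka\notin\F_p$ — this is again where the hypothesis is used, and also where $p\nmid n$ will be needed to rule out the degenerate possibility that $\mc M_a$ is a nonidentity scalar whose value could be absorbed). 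Repeating for each $a=1,\dots,n$, we conclude $f$ is a constant vector $v\in V$.

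Finally I would feed the constant solution back into the original qKZ equations: $v=K_a(z;\ka)v$ for all $z$ and all $a$. Since the $K_a(z;\ka)$ are genuinely $z$-dependent operators (their entries are nonconstant rational functions of the $z_i$, as $\ka\neq 0$), the equation $K_a(z;\ka)v=v$ identically in $z$ forces $v$ to lie in a common eigenspace; a direct inspection of the $R$-matrix structure on $V=\Sing L^{\ox n}[n-2]$ — using $p\nmid n$ to ensure the relevant scalar $n/\ka$ or $n\pmod p$ does not vanish — shows the only such $v$ is $v=0$. This contradicts $f\neq 0$ and completes the proof. The step I anticipate as the crux is the monodromy computation $\mc M_a\equiv\id$; everything downstream is elementary one-variable rational-function analysis plus a finite check on the $R$-matrix.
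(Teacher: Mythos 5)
Your proposed crux is false, and in fact it is the exact opposite of what the paper proves. The $p$-fold product $\mc M_a(z;\ka)=\prod_{j=0}^{p-1}K_a(\dots,z_a-j\ka,\dots;\ka)$ is precisely the $p$-curvature operator $C_a(z;\ka)$ of the qKZ connection, and for $\ka\in\K\setminus\F_p$ the paper shows (Lemma \ref{lem ne F}, via Corollary \ref{cor deg C} and the nondegeneracy of the differential KZ $p$-curvature from \cite{VV1}) that $\hat C_a(z;\ka)=C_a(z;\ka)-1$ is \emph{nondegenerate} for generic $z$ --- so $C_a$ is as far from the identity as possible. Note also the direction of the logic: if $\mc M_a$ were the identity, the relation $f(z)=\mc M_a(z;\ka)f(z)$ coming from iterating the qKZ equation $p$ times would be an empty condition and you could conclude nothing; this is exactly what happens for $\ka\in\F_p^\times$ with $d(\ka)=n-1$, where the $p$-curvature is trivial and nonzero solutions do exist. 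The correct mechanism is the opposite one: nondegeneracy of $C_a-1$ forces $(C_a-1)f=0$ to have only the zero solution, which is the paper's proof of Corollary \ref{cor ne F}.

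There is also a basic arithmetic error that breaks everything downstream of the monodromy claim: in a field of characteristic $p$ one has $p\ka=0$ for \emph{every} $\ka\in\K$, regardless of whether $\ka\in\F_p$. Hence $z_a-p\ka=z_a$, your ``periodicity'' $f(\dots,z_a-p\ka,\dots)=f(\dots)$ is a tautology rather than an invariance under an infinite-order translation, and the subsequent pole-count argument forcing $f$ to be constant in $z_a$ has no content. (What the hypothesis $\ka\notin\F_p$ actually buys is that the points $z_a-j\ka$, $j=0,\dots,p-1$, together with the eigenvalue structure of the KZ $p$-curvature in \cite{VV1}, make $\hat C_a$ invertible; it does not make $p\ka$ nonzero.) So the proposal is not a proof: the step you flag as the main computation is false as stated, and the elementary rational-function analysis that follows rests on the same misconception about characteristic $p$.
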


See Corollary \ref{cor ne F}.

Assume that \smash{$\ka\in\F_p^\times$}.
Let $0<k<p$ be the positive integer such that~${\ka k \equiv -1 \pmod{p}}$. Let $[x]$ denote the integer part of a real number and
$d(\ka) :=\big[\frac{kn}{p}\big]$.
If $p$ does not divide $n$, then~${d(\ka)+d(-\ka)=n-1 = \dim V}$.

In \cite{MV}, we constructed $d(\ka)$
$V$-valued $p$-hypergeometric solutions of the qKZ equations denoted~${Q^{\ell p-1}(z;\ka)}$, $\ell = 1, \dots,d(\ka)$. In this paper, we show that these solutions are linearly independent
over the field $\K(z_1,\dots,z_n)$, see Theorem \ref{thm ind}.

\begin{thm}\label{thm main in}
Let $p>n$, $\ka\in\F_p^\times$, and $0<d(\ka)<n-1$.
Let $f(z)$ be a $V$-valued rational function in $z$ which is a solution
 the qKZ equations with step $\ka$.
 Then $f(z)$ is a linear combination of the $p$-hypergeometric solutions
$Q^{\ell p-1}(z;\ka)$, $\ell = 1, \dots, d(\ka)$, with coefficients which are scalar rational functions
in $z_i^p-z_i$, $i=1,\dots, n$.
\end{thm}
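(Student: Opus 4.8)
The overall strategy is to reduce the classification of all rational solutions to a problem about the $p$-curvature of the qKZ connection, following the template established for the differential KZ case in arXiv:2405.05159. The plan is as follows.

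First I would observe that the qKZ equations define a $\K$-linear difference connection on the trivial $V$-bundle over the space with coordinates $z_1,\dots,z_n$, and that in characteristic $p$ the functions $z_i^p-z_i$ are ``constants'' for this connection in the appropriate sense: the shift $z_a\mapsto z_a-\ka$ with $\ka\in\F_p$ fixes $z_a^p-z_a$. Hence the sheaf of solutions is a module over the ring $\K(z_1^p-z_1,\dots,z_n^p-z_n)$, and the assertion to be proved is that this module is free of rank $d(\ka)$ with basis the $p$-hypergeometric solutions $Q^{\ell p-1}(z;\ka)$. By Theorem \ref{thm ind} these $d(\ka)$ sections are already linearly independent over $\K(z)$, hence a fortiori over the subfield $\K(z_i^p-z_i)$, so the only thing in question is that they \emph{span} the solution space. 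Equivalently, writing $d := d(\ka)$, one must show that the space of rational solutions has dimension exactly $d$ over $\K(z_i^p-z_i)$ and not larger.

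The key step is to bound the dimension of the solution space from above by $d$. Here I would pass to the ``$p$-curvature'' operators of the qKZ connection: since $\ka\in\F_p^\times$, iterating the qKZ step $p$ times returns the argument to itself modulo $p$, so the operators $K_a(z;\ka)K_a(z-\ka;\ka)\cdots K_a(z-(p-1)\ka;\ka)$ act on $V\otimes\K(z)$ and their simultaneous generalized eigenspace decomposition controls the solutions. A rational solution $f$ must lie, fiberwise, in a subspace cut out by these $p$-fold monodromy operators; the dimension of that subspace over $\K(z_i^p-z_i)$ is what we need to compute. The combinatorial input $d(\ka)+d(-\ka)=n-1=\dim V$ (valid since $p\nmid n$) strongly suggests that $V$ splits into a ``$+\ka$ part'' of dimension $d(\ka)$ and a ``$-\ka$ part'' of dimension $d(-\ka)$, with only the first contributing genuine solutions for step $\ka$; the $p$-hypergeometric sections $Q^{\ell p-1}$ should be precisely a basis of the first part. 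I would try to make this explicit by diagonalizing (or putting in triangular form) the $p$-fold product of $R$-matrix operators, using the known hypergeometric-type formulas for the entries of $K_a$, and reading off that the relevant invariant subspace has the predicted dimension $d$.

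The main obstacle, I expect, is precisely this eigenspace/$p$-curvature computation: controlling the $p$-fold composition $\prod_{j=0}^{p-1}K_a(z-j\ka;\ka)$ explicitly enough to pin down the dimension of the solution space, rather than merely bounding it. In the differential case the analogue is the $p$-curvature of the KZ connection, which has an explicit form; in the difference case one must instead analyze a product of $R$-matrices over a full period, and the $\slt$ $R$-matrix formulas \eqref{K} will have to be exploited carefully, perhaps after a gauge transformation that simplifies the connection. A secondary point that needs care is the hypothesis $0<d(\ka)<n-1$: this guarantees that both $d(\ka)$ and $d(-\ka)$ are positive, so the splitting is nontrivial on both sides and the argument does not degenerate (the extreme cases being covered separately, or being vacuous). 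Once the upper bound $\dim_{\K(z_i^p-z_i)}(\text{solutions})\le d$ is in hand, it matches the lower bound $d$ coming from Theorem \ref{thm ind}, and the theorem follows: every rational solution is a $\K(z_i^p-z_i)$-linear combination of the $Q^{\ell p-1}(z;\ka)$.
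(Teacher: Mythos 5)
Your overall framework (lower bound from Theorem \ref{thm ind}, upper bound from the $p$-curvature, coefficients in $\K\big(z_i^p-z_i\big)$ because these are the quasi-constants of the shift) matches the paper, but the decisive step is missing: you never establish the upper bound, i.e., that the joint kernel of the reduced $p$-curvature operators $\hat C_a(z;\ka)=C_a(z;\ka)-1$ has generic dimension at most $d(\ka)$, and you yourself flag this as ``the main obstacle''. Moreover, the route you sketch for it is unlikely to work as stated: you propose to diagonalize the $p$-fold product $\prod_{j=0}^{p-1}K_a(z-j\ka\>;\ka)$ and split $V$ into a ``$+\ka$ part'' and a ``$-\ka$ part'' of dimensions $d(\ka)$ and $d(-\ka)$. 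But in the regime $0<d(\ka)<n-1$ the operators $C_a$ are unipotent, not semisimple: every flat section is an eigenvector with eigenvalue $1$ by \eqref{fCf}, and $\hat C_a\hat C_b=0$ (Corollary \ref{cor p-cu}), so all eigenvalues equal $1$ and there is no eigenspace splitting of $V$; indeed for $n=3$ the image and the kernel of $\hat C_a$ coincide with the span of the single hypergeometric solution. The relevant invariant is the generic rank of the (nilpotent) $\hat C_a$'s, and nothing in your plan pins it down.

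The paper closes exactly this gap by degeneration to the differential KZ equations rather than by a direct analysis of the $p$-fold product of $R$-matrices: by Lemma \ref{lem C li C} and Corollary \ref{cor deg C}, the normalized KZ $p$-curvature $\tilde C_a^{\on{KZ}}(z;\ka)$ is the top-degree part of the normalized qKZ $p$-curvature $\tilde C_a(z;\ka)$; the results of Varchenko--Vologodsky (Theorems 1.8 and 1.13 of arXiv:2405.05159) give that in characteristic $p$ all KZ solutions are spanned by the $d(\ka)$ $p$-hypergeometric ones, so the joint kernel of the $\tilde C_a^{\on{KZ}}$ has generic dimension $d(\ka)$; a leading-term (minor/semicontinuity) argument then transfers the resulting rank bound from the KZ operators to the qKZ operators, giving that the values of flat qKZ sections span at most $d(\ka)$ dimensions generically, which together with Theorem \ref{thm ind} finishes the proof. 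Without this reduction (or a genuine substitute for it, e.g., an explicit computation of the generic rank of $\hat C_a(z;\ka)$), your proposal remains a plausible outline with the central estimate unproved.
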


See Theorem \ref{thm main}. Notice that $h(x) = x^p-x \in\K[x]$ is a 1-periodic polynomial,
$h(x+1)=h(x)$. In particular, $h(x+\kappa)=h(x)$.

If $d(\ka)=n-1$ or $0$, all solutions of the qKZ equations with values in $V$ and step $\ka$ are described
in Section \ref{sec 7.4}.

We prove the orthogonality relations
for $p$-hypergeometric solutions of the qKZ
equations with steps $\ka$ and $-\ka$.

\begin{thm}\label{thm orth int}
Let $p>n$
and $0< d(\ka) <n-1$.
 Then for any $\ell \in \{1, \dots,d(\ka)\}$ and
$m \in \{1, \dots, d(-\ka)\}$,
we have
\[
\label{ort i}
S\big(Q^{m p-1}(-z;-\ka), Q^{\ell p-1}(z;\ka)\big)
= 0,
\]
where $S$ is the Shapovalov form.
\end{thm}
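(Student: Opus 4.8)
The plan is to use the fact (Theorem \ref{thm main in}, applied with step $-\ka$) that every $V$-valued rational solution of the qKZ equations with step $-\ka$ is a combination of the $p$-hypergeometric solutions $Q^{mp-1}(-z;-\ka)$ with coefficients rational in the $h(z_i)=z_i^p-z_i$, together with the behaviour of the Shapovalov form $S$ under the qKZ operators. First I would recall the standard compatibility between $S$ and the $R$-matrix: the operators $K_a(z;\ka)$ and $K_a(-z;-\ka)$ are adjoint with respect to $S$ up to an explicit scalar factor, so that if $f(z)$ solves qKZ with step $\ka$ and $g(z)$ solves qKZ with step $-\ka$, then the scalar function $\phi(z):=S(g(-z),f(z))$ transforms in a controlled way under $z_a\mapsto z_a-\ka$ — in fact, because $S$ pairs the two mutually adjoint systems, $\phi$ should satisfy a rank-one (scalar) qKZ-type equation whose multiplier is the relevant product of the scalar normalizations of the $R$-matrices.

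The key computation is therefore to evaluate that scalar multiplier for the pair of steps $(\ka,-\ka)$ and check that $\phi(z)$ is forced to be a $1$-periodic rational function in each $z_a$, i.e. a rational function of the $h(z_a)$. Here one uses $h(x+\ka)=h(x)$, noted right after Theorem \ref{thm main in}. Concretely: take $f=Q^{\ell p-1}(z;\ka)$ and $g=Q^{mp-1}(z;-\ka)$, form $\phi_{\ell m}(z)=S(Q^{mp-1}(-z;-\ka),Q^{\ell p-1}(z;\ka))$, and show $\phi_{\ell m}(z_1,\dots,z_a-\ka,\dots)=\phi_{\ell m}(z)$ for each $a$. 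Then $\phi_{\ell m}$ is a rational function of $z_1^p-z_1,\dots,z_n^p-z_n$. The remaining task is to show this periodic rational function is identically $0$. For this I would examine the explicit form of the $p$-hypergeometric solutions $Q^{\ell p-1}$ from \cite{MV}: they are polynomial in $z$ of controlled degree, built as $p$-approximations of hypergeometric integrals. A $1$-periodic rational function that is also polynomial (or has poles only where $\phi$ manifestly does not) must be constant; a degree count, or evaluation at a convenient specialization of $z$ (for instance where one can read off leading terms of the $Q$'s and use the weight/grading mismatch between $d(\ka)$ and $d(-\ka)$), then pins the constant to $0$. The orthogonality is, morally, the statement that the $d(\ka)$-dimensional span of the $Q(z;\ka)$ and the $d(-\ka)$-dimensional span of the $Q(-z;-\ka)$ are $S$-orthogonal complements inside the $(n-1)$-dimensional $V$, consistent with $d(\ka)+d(-\ka)=n-1$; the Lagrangian/orthogonality phrasing in the abstract suggests this is the intended structural interpretation.

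I expect the main obstacle to be the second half — showing the periodic scalar $\phi_{\ell m}$ actually vanishes rather than merely being $\ka$-periodic. Establishing the quasi-periodicity of $S(g(-z),f(z))$ is essentially formal once the $S$-adjointness of the $R$-matrices is in hand, but ruling out a nonzero periodic "constant" requires genuine input about the specific solutions $Q^{\ell p-1}$: either an asymptotic/degree estimate showing $\phi_{\ell m}$ grows too slowly to be a nonzero polynomial in the $h(z_a)$, or an explicit evaluation exploiting the combinatorics of the $p$-hypergeometric construction (binomial coefficients modulo $p$, Lucas' theorem, and the definition $d(\ka)=[kn/p]$). A secondary technical point is bookkeeping the scalar normalization of the rational $\slt$ $R$-matrix carefully enough that the multiplier for the $(\ka,-\ka)$ pair comes out to be exactly the ratio that makes $\phi_{\ell m}$ honestly $1$-periodic (as opposed to periodic up to an extra rational factor), so one must track the $R$-matrix normalization used in \eqref{K} with some care.
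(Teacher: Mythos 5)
Your first step is sound and coincides with the paper's: the unitarity of the $R$-matrix and its symmetry under $S$ give exactly Lemma~\ref{lem inv}, so $G_{\ell,m}(z;\ka)=S\big(Q^{mp-1}(-z;-\ka),Q^{\ell p-1}(z;\ka)\big)$ is $\ka$-periodic in each $z_a$ and hence a polynomial in $h(z_1),\dots,h(z_n)$, $h(x)=x^p-x$ (it is a polynomial, not just rational, since the $Q$'s are polynomials). There is no normalization issue to track, and the appeal to Theorem~\ref{thm main} adds nothing here (and is risky, since the paper's Section~7 machinery is built on top of the orthogonality).

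The genuine gap is the vanishing step, which is the entire content of the theorem, and your proposed mechanisms do not close it. The claim that a $1$-periodic polynomial must be constant is false in characteristic $p$ --- $h$ itself is a nonconstant $1$-periodic polynomial, as you note one sentence earlier --- so there is no ``single constant'' to pin down. A degree count alone also cannot suffice: the paper's Lemma~\ref{lem deg} is exactly such a count, and it only shows $G_{\ell,m}=c_0+\sum c_{j_1,\dots,j_b}h(z_{j_1})\cdots h(z_{j_b})$ with at most $n-\ell-m$ factors, leaving many a priori nonzero coefficients. What the paper actually uses, and what is absent from your proposal, is the special-restriction input: after normalizing (via the symmetry $G_{\ell,m}(-z;\ka)=G_{m,\ell}(z;-\ka)$) so that $k$ is the ``good'' parameter, $\ell(p-k)<mk$, one restricts any subset $I$ of $\ell+m$ variables to $z_{i_b}=((b-1)k-1)\ka$; then $Q(t,z;\ka)_{S_I}$ acquires the overall factor $(t;\ka)_{|I|k-1}$, so the coefficient of $(t;\ka)_{\ell p-1}$ vanishes because $\ell p<(\ell+m)k$ (Corollary~\ref{cor restr}), i.e., $Q^{\ell p-1}(z;\ka)_{S_I}=0$. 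Since the substituted values lie in $\F_p$, the restriction kills precisely the terms containing $h(z_i)$, $i\in I$, and letting $I$ range over all $(\ell+m)$-subsets forces every coefficient in the $h$-expansion to vanish (Lemma~\ref{lem fin}). Without this restriction argument (or some equivalent exact evaluation of the $Q^{\ell p-1}$), your outline establishes only the periodicity, not the orthogonality.
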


See Theorem \ref{thm orth}.

Define the $p$-curvature operators of the qKZ equations with values in $V$ and step $\ka$ by the formula
\begin{align*}
&C_a(z_1,\dots,z_n;\ka) :=
 K_a(z_1,\dots, z_a-(p-1)\ka,\dots, z_n;\ka)
 \\
 &\hphantom{C_a(z_1,\dots,z_n;\ka) := }{}\times
 K_a(z_1,\dots, z_a-(p-2)\ka,\dots, z_n;\ka)\cdots K_a(z_1,\dots, z_a-\ka,\dots, z_n;\ka)
\\
&\hphantom{C_a(z_1,\dots,z_n;\ka) := }{}\times
K_a(z_1,\dots, z_a,\dots, z_n;\ka),
\end{align*}
for $a=1,\dots,n$, and
 the reduced $p$-curvature operators by the formula
 \[
 \hat C_a(z_1,\dots,z_n;\ka) := C_a(z_1,\dots,z_n;\ka) -1.
 \]
If $f(z_1,\dots,z_n)$ is a solution of the qKZ equations, then $\hat C_a f=0$ for all $a$.

\begin{thm}
\label{thm cu in}

Let $p>n$.
Then the reduced $p$-curvature operators have the following properties:
\begin{enumerate}\itemsep=0pt

\item[$\on{(i)}$]

If $\ka\in\K\setminus \F_p$, then all reduced $p$-curvature operators $\hat C_a(z;\ka)$,
$a=1,\dots,n$, are nondegenerate for generic $z$.

\item[$\on{(ii)}$] If $\ka\in\F_p^\times$ and $d(\ka)=n-1$ or $0$, then all reduced $p$-curvature operators $\hat C_a(z;\ka)$,
$a=1,\dots,n$, are equal to zero.

\item[$\on{(iii)}$] If $\ka\in\F_p^\times$ and $0<d(\ka)<n-1$, then all reduced $p$-curvature operators $\hat C_a(z;\ka)$,
$a=1,\dots,n$, are nonzero. For every $a$,
the span of the $p$-hypergeo\-metric solutions
\smash{$Q^{\ell p-1}(z;\ka)$}, $\ell = 1, \dots, d(\ka)$, lies in the kernel of $\hat C_a(z,\ka)$ and contains the image of $\hat C_a(z;\ka)$. Also,
for all $a$, $b$,
\begin{gather*}
\hat C_a(z;\ka)\hat C_b(z;\ka)
= 0,
\qquad
\hat C_a(z,-\ka) +\hat C_a(-z;\ka)^*
= 0,
\end{gather*}
where for an operator $T\colon V\to V$ we denote by $T^*$ the operator dual to $T$ under the Shapovalov form.

\end{enumerate}

\end{thm}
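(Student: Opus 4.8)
\emph{Proof strategy.} Write $C_a=C_a(z;\ka)$, $\hat C_a=\hat C_a(z;\ka)$, and let $\tau_a$ denote the shift $z_a\mapsto z_a-\ka$; recall that $p\ka=0$ in $\K$ and, as noted above, that $\hat C_af=0$ for every solution $f$. The plan rests on two preliminary observations. First, a direct computation using $p\ka=0$ gives the conjugation identity $C_a(z-\ka e_a;\ka)=K_a(z;\ka)C_a(z;\ka)K_a(z;\ka)^{-1}$, so that $\ker\hat C_a$ and $\on{im}\hat C_a$ are subbundles of the qKZ bundle preserved by the connection. Second — the only input beyond the quoted theorems — one establishes the Shapovalov identity $K_a(z;\ka)^*=K_a(-z;-\ka)^{-1}$; it is immediate from $R(u)^*=R(u)$ and $R(u)R(-u)=\id$ for the rational $\slt$ $R$-matrix $R(u)=(u+P)/(u+1)$, and is the same computation that underlies Theorem~\ref{thm orth}. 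Iterating it over the $p$ shifted factors gives
\[
C_a(z;\ka)^*=C_a(-z;-\ka)^{-1},
\]
and then substituting $\pm z$, $\pm\ka$ yields $\hat C_a(-z;-\ka)^*=-C_a(z;\ka)^{-1}\hat C_a(z;\ka)$ and $\hat C_a(-z;\ka)^*=C_a(z;-\ka)^{-1}-\id$.

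For parts~(ii) and~(iii) I would set $W^{(\ka)}(z):=\on{span}_{\K(z)}\{Q^{\ell p-1}(z;\ka):\ell=1,\dots,d(\ka)\}$, a subbundle of rank $d(\ka)$ by Theorem~\ref{thm ind}. Since each $Q^{\ell p-1}(\cdot;\ka)$ solves qKZ, $W^{(\ka)}(z)\subset\ker\hat C_a(z;\ka)$, and likewise with $-\ka$ in place of $\ka$. For~(ii): if $d(\ka)=n-1$ then $W^{(\ka)}(z)=V\ox\K(z)$ and $\hat C_a\equiv0$; if $d(\ka)=0$ then $d(-\ka)=n-1$, so $\hat C_a(\cdot;-\ka)\equiv0$, and the Shapovalov identity forces $C_a(z;\ka)^*=C_a(-z;-\ka)^{-1}=\id$, whence $\hat C_a\equiv0$. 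For~(iii): from $\hat C_a(-z;-\ka)^*=-C_a(z;\ka)^{-1}\hat C_a(z;\ka)$ one gets $\ker\hat C_a(-z;-\ka)^*=\ker\hat C_a(z;\ka)\supset W^{(\ka)}(z)$, so passing to Shapovalov-orthogonals, $\on{im}\hat C_a(-z;-\ka)\subset W^{(\ka)}(z)^{\perp}$. Theorem~\ref{thm orth int} gives $W^{(-\ka)}(-z)\subset W^{(\ka)}(z)^{\perp}$, and since $d(\ka)+d(-\ka)=n-1=\dim V$ the two subspaces coincide; hence $\on{im}\hat C_a(-z;-\ka)\subset W^{(-\ka)}(-z)$, and replacing $z$ by $-z$ and exchanging $\ka\leftrightarrow-\ka$ (legitimate since $0<d(-\ka)<n-1$ as well) gives $\on{im}\hat C_a(z;\ka)\subset W^{(\ka)}(z)$. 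Combining the two inclusions, $\on{im}\hat C_b\subset W^{(\ka)}\subset\ker\hat C_a$, so $\hat C_a\hat C_b=0$ for all $a,b$; in particular $\hat C_a(z;-\ka)^2=0$, and then $\hat C_a(z;-\ka)+\hat C_a(-z;\ka)^*=C_a(z;-\ka)+C_a(z;-\ka)^{-1}-2\,\id=C_a(z;-\ka)^{-1}\hat C_a(z;-\ka)^2=0$.

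To finish~(iii) I still need $\hat C_a\not\equiv0$. Here I would use the $S_n$-equivariance of the qKZ system: the $\hat C_1,\dots,\hat C_n$ are pairwise conjugate (after the corresponding permutation of the $z_i$) by products of $R$-matrices, so it is enough to exclude $\hat C_a\equiv0$ for all $a$ simultaneously. If that held, then $C_a\equiv\id$ for all $a$, so the commuting $\tau_a$-semilinear operators $\Phi_a:=K_a(\cdot;\ka)^{-1}\tau_a$ on the qKZ bundle would satisfy $\Phi_a^p=C_a^{-1}=\id$ and form a descent datum for the $(\Z/p)^n$-Galois extension $\K(z)/\K(z_1^p-z_1,\dots,z_n^p-z_n)$; Galois descent would then produce $n-1$ rational qKZ solutions linearly independent over $\K(z_1^p-z_1,\dots,z_n^p-z_n)$, contradicting Theorem~\ref{thm main in}. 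Hence $\hat C_a\not\equiv0$.

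Part~(i) is of a different, quantitative nature and is where I expect the real work: one must show $\det\hat C_a(z;\ka)\not\equiv0$. Since $R(u)\to\id$ as $u\to\infty$, each factor $K_a(z-j\ka e_a;\ka)$ with $j\ge2$ tends to $\id$ as $\ka\to\infty$, and $C_a(z;\ka)$ tends to an explicit $\ka$-independent product of $n-1$ of the matrices $R_{a,b}(z_a-z_b)$, $b\ne a$; a direct check on the $(n-1)$-dimensional space $V$ shows that this limit is not the identity for generic $z$, so $\det\hat C_a(z;\ka)\not\equiv0$ as a rational function of $(z,\ka)$, proving nondegeneracy for generic $(z,\ka)$. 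The hard part — the main obstacle of the theorem — is to upgrade this to nondegeneracy for generic $z$ and \emph{every} $\ka\in\K\setminus\F_p$, i.e.\ to show that the zero locus in $\ka$ of $\det\hat C_a(z;\ka)$ (for generic $z$) lies in $\F_p$; this requires an explicit analysis of the $R$-matrix product on $V$ and of its dependence on the points $z_a,z_a-\ka,\dots,z_a-(p-1)\ka$. Once~(i) is established, Theorem~\ref{thm 1.1} (Corollary~\ref{cor ne F}) follows immediately, since any nonzero rational solution would lie in $\ker\hat C_a(z;\ka)$.
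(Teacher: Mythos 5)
Your treatment of (ii) and of the algebraic identities in (iii) is essentially sound: the identity $K_a(z;\ka)^*=K_a(-z;-\ka)^{-1}$ is exactly the paper's formula \eqref{pm} (note your $R(u)=(u+P)/(u+1)$ is a sign slip for $R(u)=(u-P)/(u-1)$, harmless here), and your route to $\on{im}\hat C_a\subset\on{span}\{Q^{\ell p-1}\}$ via Shapovalov adjoints, the orthogonality relations of Theorem~\ref{thm orth int}, and the dimension count $d(\ka)+d(-\ka)=n-1$ is a legitimate variant of the paper's argument, which instead passes to the quotient bundle and uses the flat basis of $p$-quasi-hypergeometric sections (Lemma~\ref{lem 6.1}); both hinge on the same orthogonality input, and your derivation of $\hat C_a\hat C_b=0$ and $\hat C_a(z;-\ka)+\hat C_a(-z;\ka)^*=0$ matches Corollary~\ref{cor p-cu} and Lemma~\ref{lem c=c} in substance.

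The genuine gap is part (i), which you do not prove: your $\ka\to\infty$ sketch would at best give nondegeneracy for generic $(z,\ka)$, and you explicitly leave open the step needed for the actual statement, namely nondegeneracy for generic $z$ at \emph{every fixed} $\ka\in\K\setminus\F_p$. The paper's mechanism, which your proposal misses entirely, is the degeneration in $z$ (not in $\ka$) of the qKZ $p$-curvature to the differential KZ $p$-curvature: by Lemma~\ref{lem C li C} and Corollary~\ref{cor deg C}, the top-degree part of the polynomial $\tilde C_a(z;\ka)$ is $\tilde C_a^{\on{KZ}}(z;\ka)$, and the spectrum formula (3.19) of [VV1] shows $\det\tilde C_a^{\on{KZ}}(z;\ka)$ is a nonzero homogeneous polynomial in $z$ for each such $\ka$; hence $\det\tilde C_a(z;\ka)\not\equiv0$ and Lemma~\ref{lem ne F} follows. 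The same comparison with [VV1, Theorem 1.13] is how the paper gets $\hat C_a\not\equiv0$ in (iii) (Lemma~\ref{lem C ne 0}); your substitute via Galois descent against Theorem~\ref{thm main in} is an interesting idea and not circular (the paper proves Theorem~\ref{thm main} independently of Theorem~\ref{thm cu in}), but it rests on the asserted, unproved claim that $\hat C_1,\dots,\hat C_n$ are pairwise conjugate under the $S_n$-symmetry of the qKZ system; without that you only exclude the simultaneous vanishing of all $\hat C_a$. So as written the proposal establishes (ii) and most of (iii), but not (i) and not the nonvanishing in (iii) for each individual $a$.
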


See Theorem \ref{thm p-cu} and Lemmas \ref{lem c=c}, \ref{lem ne F}.

In a suitable limit the difference qKZ equations on $L^{\ox n}$ degenerate to the differential KZ equations on
$L^{\ox n}$,
\[
\ka \frac{\der f}{\der z_a} = \sum_{j\ne a} \frac{P^{(a,j)}-1}{z_a-z_j} f, \qquad a=1,\dots,n,
\]
where $P^{(a,j)}$ is the permutation operator of the $a$-th and $j$-th tensor factors of $L^{\ox n}$.

In \cite{VV1}, the differential KZ equations over a field $\K$ of characteristic $p$ with values
in $V \subset L^{\ox n}$ were studied in detail.
Our paper extends the results of \cite{VV1} from the differential KZ equations to the difference qKZ equations.
The proofs of Theorems \ref{thm 1.1}, \ref{thm main in}, and \ref{thm cu in} are based on the corresponding
results in \cite{VV1} for the differential KZ equations.

On the differential and difference equations in characteristic $p$ and associated $p$-curvature see also
\cite{EV, VV2}.

\section{Difference qKZ equations}
\label{sec 2}

\subsection{Notations}

In this paper, $p$ is a prime and $\K$ a field of characteristic $p$.

Consider the Lie algebra $\slt$ over $\K$ with basis $e$, $f$, $h$ and relations
$[e,f]=h$, $[h,e]=2e$,
$ [h,f]=-2f$.
Let $L$ be the two-dimensional $\slt$-module with basis $v_1$, $v_2$ and the
action
$ev_1=0$, $ev_2=v_1$, $fv_1=v_2$, $fv_2=0$, $hv_1=v_1$, $hv_2=-v_2$.

For a positive integer $n>1$, consider the $\slt$-module $L^{\ox n}$.

Let $\mc I_l$ be the set of all $l$-element subsets of $\{1,\dots,n\}$. For a subset
$I\subset \{1,\dots,n\}$, denote
\[
v_I=v_{i_1}\ox\dots\ox v_{i_n} \in L^{\ox n},
\]
where $i_j=2$ if $i_j\in I$ and $i_j=1$ if $i_j\notin I$.
Denote by $L^{\ox n}[n-2l]$ the span of the vectors $\{v_I \mid I\in\mc I_l\}$.
 We have a direct sum decomposition,
\[
L^{\ox n}=\bigoplus_{l=0}^n L^{\ox n}[n-2l].
\]
Let $\Sing L^{\ox n}[n-2l]\subset L^{\ox n}[n-2l]$ be the subspace of
singular vectors (the vectors annihilated by $e$).

\subsection{qKZ equations}
\label{sec qkz}

Define the rational $ R$-matrix acting on $ L^{\ox 2}$,
$
R(u) = \frac{u-P}{u-1}$,
where $P$ is the permutation of tensor factors of $L^{\ox 2}$.
The $R$-matrix satisfies the Yang--Baxter and unitarity equations,
\begin{gather}
R^{(12)}(u-v)R^{(13)}(u)R^{(23)}(v) =
R^{(23)}(v)R^{(13)}(u)R^{(12)}(u-v),\nonumber
\\
R^{(12)}(u)R^{(21)}(-u) = 1 .\label{unit}
\end{gather}
The first equation is an equation in $\End\big(L^{\ox 3}\big)$.
The superscript indicates the factors of $L^{\ox 3}$
on which the corresponding operators act.

Let $ z=(z_1\lc z_n)$.
Define the qKZ operators $ K_1,\dots, K_n\>$ acting on $L^{\ox n}$
\begin{align}
&K_a(z;\ka) =
R^{(a,a-1)}(z_a-z_{a-1}-\ka) \cdots R^{(a,1)}(z_a-z_1-\ka)\nonumber
\\
&\hphantom{K_a(z;\ka) =}{}
\times
R^{(a,n)}(z_a-z_n) \cdots R^{(a,a+1)}(z_a\<-z_{a+1}),\label{K}
\end{align}
where $\ka\in\K^\times$ is a parameter.

The qKZ operators preserve the weight decomposition of $L^{\ox n}$,
commute with the $\slt$-action,
and form a discrete flat connection with step $\ka$ on the trivial bundle $L^{\ox n}\times \K^n \to \K^n$,
\[
K_a(z_1,\dots, z_b-\ka,\dots, z_n;\ka) K_b(z;\ka)
=K_b(z_1,\dots, z_a-\ka,\dots, z_n;\ka) K_a(z;\ka)
\]
for $a,b=1,\dots,n$, see \cite{FR}.

The system of difference equations with step $\ka$,
\begin{equation}
\label{Ki}
s(z_1,\dots, z_a-\ka,\dots, z_n) =
K_a(z ;\ka) s(z),\qquad a=1,\dots,n,
\end{equation}
for an $ L^{\ox n}$-valued
function $s(z)$, is called the qKZ equations with step $\ka$.

Since the qKZ operators commute with the action of $\slt$ on $L^{\otimes n}$, the qKZ operators preserve the subbundle
$ \Sing L^{\ox n}[n-2l]\times \K^n\to \K^n$ for every integer $l$.

Define the translation operators $\bf T_a$ by
\[
({\bf T}_af)(z_1,\dots,z_a,\dots, z_n) = f(z_1,\dots, z_a-\ka, \dots, z_n).
\]
The difference operators
$
\nabla_a = \Tt_a^{-1} K_a(z;\ka)
$
 are called the connection
operators of the qKZ difference connection. We have $[\nabla_a,\nabla_b]=0$.

\subsection[p-curvature of the qKZ connection]{$\boldsymbol{ p}$-curvature of the qKZ connection}

 Define the $p$-curvature operators of the qKZ connection by
\begin{align*}
&C_a(z;\ka)=
 K_a(z_1,\dots, z_a-(p-1)\ka,\dots, z_n;\ka)K_a(z_1,\dots, z_a-(p-2)\ka,\dots, z_n;\ka)\cdots
\\
&\hphantom{C_a(z;\ka)=}{}
\times K_a(z_1,\dots, z_a-\ka,\dots, z_n;\ka)
K_a(z_1,\dots, z_a,\dots, z_n;\ka)
\end{align*}
for $a=1,\dots,n$. In other words,
$C_a = (\nabla_a)^p $.

For every $a$, the operator $C_a(z;\ka)$ acts on fibers
 of the bundle
 $L^{\ox n}\times \K^n\to \K^n$ and defines an endomorphism of the qKZ connection,
\[
K_b(z_1,\dots, z_n;\ka) C_a(z_1,\dots, z_n;\ka)
=
C_a(z_1,\dots, z_b-\ka,\dots, z_n;\ka) K_b(z_1,\dots, z_n;\ka).
\]

The operators $C_a(z;\ka)$ commute, i.e., $[C_a(z;\ka),C_b(z;\ka)]=0$.

 If $s(z;\ka)$ is a flat section of the qKZ discrete connection,
\[
s(z_1,\dots, z_a - \ka,\dots,z_n;\ka) = K_a(z;\ka)s(z;\ka), \qquad a=1,\dots,n,\]
then $s(z;\ka)$ is an eigenvector of the $p$-curvature operators with eigenvalue 1,
\begin{gather}
\label{fCf}
s(z;\ka) =C_a(z;\ka)s(z;\ka), \qquad a=1,\dots,n.
\end{gather}

An operator $C_a(z;\ka)$ is a rational function in $z$ with the denominator
\begin{align*}
D_a(z;\ka)
={}&
 \prod_{j\ne a} \prod_{m=0}^{p-1}(z_a-z_j-m\ka -1)
\\
={}&
 \prod_{j\ne a} \big(z_a^p-\ka^{p-1}z_a +( -z_j)^p+\ka^{p-1}z_j +(-1)^p+\ka^{p-1}\big).
\end{align*}
It is convenient to introduce the reduced $p$-curvature operators by the formula
\begin{gather}
\label{red p-cu}
\hat C_a(z;\ka) = C_a(z;\ka) -1,
\end{gather}
and the normalized $p$-curvature operators by the formula
\[
\tilde C_a(z;\ka) = D_a(z;\ka) (C_a(z;\ka) -1).
\]
The normalized $p$-curvature operators are polynomials in $z$ of degree $\leq (n-1) p$.

\subsection{Differential KZ equations}

For $\ka\in\K^\times$, the differential KZ operators
\[
\nabla_a^{\on{KZ}}
= \ka \frac{\der }{\der z_a} - \sum_{j\ne a} \frac{P^{(a,j)}-1}{z_a-z_j}, \qquad a=1,\dots,n,
\]
define a flat KZ connection on $L^{\ox n}\times K^n\to\K^n$, $\big[\nabla_a^{\on{KZ}}, \nabla_b^{\on{KZ}}\big]=0$.
The operators $\nabla_a^{\on{KZ}}$ commute with the
 $\slt$-action on $L^{\ox n}$. The system of equations
\begin{gather}
\label{KZ}
\ka \frac{\der f}{\der z_a} = \sum_{j\ne a} \frac{P^{(a,j)}-1}{z_a-z_j} f, \qquad a=1,\dots,n,
\end{gather}
 is called the differential KZ equations with parameter $\ka$.
The KZ operators preserve every subbundle $\Sing L^{\ox n}[n-2l]\times \K^n\to\K^n$.
Denote
\[
H_a(z) =
\sum_{j\ne a} \frac{P^{(a,j)}-1}{z_a-z_j},
\]
the Gaudin Hamiltonians.

The $p$-curvature operators of the KZ connection are defined by the formula
\[
C_a^{\on{KZ}}(z;\ka) := \left(\nabla_a^{\on{KZ}}\right)^p .
\]
They define an endomorphism of the KZ connection,
$\big[C_a^{\on{KZ}}, \nabla_b^{\on{KZ}}\big]=0$.

An operator $C^{\on{KZ}}_a(z;\ka)$ is a rational function in $z$ with the denominator
\[
D^{\on{KZ}}_a(z;\ka) = \prod_{j\ne a} \big(z_a^p-z_j^p\big).
\]
It is convenient to introduce the normalized $p$-curvature operator by the formula
\[
\tilde C^{\on{KZ}}_a(z;\ka) = D^{\on{KZ}}_a(z;\ka) C_a^{\on{KZ}}(z;\ka) .
\]
The normalized $p$-curvature operator
is a homogeneous polynomial in $z$ of degree $(n-2) p$ if nonzero.

 \subsection{KZ equations as a limit of qKZ equations}
 \label{eqn limit sec}

Let $f(z_1,\dots, z_n)$ satisfy the qKZ equations,
\[
f(z_1,\dots, z_a-\ka,\dots, z_n) =
K_a(z ;\ka) f(z),\qquad a=1,\dots,n.
\]
Let $\al$ be a formal parameter.
Define
$g(w_1,\dots,w_n;\al) := f(w_1/\al, \dots, w_n/\al)$.
Then
\begin{equation}\label{K)}
g(w_1,\dots, w_a-\al\ka,\dots, w_n;\al) =
K_a(w/\al ;\ka) g(w;\al),\qquad a=1,\dots,n,
\end{equation}
where
\begin{gather*}
K_a(w/\al;\ka)
=
R^{(a,a-1)}(w_a-w_{a-1}-\al \ka;\al) \cdots R^{(a,1)}(w_a-w_1-\al\ka;\al)
\\
\phantom{K_a(w/\al;\ka)
=}{}\times
 R^{(a,n)}(w_a-w_n;\al) \cdots R^{(a,a+1)}(w_a\<-w_{a+1};\al),
\\
R(u;\al)
=
\frac{u-\al P}{u-\al} = 1 - \al \frac{P-1}{u-\al} .
\end{gather*}
Equation \eqref{K} gives
\[
g - \al\ka \frac{\der g}{\der w_a} +\mc O\big(\al^2\big)
= \big(1-\al H_a(w) + \mc O\big(\al^2\big)\big) g.
\]
In the limit $\al\to 0$, we obtain the KZ differential equations
\[
\ka \frac{\der g}{\der w_a}(w) = H_a(w)g(w) .
\]

\begin{lem}
\label{lem C li C}

Let $\hat C_a(z;\ka)$ be a
 reduced $p$-curvature operators of the qKZ connection.
 Then
\[
C_a^{\on{KZ}}(z;\ka) =
\lim_{\al\to 0} \hat C_a(z_1/\al,\dots,z_n/\al;\ka)/\al^p.
\]

\end{lem}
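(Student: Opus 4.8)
The plan is to extract the statement as the leading-order term of the expansion of the reduced $p$-curvature $\hat C_a$ under the scaling $z \mapsto z/\al$, using the already-established limit relation that turns the qKZ connection into the KZ connection. Concretely, I would start from the identity $C_a(z;\ka) = (\nabla_a)^p$ and $C^{\on{KZ}}_a(z;\ka) = \bigl(\nabla_a^{\on{KZ}}\bigr)^p$, and track how $\nabla_a$ rescales. From Section~\ref{eqn limit sec}, writing $w_a = \al z_a$ (equivalently working with $g(w;\al)=f(w/\al)$), the connection operator $\nabla_a = \Tt_a^{-1}K_a(z;\ka)$ satisfies, after the substitution and with respect to the $w$-variables,
\[
\nabla_a^{(w)} = 1 - \al\,\nabla_a^{\on{KZ},(w)} + \mc O\big(\al^2\big),
\]
since $\Tt_a$ shifts $w_a$ by $\al\ka$ and expands as $1 - \al\ka\,\der_{w_a} + \mc O(\al^2)$, while $K_a(w/\al;\ka) = 1 - \al H_a(w) + \mc O(\al^2)$ by the computation displayed just before Lemma~\ref{lem C li C}.

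The key step is then a purely formal manipulation of the $p$-th power. Because the operators $\Tt_a^{-1}$ and $K_a$ appearing in successive factors of $C_a$ do not commute, one cannot simply write $C_a = (1-\al\nabla_a^{\on{KZ}})^p$ naively; instead I would expand $C_a(z;\ka)$ as the ordered product of the $p$ factors $K_a(z_1,\dots,z_a - m\ka,\dots,z_n;\ka)$, $m = p-1,\dots,0$, rescale each argument by $1/\al$, and collect powers of $\al$. The constant term is $1$ (each factor tends to the identity), which is exactly why passing to the \emph{reduced} $p$-curvature $\hat C_a = C_a - 1$ is the right normalization. The term linear in $\al$ is a telescoping sum of Gaudin Hamiltonians evaluated at shifted points, all of which collapse in the limit; in fact the cleanest route is to observe that $C_a = (\nabla_a)^p$ as an operator identity, that under $w=\al z$ we have $\nabla_a = 1 - \al\,\nabla_a^{\on{KZ}} + \mc O(\al^2)$ as an operator in the $w$-variables, and hence
\[
C_a = \big(1 - \al\,\nabla_a^{\on{KZ}} + \mc O(\al^2)\big)^p
= 1 + (-\al)^p\big(\nabla_a^{\on{KZ}}\big)^p + (\text{intermediate terms}),
\]
where the intermediate terms of order $\al^1,\dots,\al^{p-1}$ all vanish. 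This vanishing is the one substantive point: it is precisely the characteristic-$p$ phenomenon that $(1-X)^p = 1 - X^p$ modulo $p$ when $X$ is a derivation (or, more generally, that the $p$-curvature is the honest $p$-th power of the connection and all lower commutator corrections die in characteristic $p$). So the intermediate $\al^j$-coefficients, $1\le j\le p-1$, are $\binom{p}{j}$ times iterated (anti)commutator expressions, and $\binom{p}{j}\equiv 0\pmod p$.

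The main obstacle I expect is making the informal expansion $\nabla_a = 1 - \al\nabla_a^{\on{KZ}} + \mc O(\al^2)$ rigorous at the level needed, i.e., controlling that the $\mc O(\al^2)$ remainder, when raised to the $p$-th power and mixed with the linear term, contributes only to orders $\ge \al^{p+1}$ after the $\al^1,\dots,\al^{p-1}$ terms are shown to vanish, and that the $\al^p$-coefficient is exactly $(-1)^p\bigl(\nabla_a^{\on{KZ}}\bigr)^p = C^{\on{KZ}}_a$ with no contamination. One clean way to sidestep bookkeeping is to work formally over $\K[[\al]]$: regard $\nabla_a$ as an operator-valued power series in $\al$ with $\nabla_a|_{\al=0} = 1$, write $\nabla_a = 1 - \al M_a$ where $M_a = \nabla_a^{\on{KZ}} + \mc O(\al)$, use that in characteristic $p$ the $p$-th power map on the (noncommutative) algebra satisfies $(1-\al M_a)^p = 1 - \al^p (\text{something})$ with that something $\equiv M_a^p \pmod{\al}$ by the restricted-Lie-algebra / Jacobson formula (the $s_i$-terms in $(A+B)^p = A^p + B^p + \sum s_i(A,B)$ are Lie polynomials, hence give higher order in $\al$ once one factor is scalar), divide by $\al^p$, and set $\al\to 0$. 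Then $\lim_{\al\to0}\hat C_a(z/\al;\ka)/\al^p = \lim_{\al\to0} M_a^p|_{\text{leading}} = \bigl(\nabla_a^{\on{KZ}}\bigr)^p = C^{\on{KZ}}_a(z;\ka)$, as claimed. The only care needed is that $M_a$ and the remainder are rational in $z$ with denominators that do not interfere with the limit, which follows from the explicit denominators $D_a$ and $D_a^{\on{KZ}}$ recorded above.
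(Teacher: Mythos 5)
Your overall strategy is the same as the paper's: expand the connection operator in the rescaled variables to first order in $\al$, use the characteristic-$p$ fact that all intermediate binomial terms vanish, and check that the $\mc O\bigl(\al^2\bigr)$ remainder only contributes at order $\al^{p+1}$. (Incidentally, the Jacobson-formula digression is unnecessary: since $1$ is central, the ordinary binomial theorem gives $(\nabla_{a,\al})^p-1=(\nabla_{a,\al}-1)^p$ directly, which is exactly the identity the paper uses.)

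The one genuine flaw is a sign error in the expansion, and as written your argument proves the lemma with the sign reversed. The connection operator is $\nabla_a=\Tt_a^{-1}K_a(z;\ka)$, and in the rescaled variables $\Tt_a^{-1}$ shifts $w_a$ by $+\al\ka$, so the correct expansion is $\nabla_{a,\al}=\bigl(1+\al\ka\der_{w_a}\bigr)\bigl(1-\al H_a(w)\bigr)+\mc O\bigl(\al^2\bigr)=1+\al\,\nabla_a^{\on{KZ}}+\mc O\bigl(\al^2\bigr)$, with a plus sign --- this is what the paper computes. Your claimed $\nabla_a=1-\al\,\nabla_a^{\on{KZ}}+\mc O\bigl(\al^2\bigr)$ is inconsistent even with the two factors you wrote down, since $\bigl(1-\al\ka\der_{w_a}\bigr)\bigl(1-\al H_a\bigr)=1-\al\bigl(\ka\der_{w_a}+H_a\bigr)+\mc O\bigl(\al^2\bigr)$ while $\nabla_a^{\on{KZ}}=\ka\der_{w_a}-H_a$; and if it were true it would yield $\hat C_a(z_1/\al,\dots,z_n/\al;\ka)=(-1)^p\al^p\bigl(\nabla_a^{\on{KZ}}\bigr)^p+\mc O\bigl(\al^{p+1}\bigr)$, i.e., the limit $-C_a^{\on{KZ}}(z;\ka)$ for odd $p$; your last line silently drops this $(-1)^p$. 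With the plus sign restored, the rest of your argument goes through and coincides with the paper's proof.
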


\begin{proof} Let $C_{a,\al}$ be the $p$-curvature operator of equation \eqref{K)}.
Clearly,
\[
C_{a,\al}(w;\ka) = C_{a}(w_1/\al,\dots, w_n/\al;\ka).
\]
Hence we need to prove that
\begin{gather}
\label{CiC}
C_a^{\on{KZ}}(w;\ka)
= \lim_{\al\to 0}
(C_{a,\al}(w;\ka) -1)/\al^p.
\end{gather}
Let $\nabla_{a,\al}$ be the connection operator of the discrete connection \eqref{K)},
\[
(\nabla_{a,\al}g)(w) = K_a(w_1/\al, \dots, (w_a+\al\ka)/\al, \dots, w_n/\al;\ka) g(w_1, \dots, w_a+\al\ka, \dots, w_n).
\]
Then
\[
(\nabla_{a,\al}g)(w) = g - \al H_a(w) g + \al\ka \frac{\der g}{\der w_a} + \mc O\big(\al^2\big).
\]
Hence
\[
(\nabla_{a,\al} -1)g = \al \nabla^{\on{KZ}}_a g +\mc O\big(\al^2\big).
\]
Thus
\[
C_{a,\al}(w;\ka) -1 = (\nabla_{a,\al})^p-1 = (\nabla_{a,\al}-1)^p = \al^p \big(\nabla^{\on{KZ}}_a\big)^p +\mc O\big(\al^{p+1}\big).
\tag*{\qed}
\]\renewcommand{\qed}{}
\end{proof}

\begin{cor}
\label{cor deg C}
The degree of the reduced $p$-curvature operator $\tilde C_a(z;\ka)$ as a polynomial in $z$ is not greater than $(n-2)p$.
Moreover, we have
$\tilde C_a(z;\ka) = \tilde C_a^{\on{KZ}}(z;\ka) + f_1(z)$,
where $f_1(z)$ is a~polynomial in $z$ of degree less than $(n-2)p$.

\end{cor}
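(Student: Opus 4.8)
The plan is to compare the normalized $p$-curvature operator $\tilde C_a(z;\ka)$ with $\tilde C^{\on{KZ}}_a(z;\ka)$ through the rescaling $z\mapsto z/\al$ of Section~\ref{eqn limit sec}, using that the two normalizing denominators have the same top homogeneous part. Indeed, $D_a(z;\ka)$ is a polynomial of degree $(n-1)p$ whose degree-$(n-1)p$ component is $\prod_{j\ne a}\bigl(z_a^p+(-z_j)^p\bigr)=\prod_{j\ne a}(z_a^p-z_j^p)=D^{\on{KZ}}_a(z;\ka)$, since $(-z_j)^p=-z_j^p$ in characteristic $p$. Since $\tilde C_a(z;\ka)$ is a polynomial in $z$ of degree $\le(n-1)p$, write $\tilde C_a(z;\ka)=\sum_{j=0}^{(n-1)p}P_j(z)$, where $P_j$ is the $\End$-valued homogeneous component of degree $j$; the two assertions of the corollary amount to $P_j=0$ for $j>(n-2)p$ and $P_{(n-2)p}=\tilde C^{\on{KZ}}_a(\,\cdot\,;\ka)$.

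Substituting $z=w/\al$ and multiplying by $\al^{(n-1)p}$ gives
\[
\al^{(n-1)p}\,\tilde C_a(w/\al;\ka)=\sum_{k=0}^{(n-1)p}\al^{k}\,P_{(n-1)p-k}(w),
\]
a polynomial in $\al$ whose coefficient of $\al^{k}$ is $P_{(n-1)p-k}(w)$. On the other hand, since $\tilde C_a=D_a\hat C_a$, the same expression equals $\bigl(\al^{(n-1)p}D_a(w/\al;\ka)\bigr)\hat C_a(w/\al;\ka)$, where $\al^{(n-1)p}D_a(w/\al;\ka)$ is a polynomial in $\al$ with constant term $D^{\on{KZ}}_a(w;\ka)$, and by Lemma~\ref{lem C li C} one has $\hat C_a(w/\al;\ka)=\al^p C^{\on{KZ}}_a(w;\ka)+O(\al^{p+1})$. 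Hence the product vanishes to order $\al^p$ at $\al=0$, with coefficient of $\al^p$ equal to $D^{\on{KZ}}_a(w;\ka)\,C^{\on{KZ}}_a(w;\ka)=\tilde C^{\on{KZ}}_a(w;\ka)$. Comparing coefficients of $\al^0,\dots,\al^{p-1}$ forces $P_{(n-1)p}=\dots=P_{(n-2)p+1}=0$, so $\deg\tilde C_a(z;\ka)\le(n-2)p$; comparing the coefficient of $\al^p$ gives $P_{(n-2)p}(z)=\tilde C^{\on{KZ}}_a(z;\ka)$. Thus $\tilde C_a(z;\ka)=\tilde C^{\on{KZ}}_a(z;\ka)+f_1(z)$ with $f_1(z):=\sum_{j<(n-2)p}P_j(z)$ of degree $<(n-2)p$.

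The one point requiring care is the expansion $\hat C_a(w/\al;\ka)=\al^p C^{\on{KZ}}_a(w;\ka)+O(\al^{p+1})$, since Lemma~\ref{lem C li C} a priori only provides $\hat C_a(w/\al;\ka)/\al^p\to C^{\on{KZ}}_a(w;\ka)$. To upgrade this, I would observe that for generic $w$ the function $\hat C_a(w/\al;\ka)$ is a rational function of $\al$ regular at $\al=0$ — because $\al^{(n-1)p}D_a(w/\al;\ka)$ has nonzero constant term $D^{\on{KZ}}_a(w;\ka)$ — so it has a genuine Taylor expansion at $\al=0$, whose order of vanishing and $p$-th coefficient are then pinned down by the limit; the resulting coefficient identities, being identities of rational functions of $w$ valid for generic $w$, hold identically. (Alternatively, the proof of Lemma~\ref{lem C li C} already displays this expansion via $C_{a,\al}-1=(\nabla_{a,\al}-1)^p=\al^p(\nabla^{\on{KZ}}_a)^p+O(\al^{p+1})$.) Everything else is bookkeeping of homogeneous degrees.
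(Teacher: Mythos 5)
Your proof is correct and follows essentially the same route as the paper: the paper also deduces the degree bound from the existence of the limit in Lemma~\ref{lem C li C} (the denominator $D_a$ having degree $(n-1)p$ with top part $D_a^{\on{KZ}}$) and regards the identification of the top-degree part with $\tilde C_a^{\on{KZ}}$ as immediate. Your version merely spells out the coefficient comparison in $\al$ that the paper leaves as ``clear,'' including the correct observation that regularity of $\hat C_a(w/\al;\ka)$ at $\al=0$ makes the limit statement equivalent to the needed expansion.
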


\begin{proof}
The first statement of the corollary
follows from the fact that the denominator of the reduced $p$-curvature operator $\hat C_a(z;\ka)$ is of degree $(n-1)p$.
Hence the numerator is of degree not greater than $(n-2)p$ for the limit in \eqref{CiC} to exist.
The second statement is clear.
\end{proof}

\subsection{Limit of solutions}

Let $f(z_1,\dots, z_n)$ be a polynomial solution of the qKZ equations \eqref{Ki}. Let $\deg f(z) = d$ and~${
f(z) = f_0(z) + f_1(z)}$,
where $ f_0(z)$ is a homogeneous polynomial of degree $d$ and $f_1(z)$ is a polynomial of degree less than $d$.

\begin{lem}
\label{thm thp}
The polynomial $f_0(z)$ is a solution of the KZ equations \eqref{KZ}.
\end{lem}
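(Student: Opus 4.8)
The plan is to extract the leading-order behavior of the qKZ equations under the scaling $z_i \mapsto z_i/\al$ and take $\al \to 0$, exactly as was done for the $p$-curvature in Lemma~\ref{lem C li C}. First I would set $g(w;\al) := f(w/\al)$, so that $g$ is a polynomial in $w$ whose coefficients are Laurent polynomials in $\al$; more precisely, since $\deg f = d$, the top-degree part contributes $\al^{-d} f_0(w)$ and the lower-degree terms contribute strictly higher powers of $\al$, so $\al^d g(w;\al) = f_0(w) + \mc O(\al)$ as $\al \to 0$. By \eqref{K)}, $g$ satisfies
\[
g(w_1,\dots,w_a-\al\ka,\dots,w_n;\al) = K_a(w/\al;\ka)\,g(w;\al),\qquad a=1,\dots,n,
\]
and the excerpt already records the expansion $K_a(w/\al;\ka) = 1 - \al H_a(w) + \mc O(\al^2)$ together with the Taylor expansion of the left-hand side, giving
\[
g - \al\ka\frac{\der g}{\der w_a} + \mc O(\al^2) = \big(1 - \al H_a(w) + \mc O(\al^2)\big) g .
\]

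Next I would multiply this identity through by $\al^{d}$ and substitute $\al^{d} g(w;\al) = f_0(w) + \mc O(\al)$. The order-$\al^{0}$ terms cancel (they just say $f_0 = f_0$), and collecting the coefficient of $\al^{1}$ yields
\[
-\ka\frac{\der f_0}{\der w_a}(w) = -H_a(w)\,f_0(w),\qquad a=1,\dots,n,
\]
which is precisely the KZ system \eqref{KZ} for $f_0$ (rewritten in the variables $w$, i.e.\ in $z$). One subtlety worth spelling out: differentiation in $w_a$ commutes with multiplication by $\al^{d}$ and with the limit because $g$ is a polynomial in $w$ of bounded degree with coefficients rational in $\al$, so termwise differentiation is legitimate; likewise $H_a(w)$ is a fixed ($\al$-independent) rational operator, so pulling the limit past it is immediate.

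The only real point requiring care — and the step I expect to be the main obstacle — is the bookkeeping of the powers of $\al$: one must check that no term of $f_1$ (degree $<d$) bleeds into the $\al^{1}$-coefficient after the normalization, and that the $\mc O(\al^2)$ remainders in the expansions of $K_a$ and of the shift operator genuinely stay $\mc O(\al^2)$ after multiplication by $\al^{d}$, i.e.\ do not pick up negative powers of $\al$ from the lower-order part of $g$. Since $\al^{d} g = f_0 + \al\,(\text{polynomial}) + \cdots$ is regular at $\al = 0$, multiplying a regular-at-$0$ quantity by $\mc O(\al^2)$ still gives $\mc O(\al^2)$, so this is routine but should be stated. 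Once this is in place, equating the $\al^{1}$-coefficients finishes the proof.
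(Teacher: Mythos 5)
Your proof is correct and takes essentially the same route as the paper: the paper also sets $g(w;\al)=\al^{d}f(w/\al)=f_0(w)+\al\,g_1(w;\al)$ and reads off the coefficient of $\al$ in the rescaled qKZ equation, merely clearing the denominators of $K_a(w/\al;\ka)$ first so that the comparison of $\al$-coefficients is a polynomial identity in $w$ and $\al$. The bookkeeping you flag does work out, though not quite as you phrase it: the degree-$(d-1)$ part of $f_1$ does enter the $\al^{1}$-coefficient through $g_1(w;0)$, but it does so identically on both sides and cancels, leaving exactly $\ka\,\der f_0/\der w_a = H_a(w)f_0$.
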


\begin{proof}
Let $\al$ be a formal parameter. Define
$g(w_1,\dots,w_n;\al) = \al^df(w_1/\al, \dots, w_n/\al)$.
Then
$
g(w_1,\dots,w_n;\al) = f_0(w) + \al g_1(w;\al)$,
where $g_1(w;\al) = \al^{d-1}f_1(w/\al)$ is a polynomial in $w$ and~$\al$.
We have
\[
g(w_1,\dots, w_a-\al\ka,\dots, w_n;\al) =
K_a(w/\al ;\ka) g(w;\al),\qquad a=1,\dots,n.
\]
Let
\[
d_a(w;\al) =
\prod_{j=1}^{a-1}(w_a-w_j -\al\ka-\al)
\prod_{j=a+1}^n(w_a-w_j -\al)
\]
be the denominator of $K_a(w/\al;\ka) $ and
\[
n_a(w;\al)
=
 \big(w_a-w_{a-1}-\al\ka - \al P^{(a,a-1)}\big) \cdots \big(w_a-w_{a+1}- \al P^{(a,a+1)}\big)
\]
the numerator.
Then \eqref{K)} can be written as a polynomial equation
\begin{gather}
\label{Dg}
d_a(w;\al) g(w_1,\dots, w_a-\al\ka,\dots, w_n;\al) =
n_a(w;\al) g(w;\al)
\end{gather}
in the variables $w$ and $\al$. Equation \eqref{Dg} gives an equation in $w$ for every fixed power of~$\al$ in~\eqref{Dg}.
The equation corresponding to the
first power of $\al$ after division by
$d_a(w;0)$ becomes
\[
\ka \frac{\der f_0}{\der w_a}(w) = \sum_{j\ne a} \frac{P^{(a,j)}-1}{w_a-w_j} f_0(w) .\tag*{\qed}
\]\renewcommand{\qed}{}
\end{proof}

\subsection{Dual qKZ equations}

Denote $W=L^{\ox n}$. Let $W^*$ be the dual space of $W$, and $\langle, \rangle \colon W^*\ox W\to\K$ the canonical pairing.
Let $K_a^*(z;\ka) \colon W^* \to W^*$ be the operators dual to the operators $K_a(z;\ka)$. Denote
$\tilde K_a(z;\ka) = (K_a^*(z;\ka))^{-1}$.

We have
\[
\tilde K_a(z_1,\dots, z_b-\ka,\dots, z_n;\ka) \tilde K_b(z;\ka)
= \tilde K_b(z_1,\dots, z_a-\ka,\dots, z_n;\ka) \tilde K_a(z;\ka)
\]
for all $a$, $b$, and also
\begin{gather}
\label{per}
\langle x, y\rangle = \big\langle \tilde K_a(z;\ka) x, K_a(z;\ka) y\big\rangle
\end{gather}
for all $a$ and $x\in W^*, y\in W$.

The system of difference equations with step $\ka$,
\begin{equation}
\label{Kid}
\tilde s(z_1,\dots, z_a-\ka,\dots, z_n) =
\tilde K_a(z ;\ka) \tilde s(z),\qquad a=1,\dots,n,
\end{equation}
for an $W^*$-valued
function $\tilde s(z)$, is called the dual qKZ equations.

If $s(z)$ is a solution of the qKZ equations \eqref{Ki} and
$\tilde s(z)$ is a solution of the dual qKZ equations~\eqref{Kid}, then
the function $\langle \tilde s(z), s(z) \rangle$ is $\ka$-periodic with respect to
every $z_a$,
 \[
-
\langle \tilde s(z_1,\dots,z_a-\ka, \dots, z_n),
s(z_1,\dots,z_a-\ka, \dots, z_n, )\rangle
=
\langle \tilde s(z),
s(z)\rangle .
\]

The set of vectors $\{v_I\mid I \subset\{1,\dots,n\}\}$ is a basis of $W$. Define
the nondegenerate symmetric bilinear form $S$ on $W$ by the formula
$S(v_I,v_J) = \delta_{I,J} $.
The form is called the (tensor) Shapovalov form. We identify $W^*$ and $W$ with the help
of the Shapovalov form.

Under this identification, the operators $R^{(i,j)}(u)$ become symmetric. Using \eqref{unit}
we obtain the following formula
for the operators $\tilde K_a(z;\ka)$ as operators on $W$,
\begin{align*}
&\tilde K_a(z;\ka) =
R^{(a,a-1)}(-z_a+z_{a-1}+\ka) \cdots R^{(a,1)}(-z_a+z_1+\ka)
\\
&\hphantom{\tilde K_a(z;\ka) =}{} \times
R^{(a,n)}(-z_a+z_n) \cdots R^{(a,a+1)}(-z_a\<+z_{a+1}) .
\end{align*}
In other words,
\begin{gather*}
\tilde K_a(z_1,\dots,z_n;\ka)
=
 K_a(-z_1,\dots,-z_n;-\ka),\nonumber
\\
S(K_a(-z_1,\dots,-z_n;-\ka) x, K_a(z_1,\dots,z_n;\ka)y)
= S(x,y)\label{pm}
\end{gather*}
for all $x,y\in W$.

Now the dual qKZ equations for a $W$-valued function $\tilde s(z)$ take the form
\[
\tilde s(z_1,\dots, z_a-\ka,\dots, z_n) =
K_a(-z_1,\dots,-z_n;-\ka) s(z),\qquad a=1,\dots,n.
\]

These formulas prove the following lemma.

\begin{lem}\label{lem inv}
Let $s(z)$ be a solution of the qKZ equations \eqref{Ki} with step $\ka$ and
$\tilde s(z)$ a solution of the qKZ equations \eqref{Ki} with step $-\ka$. Then
the function $S(\tilde s(-z),s(z))$ is $\ka$-periodic with respect to every variable $z_a$,
\[
S(\tilde s(-z_1,\dots, - (z_a-\ka), \dots, -z_n),
s(z_1,\dots,z_a-\ka, \dots, z_n ))
=
S(\tilde s(-z),
s(z)) .
\]
\end{lem}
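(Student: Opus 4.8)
The plan is to deduce the lemma from the invariance identity
\[
S\big(K_a(-z_1,\dots,-z_n;-\ka)\,x,\; K_a(z_1,\dots,z_n;\ka)\,y\big) = S(x,y), \qquad x,y\in W,
\]
established just above, after recasting the hypothesis on $\tilde s$ in a convenient form. Concretely, set $u(z) := \tilde s(-z_1,\dots,-z_n)$. The first step is to show that $u$ satisfies the $W$-valued dual qKZ system
\[
u(z_1,\dots,z_a-\ka,\dots,z_n) = K_a(-z_1,\dots,-z_n;-\ka)\,u(z), \qquad a=1,\dots,n.
\]

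To verify this, note that by hypothesis $\tilde s$ is a solution of \eqref{Ki} with $\ka$ replaced by $-\ka$, i.e.\ $\tilde s(z_1,\dots,z_a+\ka,\dots,z_n) = K_a(z;-\ka)\,\tilde s(z)$. Evaluating this identity at the point $(-z_1,\dots,-z_n)$ gives $\tilde s(-z_1,\dots,-z_a+\ka,\dots,-z_n) = K_a(-z;-\ka)\,\tilde s(-z)$; the left-hand side is exactly $u$ evaluated at $(z_1,\dots,z_a-\ka,\dots,z_n)$ and the right-hand side is $K_a(-z;-\ka)\,u(z)$, which is the displayed system. The only point that requires care is the bookkeeping of the two sign changes: shifting the $a$-th argument of $u$ by $-\ka$ corresponds to shifting the $a$-th argument of $\tilde s$ by $+\ka$, and this is consistent with $\tilde s$ solving the qKZ system of step $-\ka$ rather than $+\ka$. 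This is really the only place where a slip is possible.

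Finally, I would apply $S$ to the pair $\big(u(z_1,\dots,z_a-\ka,\dots,z_n),\, s(z_1,\dots,z_a-\ka,\dots,z_n)\big)$, use the displayed system for $u$ and the qKZ equations \eqref{Ki} for $s$, and then the invariance identity above:
\begin{align*}
S\big(u(z_1,\dots,z_a-\ka,\dots,z_n),\, s(z_1,\dots,z_a-\ka,\dots,z_n)\big)
&= S\big(K_a(-z;-\ka)\,u(z),\; K_a(z;\ka)\,s(z)\big)\\
&= S\big(u(z),\, s(z)\big).
\end{align*}
Since $u(z) = \tilde s(-z)$ and $u(z_1,\dots,z_a-\ka,\dots,z_n) = \tilde s(-z_1,\dots,-(z_a-\ka),\dots,-z_n)$, this is precisely the asserted $\ka$-periodicity of $S(\tilde s(-z),s(z))$ in the variable $z_a$, and $a$ was arbitrary. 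There is no substantive obstacle here: all the content is in the invariance identity, which has already been proved, together with the elementary observation that precomposition with $z\mapsto -z$ converts a step-$(-\ka)$ solution into a solution of the $W$-valued dual system; this lemma is simply the Shapovalov-form translation of the $\langle\cdot,\cdot\rangle$-periodicity noted earlier in \eqref{per}.
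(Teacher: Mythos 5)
Your proposal is correct and follows essentially the same route as the paper: the paper establishes the invariance identity $S(K_a(-z;-\ka)x,\,K_a(z;\ka)y)=S(x,y)$ and observes that under $z\mapsto -z$ a step-$(-\ka)$ solution becomes a flat section of the dual qKZ system with operators $K_a(-z;-\ka)$, then simply notes that "these formulas prove the lemma." You have merely written out explicitly the sign bookkeeping that the paper leaves implicit, and your bookkeeping is accurate.
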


See also Corollary \ref{cor ort}.

\section{Pochhammer polynomials}\label{sec PP}

For $\ka\in\K$ and $m$ a positive integer, define Pochhammer polynomial $(t;\ka)_m\in \K[t]$
by the formula~${
(t;\ka)_m=\prod_{i=1}^{m}(t-(i-1)\ka)}$.\footnote{See \url{https://en.wikipedia.org/wiki/Falling_and_rising_factorials}.}
We have
\begin{gather}
(t-\ka;\ka)_m = (t;\ka)_m \frac{t-\ka m}t,
\qquad
(t+\ka ;\ka)_m =(t;\ka)_m \frac{t+\ka }{t-(m-1)\ka }.\nonumber
\\
(t+z;\ka)_m
=
\sum_{i=0}^m {m\choose i} (t;\ka)_i(z;\ka)_{m-i},\nonumber
\\
(t;\ka)_i(t;\ka)_j
=
 \sum_{l=0}^{\min(i,j)}
\binom{i}{l} \binom{j}{l} l! \ka^l (t;\ka)_{i+j-l},\label{prod}
\\
(t;\ka)_m
=
\sum _{l=0}^{m}s_1(m,l) \ka^{m-l} t^{l},\nonumber
\\
t^m
=
\sum _{l=0}^{m}s_2(m,l) \ka^{m-l} (t;\ka)_l,\label{PS2}
\end{gather}
where the integers $s_1(m,l)$ and $s_2(m,l)$ are Stirling numbers of the first and second kind,
 respectively.
 Notice that
\begin{gather} \label{s=1}
 s_1(m,m)=s_2(m,m) =1.
 \end{gather}
We also have $ (t;\ka)_p = t^p-\ka^{p-1}t$,
\[
(t+z;\ka)_p = (t+z)^p-\ka^{p-1}(t+z) =t^p-\ka^{p-1}t+z^p-\ka^{p-1}z = (t;\ka)_p+(z;\ka)_p .
\]

We call a polynomial $f(t)\in\F_p[t]$ a quasi-constant
if $f(t-\kappa) = f(t)$. The quasi-constants are polynomials in $t^p-\ka^{p-1}t$.
A Pochhammer polynomial $(t;\ka)_m$ is a quasi-constant if $m$ is divisible by
$p$. Then $(t;\ka)_{pa} = \big(t^p-\ka^{p-1}t\big)^a$.

Let $A$ be a $\K$-algebra, for example, $A=\K[z_1,\dots,z_n]$.
The polynomials $\{(t;\ka)_m\mid m\geq 0\}$ form an $A$-basis of the ring of polynomials $A[t]$.

\section[p-hypergeometric solutions for k in F\_p\^times subset K]{$\boldsymbol{ p}$-hypergeometric solutions for $\boldsymbol{\ka\in\F_p^\times \subset \K}$}

\subsection[Solutions in Sing L\^{}\{ox n\}n-2]{Solutions in $\boldsymbol{\Sing L^{\ox n}[n-2]}$}

 In this paper, we study solutions of the qKZ equations with values in
 $V:=\Sing L^{\ox n}[n-2]$.
 The space $ L^{\ox n}[n-2]$
 has a basis
$
v^{(i)} = v_1\ox \dots\ox v_2\ox\dots \ox v_1$, $
i=1,\dots,n$,
where the only $v_2$ stays at the $i$-th place. In this basis, the subspace
$V$ consists of all vectors with the sum of coordinates equal to zero.
We identify $L^{\ox n}[n-2]$ with $\K^{n}$ and the subspace $V$ with
the vector space
\begin{gather}\label{s0}
\{ x\in \K^{n}\mid x_1+\dots + x_n=0\}.
\end{gather}

\subsection{Master polynomial and weight functions}

For $\ka\in\F_p^\times$, let $0<k<p$ be the positive integer such that
\begin{gather}\label{C}
\ka k \equiv -1 \pmod{p}.
\end{gather}
Define master polynomial
$\Phi(t,z;\ka) = \prod_{a=1}^n (t-z_a;\ka)_k$,
where $k$ is defined in~\eqref{C}.
For $1\leq a\leq n$, define the weight functions
\[
\eta_a(t,z)
=
\frac {1}{t-z_{a}} \prod_{j=1}^{a-1}\frac {t-z_{j}+1}{t-z_{j}}
,
\qquad
Q_a(t,z;\ka)
=
\Phi(t,z;\ka) \eta_a (t,z).
\]
Notice that the formula for the function $\eta_a(t,z)$ does not depend on $p$ and $\ka$.

\begin{lem}
The function $Q_a (t,z;\ka)$
 is a product of $n$ Pochhammer polynomials:
\[
 Q_a (t,z;\ka) =
 \left(\prod_{j=1}^{a-1} (t-z_j - \ka;\ka)_k\right)
 (t-z_a - \ka;\ka)_{k-1}
 \left(\prod_{j=a+1}^{n} (t-z_j;\ka)_k\right).
 \]
 \end{lem}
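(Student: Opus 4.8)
The plan is to expand each factor in the definition of $Q_a(t,z;\ka) = \Phi(t,z;\ka)\,\eta_a(t,z)$ and recognize the result as a product of shifted Pochhammer polynomials. The key computational input is the shift identity $(t-\ka;\ka)_m = (t;\ka)_m\,\frac{t-\ka m}{t}$ recorded earlier in the excerpt; applied with $t$ replaced by $t-z_j$ it gives $(t-z_j-\ka;\ka)_m = (t-z_j;\ka)_m\,\frac{t-z_j-\ka m}{t-z_j}$.

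First I would write $\Phi(t,z;\ka)\,\eta_a(t,z)$ as
\[
\left(\prod_{j=1}^{n}(t-z_j;\ka)_k\right)\cdot\frac{1}{t-z_a}\prod_{j=1}^{a-1}\frac{t-z_j+1}{t-z_j}.
\]
The idea is to distribute the rational correction factors over the matching Pochhammer factors. For each $j<a$, I would combine $(t-z_j;\ka)_k$ with $\frac{t-z_j+1}{t-z_j}$: since $\ka k\equiv-1$, i.e.\ $\ka k = -1$ in $\F_p$ and hence $t-z_j-\ka k = t-z_j+1$, the shift identity gives exactly $(t-z_j;\ka)_k\,\frac{t-z_j-\ka k}{t-z_j} = (t-z_j-\ka;\ka)_k$. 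This handles the first $a-1$ factors. For the index $j=a$, I would combine $(t-z_a;\ka)_k$ with the lone $\frac{1}{t-z_a}$: peeling off the first factor of the Pochhammer polynomial, $(t-z_a;\ka)_k = (t-z_a)\,(t-z_a-\ka;\ka)_{k-1}$ directly from the product definition $(s;\ka)_m = \prod_{i=1}^m(s-(i-1)\ka)$, so the $t-z_a$ cancels and we are left with $(t-z_a-\ka;\ka)_{k-1}$. For $j>a$ there is no correction factor, so those terms stay as $(t-z_j;\ka)_k$. Assembling the three groups yields precisely the claimed formula.

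There is no real obstacle here; the only point requiring a moment's care is the use of $\ka k\equiv -1\pmod p$ to turn $\frac{t-z_j+1}{t-z_j}$ into the shift ratio $\frac{t-z_j-\ka k}{t-z_j}$ — this is where the specific choice of $k$ in \eqref{C} enters, and it is exactly what makes the master polynomial's factors interact cleanly with the weight function. Everything else is a direct manipulation of the Pochhammer definition and the first shift identity in \eqref{prod}'s preamble, so I would present the three cases ($j<a$, $j=a$, $j>a$) in sequence and conclude.
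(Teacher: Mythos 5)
Your argument is correct and is exactly the straightforward verification the paper has in mind (the paper states this lemma without proof, calling such checks direct): absorbing $\frac{t-z_j+1}{t-z_j}=\frac{t-z_j-\ka k}{t-z_j}$ into $(t-z_j;\ka)_k$ for $j<a$ via the shift identity and the relation $\ka k=-1$ in $\K$, and cancelling $t-z_a$ against the first factor of $(t-z_a;\ka)_k$ for $j=a$. No gaps; nothing further is needed.
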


Define a vector of polynomials
\begin{gather}\label{U}
Q(t,z;\ka) = (Q_1(t,z;\ka), \dots, Q_n(t,z;\ka))^\intercal =\sum_i Q^i(z;\ka) (t;\ka)_i,
\end{gather}
where $M^\intercal$ denotes the transpose matrix of a matrix $M$ and
$
 Q^i(z;\ka) = \big(Q^i_1(z;\ka), \dots, Q^i_n(z;\ka)\big)^\intercal
$
 are vectors of polynomials in $z$.

\begin{exmp} For $n=2$,
\[
Q(t,z;\ka) =\big( (t_1-z_1 -\ka;\ka)_{k-1} (t_1-z_2;\ka)_{k},
(t_1-z_1 -\ka;\ka)_{k} (t_1-z_2 -\ka;\ka)_{k-1}\big)^\intercal.
\]
\end{exmp}

\begin{thm}[{\cite[Theorem 5.1]{MV}}]\label{thm MV}
For any positive integer $\ell$, the vector $Q^{\ell p-1}(z;\ka)$
 of polynomials in $z$ is a solutions of the qKZ equations with step $\ka$
 and values in
 $V=\Sing L^{\ox n}[n-2]$,
in particular,
$
\sum_{a=1}^nQ^{\ell p-1}_a(z;\ka)=0$,
 see \eqref{s0}.
\end{thm}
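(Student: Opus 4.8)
The plan is to lift the statement to the level of the weight functions $Q_a(t,z;\ka)$ and to realise the vector $Q^{\ell p-1}(z;\ka)$ as the image of $Q(t,z;\ka)$ under a ``$p$-period'' functional. By \eqref{U} the weight-function vector expands as $Q(t,z;\ka)=\sum_i Q^i(z;\ka)\,(t;\ka)_i$ in the Pochhammer basis, so letting $\rho_\ell\colon\K[z][t]\to\K[z]$ denote extraction of the coefficient of $(t;\ka)_{\ell p-1}$, we have $Q^{\ell p-1}=\rho_\ell\circ Q$ componentwise; note $Q^{\ell p-1}$ vanishes identically for $\ell>d(\ka)$, so the claim is nontrivial only for $\ell\le d(\ka)$. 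The proof then rests on two facts. First, a \emph{pre-qKZ identity} at the level of polynomials in $t$,
\[
Q(t,\dots,z_a-\ka,\dots;\ka)-K_a(z;\ka)\,Q(t,z;\ka)=G_a(t-\ka,z;\ka)-G_a(t,z;\ka),
\]
for an explicit vector $G_a$ of polynomials in $t$ (a total $\ka$-difference in $t$). Second, the lemma that $\rho_\ell$ annihilates total $\ka$-differences: $\rho_\ell\big(P(t-\ka)-P(t)\big)=0$ for all $P\in\K[z][t]$. Applying $\rho_\ell$ to the first identity and invoking the second gives $Q^{\ell p-1}(\dots,z_a-\ka,\dots;\ka)=K_a(z;\ka)\,Q^{\ell p-1}(z;\ka)$, i.e.\ the qKZ equations.

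For the second lemma, by $\K[z]$-linearity it suffices to take $P=(t;\ka)_m$. From $(t-\ka;\ka)_m=(t;\ka)_m\,(t-\ka m)/t$ one gets $P(t-\ka)-P(t)=-\ka m\,(t-\ka;\ka)_{m-1}$; expanding $(t-\ka;\ka)_{m-1}$ in the basis $\{(t;\ka)_i\}$ via the addition formula with shift $-\ka$ and $(-\ka;\ka)_j=(-\ka)^j j!$, the coefficient of $(t;\ka)_{\ell p-1}$ equals $-\ka m\binom{m-1}{\ell p-1}(-\ka)^{m-\ell p}(m-\ell p)!$. Since the last base-$p$ digit of $\ell p-1$ equals $p-1$, Lucas' theorem forces $\binom{m-1}{\ell p-1}\equiv0\pmod p$ unless $m\equiv0\pmod p$; and when $m\equiv0\pmod p$ one has $\ka m=0$ in $\K$. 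Either way the coefficient vanishes.

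The singular-vector property $\sum_a Q^{\ell p-1}_a(z;\ka)=0$ (cf.\ \eqref{s0}) then follows from the same machinery. Using the telescoping identity $\sum_{a=1}^n\eta_a(t,z)=\prod_{j=1}^n\tfrac{t-z_j+1}{t-z_j}-1$ together with the defining congruence \eqref{C}, which gives $t-z_a+1=t-z_a-k\ka$ in $\K[t]$ and hence $(t-z_a;\ka)_k\cdot\tfrac{t-z_a+1}{t-z_a}=(t-z_a-\ka;\ka)_k$, one obtains the clean formula
\[
\sum_{a=1}^n Q_a(t,z;\ka)=\Phi(t-\ka,z;\ka)-\Phi(t,z;\ka),
\]
again a total $\ka$-difference in $t$; applying $\rho_\ell$ and the lemma yields $\sum_a Q^{\ell p-1}_a=0$.

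The technical heart is the pre-qKZ identity. Since the weight functions $\eta_a$, and hence the $R$-matrix recursion they satisfy, are independent of $p$ and $\ka$, this identity with an explicit $G_a$ assembled from the same weight functions is an identity of rational functions over $\Z[\ka]$ (over $\Z$ after clearing the denominators $\prod_j(t-z_j)$), which one imports from the characteristic-zero hypergeometric construction and reduces modulo $p$. The one genuinely positive-characteristic input is the matching of the ``boundary'' Pochhammer factors of $\Phi(t,\dots,z_a-\ka,\dots;\ka)$ against those of $\Phi(t,z;\ka)$ and $\Phi(t-\ka,z;\ka)$, where the relation $t-z_a-k\ka=t-z_a+1$ forced by \eqref{C}---the length-$k$ Pochhammer symbol ``closing up''---replaces the unitarity/telescoping used in the continuous theory. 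Verifying this matching, and that the resulting $G_a$ is polynomial in $t$ so that $\rho_\ell$ applies, is the expected obstacle; the remaining steps are formal.
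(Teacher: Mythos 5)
The paper itself contains no proof of Theorem \ref{thm MV}; it is imported from \cite[Theorem~5.1]{MV}, and your plan follows the same strategy as that reference: realize $Q^{\ell p-1}(z;\ka)$ as the coefficient of $(t;\ka)_{\ell p-1}$ in the expansion \eqref{U}, show that this coefficient functional annihilates total $\ka$-differences in $t$, and reduce the qKZ equations to an identity for $Q(t,z;\ka)$ modulo such differences. The two pieces you actually carry out are correct and complete: the Lucas-theorem computation giving $\rho_\ell\big(P(t-\ka)-P(t)\big)=0$ (using the addition formula before \eqref{prod} and $(-\ka;\ka)_j=(-\ka)^j j!$), and the singular-vector property via the telescoping $\sum_a\eta_a(t,z)=\prod_j\frac{t-z_j+1}{t-z_j}-1$ together with the closing-up relation $(t-z_a;\ka)_k\,\frac{t-z_a+1}{t-z_a}=(t-z_a-\ka;\ka)_k$ forced by \eqref{C}, which indeed yields $\sum_aQ_a(t,z;\ka)=\Phi(t-\ka,z;\ka)-\Phi(t,z;\ka)$ and hence \eqref{s0}.

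The genuine gap is exactly where you flag ``the expected obstacle'': the pre-qKZ identity
\[
Q(t,z_1,\dots,z_a-\ka,\dots,z_n;\ka)-K_a(z;\ka)\,Q(t,z;\ka)=G_a(t-\ka,z;\ka)-G_a(t,z;\ka),
\]
with an explicit $G_a$ polynomial in $t$, is asserted but never produced, and it is the actual content of the theorem — without it your argument proves only the ``in particular'' clause. It cannot simply be ``imported from the characteristic-zero hypergeometric construction and reduced modulo $p$'': in characteristic zero the master function has non-integer exponents and is not a polynomial, and the exactness statement there concerns a multivalued integrand integrated over cycles; the polynomial master function $\prod_a(t-z_a;\ka)_k$ exists only because of \eqref{C}, and the required identity — the matching of the shifted Pochhammer factors of $\Phi(t,\dots,z_a-\ka,\dots;\ka)$ against the rational factors coming from $R^{(a,j)}(z_a-z_j-\ka)$, $j<a$, and $R^{(a,j)}(z_a-z_j)$, $j>a$, in \eqref{K}, together with the polynomiality in $t$ of the resulting $G_a$ — must be written out and checked. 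That verification (identities for the weight functions $\eta_a$ under the $R$-matrix combined with the quasi-periodicity $\Phi(t-\ka,z;\ka)=\Phi(t,z;\ka)\prod_a\frac{t-z_a+1}{t-z_a}$) constitutes the bulk of the proof of Theorem~5.1 in \cite{MV}; until you supply it, what you have is a correct reduction of the theorem to that unproved identity, not a proof.
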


Let $[x]$ denote the integer part of a real number $x$.
The vector $Q^{\ell p-1}(z;\ka)$ is zero if $\ell \not\in \big\{1, \dots, \big[\frac{n{k}}p\big]\big\}$ for degree reasons.
The vectors of polynomials
$Q^{\ell p-1}(z;\ka)$, $
\ell = 1, \dots, \big[\frac{n{k}}p\big]$,
 are~called the $p$-hypergeometric solutions of
the qKZ equations with step $\ka$
 and values in $V$.
 Denote~${d(\ka) = \big[\frac{n{k}}p\big]}$.

Notice that if $\frac{n{k}}p < 1$, then there are no $p$-hypergeometric solutions.

\subsection[Step -ka]{Step $\boldsymbol{-\ka}$}

The integer $k$ satisfies the inequalities $0< k <p$ and the congruence
$\ka k \equiv -1 \pmod{p}$, see \eqref{C}.
Then the integer $ p-k$ satisfies the inequalities
$0< p-k<p$ and the congruence
$-\ka (p- k) \equiv -1 \pmod{p}$.
Hence
$
\Phi(t,z;-\ka)
= \prod_{a=1}^n (t-z_a;-\ka)_{p-k} $.
Recall that
\begin{gather*}
Q_a(t,z;-\ka)
=
\Phi(t,z;-\ka) \eta_a (t,z),
\\
Q(t,z;-\ka) = (Q_1(t,z;-\ka), \dots, Q_n(t,z;-\ka))^\intercal =\sum_i Q^i(z;-\ka) (t;-\ka)_i .
\end{gather*}

\begin{cor}
\label{cor -ka}
The vectors
\[Q^{\ell p-1}(z;-\ka), \qquad \ell = 1, \dots, \left[\frac{n(p-k)}p\right],
\]
 are solutions of the qKZ equations with step $-\ka$
 and values in $V$.
\end{cor}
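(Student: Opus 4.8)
The plan is to deduce this directly from Theorem \ref{thm MV} by applying that theorem with the step $-\ka$ in place of $\ka$. The only thing to check is that all the hypotheses of Theorem \ref{thm MV} are met for the step $-\ka$ and that the construction entering that theorem, specialized to step $-\ka$, produces exactly the vectors $Q^{\ell p-1}(z;-\ka)$ appearing in the statement.

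First I would observe that since $\ka\in\F_p^\times$ we also have $-\ka\in\F_p^\times$, so Theorem \ref{thm MV} is applicable to the qKZ equations with step $-\ka$. Next, the integer attached to the step $-\ka$ by the congruence \eqref{C}, namely the unique $k'$ with $0<k'<p$ and $(-\ka)k'\equiv -1\pmod p$, equals $p-k$, as already noted just before the statement of the corollary. Consequently the master polynomial for step $-\ka$ is $\Phi(t,z;-\ka)=\prod_{a=1}^n(t-z_a;-\ka)_{p-k}$, the weight functions $\eta_a(t,z)$ are unchanged because they depend on neither $p$ nor $\ka$, and the weight vector for step $-\ka$ is $Q(t,z;-\ka)=\sum_i Q^i(z;-\ka)(t;-\ka)_i$ with $Q_a(t,z;-\ka)=\Phi(t,z;-\ka)\eta_a(t,z)$, exactly as displayed above the corollary.

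Then I would apply Theorem \ref{thm MV} verbatim with $\ka$ replaced by $-\ka$: for every positive integer $\ell$ the vector $Q^{\ell p-1}(z;-\ka)$ solves the qKZ equations with step $-\ka$ and values in $V=\Sing L^{\ox n}[n-2]$, and in particular $\sum_{a=1}^n Q^{\ell p-1}_a(z;-\ka)=0$. Finally, the range of nonzero solutions is governed by the same degree argument as for step $\ka$: the vector $Q^{\ell p-1}(z;-\ka)$ vanishes unless $\ell\in\{1,\dots,[nk'/p]\}$, and with $k'=p-k$ this is precisely $\ell=1,\dots,[n(p-k)/p]=d(-\ka)$, which is the list in the statement.

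There is essentially no obstacle here: the corollary is a formal specialization of Theorem \ref{thm MV}, whose substance lies in \cite{MV}. The only point requiring (trivial) care is the bookkeeping that replacing $\ka$ by $-\ka$ replaces $k$ by $p-k$ throughout, in the master polynomial, in the Pochhammer factorization of $Q_a$, and in the index bound $[nk/p]$, which is exactly the content of the discussion immediately preceding the corollary.
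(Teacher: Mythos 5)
Your proof is correct and matches the paper's (implicit) argument: the corollary is indeed obtained by applying Theorem \ref{thm MV} with step $-\ka$, using that the integer attached to $-\ka$ by \eqref{C} is $p-k$, so the index range becomes $\ell=1,\dots,\big[\frac{n(p-k)}p\big]$ by the same degree reasoning.
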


If $p$ does not divide $n$, then the total number of $p$-hypergeometric solutions of the qKZ equations with values in
$V$ and steps $\ka$ and $-\ka$ equals
\[
\left[\frac{n{k}}p\right] + \left[\frac{n(p-{k})}p\right] = n-1 = \dim V.
\]

\subsection[p-hypergeometric solutions of KZ equations]{$\boldsymbol{ p}$-hypergeometric solutions of KZ equations}

In this subsection, we remind the construction in \cite{SV} of polynomial solutions modulo
 $p$ of the
 differential KZ equations with values in $V=\Sing L^{\ox n}[n-2]$.

Let $0< k <p$ be the positive integers such that $\ka k \equiv -1 \pmod{p}$.
Define master polynomial~${
\bar\Phi(t,z, \ka) = \prod_{a=1}^n (t-z_a)^k }$.
For $1\leq a\leq n$, define the weight functions
\[
\bar w_a(t,z) = \frac {1}{t-z_{a}},\qquad
\bar Q_a(t,z;\ka) = \bar\Phi(t,z;\ka) \bar w_a(t,z).
\]
Then $\bar Q_a(t,z;\ka)$ is a polynomial in $t,z$.
Define a vector of polynomials in $t$, $z$,
\[
\bar Q(t,z;\ka) = \big(\bar Q_1(t,z;\ka), \dots, \bar Q_n(t,z;\ka)\big)^\intercal =\sum_i \bar Q^i(z;\ka) t^i,
\]
where
$
 \bar Q^i(z;\ka) = \big(\bar Q^i_1(z;\ka), \dots, \bar Q^i_n(z;\ka)\big)^\intercal
$
 are vectors of polynomials in $z$.

\begin{thm}[\cite{SV}]\label{SV2 thm}
For any positive integer $\ell$, the vector
 $\bar Q^{\ell p-1}(z;\ka)$
 is a solution
 of the differential KZ equations with parameter $\ka$
 and values in $V=\Sing L^{\ox n}[n-2]$.
\end{thm}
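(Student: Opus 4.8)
The plan is to prove the identity at the level of the $t$-dependent vector $\bar Q(t,z;\ka)=\big(\bar Q_1(t,z;\ka),\dots,\bar Q_n(t,z;\ka)\big)^\intercal$, where $\bar Q_a=\bar\Phi/(t-z_a)$ and $\bar\Phi=\prod_{c=1}^n(t-z_c)^k$, and to pass to the coefficient vector $\bar Q^{\ell p-1}$ only at the last moment. First I would make the KZ system \eqref{KZ} explicit in the basis $v^{(1)},\dots,v^{(n)}$. Since $P^{(a,j)}$ transposes the $a$-th and $j$-th tensor factors, $\big(P^{(a,j)}-1\big)v^{(i)}=0$ for $i\notin\{a,j\}$, while $\big(P^{(a,j)}-1\big)v^{(a)}=v^{(j)}-v^{(a)}$ and $\big(P^{(a,j)}-1\big)v^{(j)}=v^{(a)}-v^{(j)}$. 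Hence a vector $f=\sum_b f_bv^{(b)}$ solves \eqref{KZ} if and only if
\[
\ka\,\der_{z_a}f_a=\sum_{j\ne a}\frac{f_j-f_a}{z_a-z_j},\qquad
\ka\,\der_{z_a}f_b=\frac{f_a-f_b}{z_a-z_b}\quad(b\ne a),
\]
for all $a$. Because the linear map ``coefficient of $t^m$'' on $\K(z)[t]$ is $\K(z)$-linear and commutes with each $\der_{z_a}$, it suffices to establish these relations for the rational function $\bar Q(t,z;\ka)$ up to terms that the coefficient extraction kills.

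For the off-diagonal equations there is nothing to kill: using $\der_{z_a}\bar\Phi=-k\,\bar\Phi/(t-z_a)$, partial fractions, and the congruence $\ka k\equiv-1\pmod p$ from \eqref{C}, one gets, for $b\ne a$,
\[
\ka\,\der_{z_a}\bar Q_b
=-\ka k\,\frac{\bar\Phi}{(t-z_a)(t-z_b)}
=\frac{\bar Q_a-\bar Q_b}{z_a-z_b},
\]
an identity in $\K(t,z)$; extracting the coefficient of $t^m$ gives the second family of KZ equations for $\bar Q^m$ for every $m$.

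The diagonal equation holds only up to a total $t$-derivative. Using in addition $\der_t\bar\Phi=k\,\bar\Phi\sum_c(t-z_c)^{-1}$, $\ka(1-k)=\ka+1$, and $\ka+\tfrac1k\equiv0\pmod p$ (all consequences of $\ka k\equiv-1$), a direct computation of both sides as rational functions of $t$ and $z$ yields
\[
\ka\,\der_{z_a}\bar Q_a-\sum_{j\ne a}\frac{\bar Q_j-\bar Q_a}{z_a-z_j}=-\ka\,\der_t\bar Q_a .
\]
This is the characteristic-$p$ counterpart of the classical fact that the hypergeometric integral $\oint\bar\Phi^{1/\ka}(t-z_a)^{-1}\,dt$ solves the KZ equations because the $t$-derivative of the integrand integrates to zero. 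Now extract the coefficient of $t^{\ell p-1}$: in $\der_t\bar Q_a=\sum_i i\,\bar Q^i_at^{i-1}$ that coefficient equals $\ell p\,\bar Q^{\ell p}_a=0$ in $\K$, so the diagonal equation holds for the vector $\bar Q^{\ell p-1}(z;\ka)$. Combined with the previous paragraph this shows $\bar Q^{\ell p-1}(z;\ka)$ solves \eqref{KZ}.

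It remains to see that the solution takes values in $V$, i.e.\ that $\sum_a\bar Q^{\ell p-1}_a=0$, cf.\ \eqref{s0}. But $\sum_a\bar Q_a=\bar\Phi\sum_a(t-z_a)^{-1}=\tfrac1k\der_t\bar\Phi$, whose coefficient of $t^{\ell p-1}$ is $\tfrac{\ell p}{k}$ times the coefficient of $t^{\ell p}$ in $\bar\Phi$, hence $0$. The only genuinely delicate point is the $\der_t$-exactness identity of the third paragraph; it is where the arithmetic of the exponent $k$ (with $\ka k\equiv-1$ playing the role of $k=-1/\ka$) makes all the ``wrong'' terms collapse, and it must be carried out carefully with the paper's sign conventions.
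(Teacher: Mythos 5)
Your argument is correct: I checked the key identity
\(\ka\,\der_{z_a}\bar Q_a-\sum_{j\ne a}\tfrac{\bar Q_j-\bar Q_a}{z_a-z_j}=-\ka\,\der_t\bar Q_a\)
(it follows from \(\der_{z_a}\bar\Phi=-k\bar\Phi/(t-z_a)\), \(\der_t\bar\Phi=k\bar\Phi\sum_c(t-z_c)^{-1}\) and \(\ka k\equiv-1\)), and the off-diagonal identity, the vanishing of the coefficient of \(t^{\ell p-1}\) in a \(t\)-derivative, and the singular-vector condition \(\sum_a\bar Q^{\ell p-1}_a=0\) are all as you state. Note that the paper itself gives no proof of this theorem — it is imported from \cite{SV} — and your verification is essentially the argument of that source: extracting the coefficient of \(t^{\ell p-1}\) plays the role of integration over a cycle, killing the \(\der_t\)-exact discrepancy in the diagonal equation, exactly as in the classical hypergeometric-integral proof.
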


The vector $\bar Q^{\ell p-1}(z;\ka)$ is zero if $\ell \not\in \big\{1, \dots, \big[\frac{n{k}}p\big]\big\}$ for degree reasons.
The vectors
\[
\bar Q^{\ell p-1}(z,\ka), \qquad
\ell = 1, \dots, \left[\frac{n{k}}p\right],
\]
 are called the $p$-hypergeometric solutions of
the differential KZ equations with parameter $\ka$
 and values in $V$.

Notice that $\bar Q_a(t,z;\ka)$ are homogeneous polynomials in variables $t$, $z$ of degree
$n k-1$, and~${\bar Q^{\ell p-1}(z;\ka)}$ are vectors of homogeneous polynomials in $z$ of degree
$nk - \ell p$.

\subsection[Top-degree part of p-hypergeometric solutions]{Top-degree part of $\boldsymbol{ p}$-hypergeometric solutions}

It turns out that the top-degree part of a $p$-hypergeometric solution
$Q^{\ell p-1}(z;\ka)$ of the qKZ equations is the $p$-hypergeometric solution
 $\bar Q^{\ell p-1}(z;\ka)$ of the KZ equations.

We start with an abstract lemma. Let $t, z_1,\dots,z_n, \al$ be variables.
Given a homogeneous polynomial $P(t, z_1,\dots,z_n,\al)$ of degree $d$ in the variables $t$, $z$, $\al$
and an integer $e$, $0\leq e\leq d$, we construct two polynomials
$\bar P_e(z_1,\dots,z_n)$ and $P_e(z_1,\dots,z_n,0)$ as follows.

On the one hand, we have
\begin{gather}
\label{Pta}
P(t, z_1,\dots,z_n,\al)= \sum_{d_0+\dots+d_{n+1}=d} a_{d_0,\dots,d_{n+1}} t^{d_0} z_1^{d_1}\cdots z_n^{d_n}\al^{d_{n+1}} .
\end{gather}
Then
\begin{gather}
\label{Ptae}
P(t, z_1,\dots,z_n, 0)= \sum_{d_0+\dots+d_{n}=d} a_{d_0,\dots,d_n, 0} t^{d_0} z_1^{d_1}\cdots z_n^{d_n} .
\end{gather}
Denote
\[
 \bar P_e(z_1,\dots,z_n)= \sum_{e+d_1+\dots+d_{n}=d} a_{ e,d_1, \dots,d_n, 0} z_1^{d_1}\cdots z_n^{d_n},
\]
the coefficient of $t^e$ in \eqref{Ptae}.

On the other hand, applying formula \eqref{PS2} to each $t^{d_0}$ at \eqref{Pta}, we rewrite this sum as
\begin{gather}
\label{PPe}
P(t, z_1,\dots,z_n,\al)=\sum_{d_0+\dots+d_{n+1}=d} b_{d_0,\dots,d_{n+1}} (t;\al)_{d_0} z_1^{d_1}\cdots z_n^{d_n}\al^{d_{n+1}} .
\end{gather}
Denote
\[
P_e(z_1,\dots,z_n,\al)=\sum_{e+d_1+\dots+d_{n+1}=d} b_{e,d_1,\dots,d_{n+1}} z_1^{d_1}\cdots z_n^{d_n}\al^{d_{n+1}},
\]
the coefficient of $(t;\al)_e$ in \eqref{PPe}. Then
\[
P_e(z_1,\dots,z_n,0)=\sum_{e+d_1+\dots+d_{n}=d} b_{e,d_1,\dots,d_{n},0} z_1^{d_1}\cdots z_n^{d_n} .
\]

\begin{lem}\label{lem top deg}
We have $P_e(z_1,\dots,z_n,0) = \bar P_e(z_1,\dots,z_n)$.
\end{lem}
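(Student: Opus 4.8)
The plan is to notice that both quantities in the statement are coefficient-extraction operations applied to the single polynomial $P$, carried out in two orders that visibly commute, the bridge being the trivial degeneration $(t;\al)_m\mapsto t^m$ at $\al=0$.

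First I would record the Pochhammer-basis expansion of $P$ as a genuine polynomial identity. By the basis property of $\{(t;\al)_m\mid m\ge 0\}$ over $\K[z_1,\dots,z_n,\al]$ recalled in Section~\ref{sec PP} (this is exactly the passage from \eqref{Pta} to \eqref{PPe}, obtained by applying \eqref{PS2} to each $t^{d_0}$), collecting the coefficient of each $(t;\al)_{d_0}$ gives
\[
P(t,z_1,\dots,z_n,\al) = \sum_{d_0\ge 0} P_{d_0}(z_1,\dots,z_n,\al)\,(t;\al)_{d_0},
\]
where $P_{d_0}(z_1,\dots,z_n,\al)$ is precisely the polynomial defined in the statement.

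Next I would substitute $\al=0$ into this identity. Since $(t;\al)_{d_0}=\prod_{i=1}^{d_0}\big(t-(i-1)\al\big)$, we have $(t;\al)_{d_0}\big|_{\al=0}=t^{d_0}$ (equivalently, this follows from \eqref{PS2} together with $s_2(d_0,d_0)=1$ from \eqref{s=1}, every other term carrying a positive power of $\al$). Hence the identity specializes to
\[
P(t,z_1,\dots,z_n,0) = \sum_{d_0\ge 0} P_{d_0}(z_1,\dots,z_n,0)\,t^{d_0}.
\]
Comparing the coefficient of $t^e$ on both sides: the right-hand side contributes $P_e(z_1,\dots,z_n,0)$, while by the definition \eqref{Ptae} the coefficient of $t^e$ in $P(t,z_1,\dots,z_n,0)$ is exactly $\bar P_e(z_1,\dots,z_n)$. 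This yields the claimed equality.

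I do not expect a genuine obstacle. The only points needing a moment of care are that the Pochhammer expansion must be treated as an identity in all of $t$, $z$, $\al$ simultaneously, so that the substitution $\al=0$ is legitimate --- which is guaranteed by the basis property cited above --- and that the homogeneity of $P$, although convenient for the index bookkeeping in the definitions of $\bar P_e$ and $P_e$, is not actually used anywhere in the argument.
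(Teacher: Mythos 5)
Your proof is correct and is essentially the paper's own argument in a slightly different packaging: the paper equates the $\al$-free coefficients, $a_{e,d_1,\dots,d_n,0}=b_{e,d_1,\dots,d_n,0}$, directly from $s_2(m,m)=1$ in \eqref{s=1}, while you reach the same conclusion by specializing the Pochhammer expansion \eqref{PPe} at $\al=0$ and using $(t;0)_m=t^m$, then comparing coefficients of $t^e$ with \eqref{Ptae}. Both hinge on the same degeneration of $(t;\al)_m$ to $t^m$, so nothing further is needed.
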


\begin{proof}By formula \eqref{s=1}, we have
$a_{e,d_1,\dots,d_n, 0} = b_{e,d_1,\dots,d_n, 0}
$
for any $e,d_1,\dots,d_n$. This implies the lemma.
\end{proof}

\begin{cor}[\cite{MV}]\label{cor q-kz}
Let $Q^{\ell p-1}(z;\ka)$ be a $p$-hypergeometric solution of the qKZ equations from Theorem~{\rm \ref{thm MV}}. Then
$
Q^{\ell p-1}(z;\ka) =
\bar Q^{\ell p-1}(z;\ka) +\cdots$,
where $\bar Q^{\ell p-1}(z;\ka)$ is the corresponding
$p$-hypergeometric solution of the KZ equations from Theorem~{\rm \ref{SV2 thm}}, and
the dots denote the
terms of degree less than $nk - \ell p=\deg \bar Q^{\ell p-1}(z;\ka)$.
\end{cor}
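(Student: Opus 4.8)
The plan is to reduce the statement to the abstract Lemma~\ref{lem top deg}, taking for its homogenization variable $\al$ the qKZ step itself.

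The first step is to replace the scalar step in $Q_a(t,z;\ka)$ by an indeterminate $\al$ and to observe that the resulting polynomial $P_a(t,z,\al):=Q_a(t,z;\al)$ is homogeneous of degree $d:=nk-1$ in $t,z_1,\dots,z_n,\al$. Indeed, by the factorization of $Q_a$ as a product of $n$ Pochhammer polynomials,
\[
Q_a(t,z;\al)=\Big(\prod_{j=1}^{a-1}(t-z_j-\al;\al)_k\Big)(t-z_a-\al;\al)_{k-1}\Big(\prod_{j=a+1}^{n}(t-z_j;\al)_k\Big),
\]
which is a product of $nk-1$ linear forms of the shape $t-z_j-i\al$; so it is homogeneous of degree $nk-1$, and hence in the expansion $Q_a(t,z;\al)=\sum_i Q^i_a(z;\al)(t;\al)_i$ each coefficient $Q^i_a(z;\al)$ is homogeneous of degree $nk-1-i$ in $z$ and $\al$. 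Specializing $\al$ to the scalar $\ka$, this shows that $Q^i_a(z;\ka)$ equals $Q^i_a(z;0)$ plus terms of strictly lower degree in $z$, and that $Q^i_a(z;0)$ is the coefficient of $\al^0$ in $Q^i_a(z;\al)$.

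The second step is to apply Lemma~\ref{lem top deg} to $P_a(t,z,\al)$ with $e:=\ell p-1$; note that $0\le \ell p-1\le nk-1$ precisely when $1\le \ell\le d(\ka)$, the range in which $Q^{\ell p-1}(z;\ka)$ is a nonzero $p$-hypergeometric solution. On the one hand, setting $\al=0$ in the $t$-power expansion of $P_a$ gives $P_a(t,z,0)=Q_a(t,z;0)=\prod_{j\ne a}(t-z_j)^k\,(t-z_a)^{k-1}=\bar Q_a(t,z;\ka)$, so the polynomial $\bar P_e$ of the lemma is the coefficient of $t^{\ell p-1}$ in $\bar Q_a(t,z;\ka)$, namely $\bar Q^{\ell p-1}_a(z;\ka)$. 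On the other hand, the polynomial $P_e(z,0)$ of the lemma is the coefficient of $\al^0$ in $Q^{\ell p-1}_a(z;\al)$, which by the first step is exactly the top-degree part of $Q^{\ell p-1}_a(z;\ka)$. Lemma~\ref{lem top deg} identifies these two, so $Q^{\ell p-1}_a(z;\ka)=\bar Q^{\ell p-1}_a(z;\ka)+(\text{lower degree in }z)$ for every $a=1,\dots,n$, which is the asserted vector identity, with the omitted terms of degree $<nk-\ell p=\deg\bar Q^{\ell p-1}(z;\ka)$.

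No step here presents a real difficulty: once the auxiliary variable of Lemma~\ref{lem top deg} is identified with the qKZ step, the argument is bookkeeping, the only computation being the elementary evaluation $Q_a(t,z;0)=\bar Q_a(t,z;\ka)$, in which each Pochhammer factor $(x;\al)_m$ degenerates to $x^m$ at $\al=0$. The one point worth a moment's care is the degree count $nk-1-(\ell p-1)=nk-\ell p$, which must coincide with $\deg\bar Q^{\ell p-1}(z;\ka)$ so that the identification of leading parts is not vacuous; this is guaranteed by Theorem~\ref{SV2 thm} and the degree remarks following it.
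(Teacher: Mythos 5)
Your proposal is correct and follows essentially the same route as the paper: the paper's proof likewise applies Lemma~\ref{lem top deg} to $Q(t,z;\ka)$ with $e=\ell p-1$, identifying $\bar P_{\ell p-1}$ with $\bar Q^{\ell p-1}(z;\ka)$ and $P_{\ell p-1}(z,0)$ with the top-degree part of $Q^{\ell p-1}(z;\ka)$. You merely spell out the bookkeeping (treating $\ka$ as the homogenization variable $\al$, the homogeneity of the Pochhammer coefficients, and the evaluation $Q_a(t,z;0)=\bar Q_a(t,z;\ka)$) that the paper leaves implicit.
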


\begin{proof}
The corollary is proved by application of Lemma \ref{lem top deg} to
the polynomial $Q(t,z;\ka)$ and integer $e=\ell p-1$. Then
it is easy to see that
$\bar P_{\ell p-1}(z_1,\dots,z_n) = \bar Q^{\ell p-1}(z;\ka) $ and
$P_{\ell p-1}(z_1,\dots,z_n, 0)$ is the top-degree part of $Q^{\ell p-1}(z;\ka)$.
\end{proof}

\section{Linear independence}

\subsection{Lexicographical order}
Define length-lexicographical order on monomials
$v=
z_1^{d_1}\cdots z_n^{d_n}$: $v<w$ if degree$(v)<$degree$(w)$ or if
degree$(v)=$degree$(w)$ and $v$ is smaller than $w$ in lexicographical order.

For a polynomial
\smash{$f(z)=\sum_{d_1,\dots,d_n} a_{d_1,\dots,d_n} z_1^{d_1}\cdots z_n^{d_n}$} denote by $\text{L}f(z)$ the nonzero summand
$a_{d_1,\dots,d_n} z_1^{d_1}\cdots z_n^{d_n}$ with
the lexicographically largest monomial
$z_1^{d_1}\cdots z_n^{d_n}$.
 We call $\text{L}f(z)$
 the leading term of $f(z)$.

\subsection{Leading terms}\label{sec lt}

For $\ell \in \{ 1, \dots, d(\ka)\}$,
let $r(\ell)$ be the unique non-negative integer such that
\[
r(\ell)k \leq nk-\ell p < (r(\ell) + 1)k.
\]
We have $r(1) > r(2)> \cdots$. Denote
\[
g
a=(n-r(\ell))k - \ell p,
\qquad
u_\ell = \frac{(-1)^{nk-\ell p}}{k} \binom{k}{a}
(0, \dots,0, k-a, k, \dots, k)^\intercal \in \F_p^n,
\]
where $k-a$ is the $r(\ell)+1$-st coordinate. Notice that the integers $k$ and $k-a$ are not divisible by $p$, and
$u_\ell \in \F_p^n$ is a singular vector since the sum of its coordinates equals zero.

\begin{lem}[{\cite[Lemma 3.1]{V}}]\label{lem lcKZ}
Assume that $p$ does not divide $n$ and $\ell \in \{ 1, \dots,d(\ka) \}$. Then the leading term of
$\bar Q^{\ell p-1}(z;\ka)$ is given by the formula
\[
\on{L}\bar Q^{\ell p-1}(z;\ka)=(z_1\cdots z_{r(\ell)})^kz_{r(\ell)+1}^a u_\ell .
\]
\end{lem}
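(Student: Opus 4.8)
The plan is to compute $\bar Q^{\ell p-1}(z;\ka)$ directly from the product formula and read off its leading term by inspection. Writing $j$ for the component index (the subscript $a$ of $\bar Q_a$ unfortunately collides with the name of the integer $a=(n-r(\ell))k-\ell p$, so I keep $a$ for the latter), I would start from
\[
\bar Q_j(t,z;\ka)=\frac{\bar\Phi(t,z;\ka)}{t-z_j}=\prod_{i=1}^n(t-z_i)^{k-\delta_{ij}},
\]
which is homogeneous of degree $nk-1$ in $(t,z)$, so its coefficient of $t^{\ell p-1}$, namely $\bar Q^{\ell p-1}_j(z;\ka)$, is homogeneous of degree $nk-\ell p$ in $z$; this integer is $\ge0$ since $\ell\le d(\ka)=[nk/p]$. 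Collecting the coefficient of $t^{\ell p-1}$ in the expanded product gives
\[
\bar Q^{\ell p-1}_j(z;\ka)=(-1)^{nk-\ell p}\sum_{\substack{c_1+\dots+c_n=nk-\ell p\\ 0\le c_i\le k-\delta_{ij}}}\Big(\prod_{i=1}^n\binom{k-\delta_{ij}}{c_i}\Big)z_1^{c_1}\cdots z_n^{c_n}.
\]
All monomials appearing here (in any component) have the same degree $nk-\ell p$, so the length-lexicographic order reduces to the lexicographic order on exponent vectors $(c_1,\dots,c_n)$.

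Next I would pin down the leading monomial of the vector. Every exponent vector occurring above has all $c_i\le k$; conversely, for each $j\ge r(\ell)+2$ the vector $(k,\dots,k,a,0,\dots,0)$ with $r(\ell)$ entries equal to $k$ is admissible, and by the defining inequality of $r(\ell)$ it satisfies $r(\ell)k+a=nk-\ell p$ with $0\le a<k$; moreover $\ell\ge1$ forces $nk-\ell p<(n-1)k$, hence $r(\ell)\le n-2$. A one‑line greedy argument — maximize $c_1$, then $c_2$, and so on, subject to $c_i\le k$ and $\sum c_i=nk-\ell p$ — shows that $(k,\dots,k,a,0,\dots,0)$ is lexicographically largest among all admissible exponent vectors, so the only candidate for the leading monomial of $\bar Q^{\ell p-1}(z;\ka)$ is $m^*:=(z_1\cdots z_{r(\ell)})^k z_{r(\ell)+1}^{\,a}$.

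It remains to compute the coefficient of $m^*$ in each component. For $j\le r(\ell)$ the exponent of $z_j$ in $m^*$ is $k$ while $c_j\le k-1$, so that coefficient is $0$. For $j=r(\ell)+1$ the unique admissible tuple producing $m^*$ has $c_j=a\le k-1$, giving $(-1)^{nk-\ell p}\binom{k-1}{a}$, and for $j\ge r(\ell)+2$ the same tuple gives $(-1)^{nk-\ell p}\binom{k}{a}$. Using $k\binom{k-1}{a}=(k-a)\binom{k}{a}$ and the invertibility of $k$ in $\F_p$ (recall $0<k<p$), the coefficient vector of $m^*$ is
\[
(-1)^{nk-\ell p}\binom{k}{a}\big(0,\dots,0,\tfrac{k-a}{k},1,\dots,1\big)^\intercal=\frac{(-1)^{nk-\ell p}}{k}\binom{k}{a}\,(0,\dots,0,k-a,k,\dots,k)^\intercal=u_\ell .
\]
Since $0\le a<k<p$, both $\binom{k}{a}$ and $k-a$ are units in $\F_p$, so $u_\ell\ne0$, $m^*$ is genuinely the leading monomial, and $\on{L}\bar Q^{\ell p-1}(z;\ka)=m^*u_\ell$, as asserted. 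There is no deep obstacle here: the proof is purely combinatorial, and the only points needing care are the greedy/lexicographic comparison and keeping the three ranges $j\le r(\ell)$, $j=r(\ell)+1$, $j\ge r(\ell)+2$ separate, together with the degenerate cases $r(\ell)=0$ and $a=0$, which the displayed formulas already cover uniformly.
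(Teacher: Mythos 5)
Your proof is correct. Note, however, that the paper does not prove this lemma at all: it is quoted from \cite[Lemma~3.1]{V}, so there is no internal argument to compare yours against. Your route is the natural direct one and it is complete: you expand $\bar Q_j(t,z;\ka)=\prod_i(t-z_i)^{k-\delta_{ij}}$, extract the coefficient of $t^{\ell p-1}$ (homogeneous of degree $nk-\ell p$, so the length-lexicographic order indeed reduces to plain lexicographic comparison of exponent vectors), identify the lex-largest admissible exponent vector by the greedy argument with the cap $c_i\le k-\delta_{ij}$, and compute the coefficient vector at $(z_1\cdots z_{r(\ell)})^k z_{r(\ell)+1}^{a}$ in the three ranges of the component index, using $k\binom{k-1}{a}=(k-a)\binom{k}{a}$ to match the normalization of $u_\ell$. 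You also verify the points that could silently fail in characteristic $p$: $0\le a<k<p$ makes $\binom{k}{a}$, $\binom{k-1}{a}$, $k$ and $k-a$ units in $\F_p$, so $u_\ell\ne 0$ and the candidate monomial really is the leading one, and $r(\ell)\le n-2$ keeps the index $r(\ell)+1$ within range. This is a legitimate self-contained replacement for the external citation.
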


\begin{lem}\label{lem lead}
Assume that $p$ does not divide $n$ and $\ell \in\{ 1, \dots,d(\ka)\}$.
Let $Q^{\ell p-1}(z;\ka)$ be the corresponding $p$-hypergeometric solution of the qKZ equations and $\bar Q^{\ell p-1}(z;\ka)$
the corresponding $p$-hypergeometric solution of
the KZ equations. Then their leading terms are equal,
\[
\on{L} Q^{\ell p-1}(z;\ka) = \on{L} \bar Q^{\ell p-1}(z;\ka).
\]
\end{lem}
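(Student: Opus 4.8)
The plan is to compare the leading term of the qKZ solution $Q^{\ell p-1}(z;\ka)$ with that of the KZ solution $\bar Q^{\ell p-1}(z;\ka)$ using the degree-filtration relation already established in Corollary \ref{cor q-kz}. Recall that Corollary \ref{cor q-kz} says $Q^{\ell p-1}(z;\ka) = \bar Q^{\ell p-1}(z;\ka) + (\text{lower-degree terms})$, where $\bar Q^{\ell p-1}(z;\ka)$ is homogeneous of degree $nk - \ell p$ and ``lower-degree'' means total degree strictly less than $nk - \ell p$. First I would invoke Lemma \ref{lem lcKZ}, which identifies $\on{L}\bar Q^{\ell p-1}(z;\ka) = (z_1\cdots z_{r(\ell)})^k z_{r(\ell)+1}^a u_\ell$; in particular this leading term is nonzero (the scalar $\frac{(-1)^{nk-\ell p}}{k}\binom{k}{a}$ is a nonzero element of $\F_p$, since $p\nmid k$ and, as noted after the definition of $u_\ell$, $p\nmid(k-a)$, so $u_\ell\neq 0$), and it has total degree exactly $nk - \ell p$.

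The key observation is that the length-lexicographical order first compares total degree. Since every term of $Q^{\ell p-1}(z;\ka)$ coming from the ``dots'' in Corollary \ref{cor q-kz} has total degree strictly less than $nk - \ell p$, no such term can compete with a top-degree monomial of $\bar Q^{\ell p-1}(z;\ka)$ in the length-lexicographical order. Hence the lexicographically largest monomial appearing in $Q^{\ell p-1}(z;\ka)$ with a nonzero (vector) coefficient is the same as the lexicographically largest monomial in $\bar Q^{\ell p-1}(z;\ka)$, and the corresponding coefficient vectors agree, because the degree-$(nk-\ell p)$ homogeneous part of $Q^{\ell p-1}(z;\ka)$ is exactly $\bar Q^{\ell p-1}(z;\ka)$. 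Therefore $\on{L}Q^{\ell p-1}(z;\ka) = \on{L}\bar Q^{\ell p-1}(z;\ka)$, which is the claim.

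I do not anticipate a genuine obstacle here: the lemma is essentially a bookkeeping consequence of Corollary \ref{cor q-kz} together with the definition of the length-lexicographical order. The only point requiring a little care is to make sure that $\on{L}\bar Q^{\ell p-1}(z;\ka)$ really is a top-degree term of $\bar Q^{\ell p-1}(z;\ka)$ — but this is automatic since $\bar Q^{\ell p-1}(z;\ka)$ is homogeneous — and that it is genuinely nonzero, which is recorded in the discussion preceding Lemma \ref{lem lcKZ} and reproved above from $p>n\ge k>a$. One subtlety worth a sentence in the writeup: ``leading term'' here refers to a vector-valued polynomial, so $\on{L}$ picks out the lexicographically largest monomial $m$ such that the coefficient vector of $m$ is nonzero, together with that vector; the argument above respects this since passing to the degree-$(nk-\ell p)$ part is a linear operation on the coefficient vectors and does not affect which monomials of that degree are present.
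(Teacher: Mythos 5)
Your proof is correct and follows the same route as the paper, whose proof of this lemma is simply ``The lemma follows from Corollary \ref{cor q-kz}.'' You merely spell out the bookkeeping (degree comparison in the length-lexicographical order and nonvanishing of $\on{L}\bar Q^{\ell p-1}(z;\ka)$ via Lemma \ref{lem lcKZ}) that the paper leaves implicit.
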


\begin{proof}
The lemma follows from Corollary \ref{cor q-kz}.
\end{proof}

\begin{cor}If $p$ does not divide $n$ and $\ell \in\{ 1, \dots,d(\ka)\}$, then
the leading term of
$ Q^{\ell p-1}(z;\ka)$ is given by the formula
\begin{gather}\label{L bar}
\on{L} Q^{\ell p-1}(z;\ka)=(z_1\cdots z_{r(\ell)})^kz_{r(\ell)+1}^a u_\ell .
\end{gather}
\end{cor}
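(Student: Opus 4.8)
The statement to prove is the final Corollary: assuming $p\nmid n$ and $\ell\in\{1,\dots,d(\ka)\}$, the leading term of $Q^{\ell p-1}(z;\ka)$ is $(z_1\cdots z_{r(\ell)})^k z_{r(\ell)+1}^a u_\ell$.

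The proof is immediate by chaining two prior results. First, Lemma \ref{lem lead} says $\on{L}Q^{\ell p-1}(z;\ka)=\on{L}\bar Q^{\ell p-1}(z;\ka)$, which in turn follows from Corollary \ref{cor q-kz}: the top-degree part of the qKZ solution $Q^{\ell p-1}$ equals the KZ solution $\bar Q^{\ell p-1}$ (they have the same degree $nk-\ell p$, and the lower-order corrections in Corollary \ref{cor q-kz} cannot affect the length-lexicographically largest monomial since that monomial already sits in top degree). Second, Lemma \ref{lem lcKZ} gives $\on{L}\bar Q^{\ell p-1}(z;\ka)=(z_1\cdots z_{r(\ell)})^k z_{r(\ell)+1}^a u_\ell$. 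Composing the two equalities yields the claim.

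So the plan is simply: cite Lemma \ref{lem lead} to transfer the leading-term computation from the qKZ solution to the KZ solution, then cite Lemma \ref{lem lcKZ} for the explicit formula in the KZ case. There is essentially no obstacle here — all the work was done in establishing Corollary \ref{cor q-kz} (via the abstract Lemma \ref{lem top deg} on Stirling-number leading coefficients $s_2(m,m)=1$) and in the reference \cite{V} for Lemma \ref{lem lcKZ}. The only thing to be a little careful about is that "leading term" means length-lexicographically largest, so one should note that passing from $Q^{\ell p-1}$ to its top-degree part $\bar Q^{\ell p-1}$ does not disturb it — but this is already the content of Lemma \ref{lem lead}, whose one-line proof points to Corollary \ref{cor q-kz}. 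Hence the proof is a two-line citation.

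\begin{proof}
By Lemma \ref{lem lead}, $\on{L} Q^{\ell p-1}(z;\ka) = \on{L} \bar Q^{\ell p-1}(z;\ka)$, and by Lemma \ref{lem lcKZ}, $\on{L} \bar Q^{\ell p-1}(z;\ka)=(z_1\cdots z_{r(\ell)})^kz_{r(\ell)+1}^a u_\ell$. Combining these gives \eqref{L bar}.
\end{proof}
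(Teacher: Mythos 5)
Your proof is correct and is exactly the argument the paper intends: the corollary is stated without further proof precisely because it is the immediate combination of Lemma \ref{lem lead} (leading terms of $Q^{\ell p-1}$ and $\bar Q^{\ell p-1}$ coincide) and Lemma \ref{lem lcKZ} (explicit leading term in the KZ case). Your added remark tracing Lemma \ref{lem lead} back to Corollary \ref{cor q-kz} matches the paper's chain of reasoning as well.
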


\begin{exmp}\label{ex 5.4}
Let $n=3$ and $d(\ka)=\big[\frac{3k}p\big]=1$. Then
$\frac p3 < k< \frac{2p}3 $.
In this case there is exactly one $p$-hypergeometric solution $Q^{p-1}(z;\ka)$.
If $\frac p2< k<\frac{2p}3$, then the leading term of $Q^{p-1}(z;\ka)$ is
\[
z_1^kz_2^{2k-p}\frac{(-1)^{3k-p}}{k}\binom{k}{2k-p} (0,p-k,k )^\intercal .
\]
If $\frac p3< k<\frac{p}2$, then the leading term of $Q^{p-1}(z;\ka)$ is
\[
z_1^{3k-p}
\frac{(-1)^{k}}{k}\binom{k}{3k-p} (p-2k,k,k )^\intercal .
\]
\end{exmp}

Consider the collections of the $n$-vectors $Q^{\ell p-1}(z;\ka)$,
$\ell = 1, \dots, d(\ka)$, as an $(n\times d(\ka))$-matrix. For
$I=\{1\leq i_1<\dots <i_{d(\ka)}\leq n\}$,
 denote by $M_I(z, \ka)$ the $(d(\ka)\times d(\ka))$-minor
of that matrix located at the rows with indices in $I$.

\begin{lem}\label{lem minor}
If $p$ does not divide $n$ and
 $I=\{r(d(\ka)) <\dots< r(1)\}$, then the minor $M_I(z;\ka)$ is a nonzero polynomial.
\end{lem}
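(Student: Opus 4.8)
The plan is to pass to leading terms and to reduce to the differential case. By Corollary~\ref{cor q-kz}, $Q^{\ell p-1}(z;\ka)=\bar Q^{\ell p-1}(z;\ka)+(\text{terms of degree}<nk-\ell p)$, with $\bar Q^{\ell p-1}(z;\ka)$ homogeneous of degree $nk-\ell p$. In the expansion $M_I(z;\ka)=\sum_\sigma\sign(\sigma)\prod_{\ell=1}^{d(\ka)}Q^{\ell p-1}_{i_{\sigma(\ell)}}(z;\ka)$ over the permutations $\sigma$ of $\{1,\dots,d(\ka)\}$, every summand has degree at most $\sum_{\ell}(nk-\ell p)$, and its homogeneous part of that degree is $\prod_{\ell}\bar Q^{\ell p-1}_{i_{\sigma(\ell)}}(z;\ka)$. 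Hence the top-degree part of $M_I(z;\ka)$ is the corresponding minor $\bar M_I(z;\ka)$ of the matrix of KZ $p$-hypergeometric solutions $\bar Q^{\ell p-1}(z;\ka)$, $\ell=1,\dots,d(\ka)$, and it is enough to prove $\bar M_I(z;\ka)\neq 0$.

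For this I would first refine Lemma~\ref{lem lcKZ} to a formula for the leading term of each individual coordinate $\bar Q^{\ell p-1}_a(z;\ka)$. Recalling that $\bar Q^{\ell p-1}_a(z;\ka)$ is the coefficient of $t^{\ell p-1}$ in $(t-z_a)^{k-1}\prod_{j\neq a}(t-z_j)^k$, applying the greedy rule for the lexicographically largest monomial (give $z_1$ the largest admissible power, then $z_2$, and so on), and writing $B=nk-\ell p$ and $z^{\mu_\ell}=(z_1\cdots z_{r(\ell)})^kz_{r(\ell)+1}^{a_\ell}$ (where $r(\ell)$, $a_\ell$ are the integers $r(\ell)$, $a$ of Lemma~\ref{lem lcKZ}), one finds that $\on{L}\bar Q^{\ell p-1}_a(z;\ka)$ equals $(-1)^B\binom{k}{a_\ell}z^{\mu_\ell}$ for $a\geq r(\ell)+2$, equals $(-1)^B\binom{k-1}{a_\ell}z^{\mu_\ell}$ for $a=r(\ell)+1$, and equals $(-1)^B\binom{k}{a_\ell+1}\,z^{\mu_\ell}z_{r(\ell)+1}/z_a$ for $a\leq r(\ell)$. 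All the binomial coefficients occurring here are nonzero modulo $p$, since $0\leq a_\ell\leq k-1<p$.

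Next, I would substitute $i_m=r(d(\ka)+1-m)$ (so $I=\{r(d(\ka))<\dots<r(1)\}$) and expand $\bar M_I(z;\ka)$. The naive leading term, built from the vectors $z^{\mu_\ell}u_\ell$ alone, vanishes: the leading term of the first column $\bar Q^{p-1}$ is supported on the rows $r(1)+1,\dots,n$, none of which belongs to $I$ (as $\max I=r(1)$), so the matrix of leading coefficients has a zero column. One must therefore descend one level. For a permutation $\sigma$, the leading monomial of $\prod_\ell\on{L}\bar Q^{\ell p-1}_{i_{\sigma(\ell)}}$ is $z^\nu\prod_{\ell\in E_\sigma}z_{r(\ell)+1}/z_{i_{\sigma(\ell)}}$, with $z^\nu=\prod_\ell z^{\mu_\ell}$ and $E_\sigma=\{\ell:\sigma(\ell)+\ell\leq d(\ka)+1\}$; since $i_{\sigma(\ell)}\leq r(\ell)<r(\ell)+1$ for $\ell\in E_\sigma$, each factor $z_{r(\ell)+1}/z_{i_{\sigma(\ell)}}$ lowers the power of an earlier variable and thus strictly lowers the monomial below $z^\nu$. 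It then remains to single out the $\sigma$ for which this monomial is lexicographically largest, to check that they all produce the same monomial, and to verify that the associated signed sum of products of the binomial coefficients above is a nonzero element of $\F_p$; this yields $\on{L}\bar M_I(z;\ka)\neq 0$, and hence the lemma.

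The expected main obstacle is precisely this last nonvanishing. Unlike the differential situation, which can be analysed coordinate by coordinate, here the columns interfere and, in general, several permutations contribute simultaneously to the genuine leading monomial of $\bar M_I(z;\ka)$; showing that the corresponding alternating sum of products $\pm\binom{k}{a_\ell}$, $\pm\binom{k-1}{a_\ell}$, $\pm\binom{k}{a_\ell+1}$ does not cancel modulo $p$ is the combinatorial heart of the argument, and it is there that the arithmetic of the integers $r(\ell)$, $a_\ell$ — in particular the indices at which the values $r(\ell)$ are consecutive — is used in an essential way. Equivalently, $\bar M_I(z;\ka)\neq 0$ is the counterpart, for the KZ $p$-hypergeometric solutions, of the present lemma; granting it, the reduction in the first paragraph concludes the proof.
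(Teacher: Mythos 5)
Your proposal does not prove the lemma: it reduces it to a combinatorial nonvanishing statement that is left unproven. After passing to the KZ minor $\bar M_I$ and observing that for the printed row set $I=\{r(d(\ka))<\dots<r(1)\}$ the matrix of leading coefficients has a zero column (indeed the coordinate of $u_\ell$ at the row $r(m)$ vanishes whenever $m\ge \ell$, so the naive leading-term determinant vanishes identically), you are forced to descend below the leading term of each column. At that point everything hinges on identifying the permutations realizing the true leading monomial of $\bar M_I$, checking they produce the same monomial, and proving that the resulting alternating sum of products of binomials $\binom{k}{a_\ell}$, $\binom{k-1}{a_\ell}$, $\binom{k}{a_\ell+1}$ is nonzero in $\F_p$ (here $a_\ell$ is the integer $a$ of Lemma \ref{lem lcKZ}). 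You call this the ``combinatorial heart'' and proceed ``granting it''; nothing in the proposal establishes it, so the argument is a plan with its decisive step missing, not a proof.

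By contrast, the paper deduces the lemma in one line from formula \eqref{L bar}, and the reason your route became hard is the row set itself: read literally, $I=\{r(d(\ka))<\dots<r(1)\}$ cannot be intended, since $r(d(\ka))$ may equal $0$ (for instance $n=3$, $\tfrac p3<k<\tfrac p2$ gives $r(1)=0$, cf.\ Example \ref{ex 5.4}), which is not a row index. With the rows $r(\ell)+1$, i.e.\ $I=\{r(d(\ka))+1<\dots<r(1)+1\}$, the coordinate of $u_\ell$ at the row $r(m)+1$ vanishes for $m>\ell$, equals a nonzero multiple of $k$ for $m<\ell$, and equals a nonzero multiple of $k-a_\ell\not\equiv 0\pmod p$ for $m=\ell$; the leading-coefficient matrix is therefore triangular with nonzero diagonal, and by multilinearity of the determinant together with compatibility of the length-lexicographical order with products, $\on{L}M_I(z;\ka)=\prod_\ell (z_1\cdots z_{r(\ell)})^k z_{r(\ell)+1}^{a_\ell}$ times that nonzero determinant, which is exactly the paper's deduction from \eqref{L bar} (no reduction to the KZ case via Corollary \ref{cor q-kz} is needed, since \eqref{L bar} already gives the qKZ leading terms). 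So your zero-column observation is a genuine and useful catch of an off-by-one in the statement, but instead of proving the corrected statement---which your own coordinatewise leading-term formulas already suffice for---you attempted the literal version and left its crux open.
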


\begin{proof}
The lemma is a corollary of formula \eqref{L bar}.
\end{proof}

\begin{thm}\label{thm ind}
If $p$ does not divide $n$, then the $p$-hypergeometric solutions
$Q^{\ell p-1}(z;\ka)$,
$\ell = 1, \dots, d(\ka)$,
of the qKZ equations are linearly independent over the field $\K(z)$.
\end{thm}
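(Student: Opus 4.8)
The plan is to deduce linear independence over the rational function field $\K(z)$ directly from the structure of the leading terms established in \eqref{L bar}. Since the $p$-hypergeometric solutions $Q^{\ell p-1}(z;\ka)$, $\ell=1,\dots,d(\ka)$, are $n$-vectors of polynomials, linear independence over $\K(z)$ is equivalent to linear independence over the polynomial ring $\K[z]$, which in turn is equivalent to the statement that the $(n\times d(\ka))$-matrix formed by these vectors has rank $d(\ka)$ over $\K(z)$; and that is equivalent to some $(d(\ka)\times d(\ka))$-minor $M_I(z;\ka)$ being a nonzero polynomial. This is exactly the content of Lemma \ref{lem minor}, applied to the particular index set $I=\{r(d(\ka))<\dots<r(1)\}$.

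First I would record that the indices $r(1)>r(2)>\cdots>r(d(\ka))$ are strictly decreasing (this is stated in Section \ref{sec lt}, following from the definition of $r(\ell)$ via $r(\ell)k\le nk-\ell p<(r(\ell)+1)k$), so the set $I$ above indeed consists of $d(\ka)$ distinct row indices and $M_I$ is a genuine $(d(\ka)\times d(\ka))$-minor. Next I would invoke Lemma \ref{lem minor} to conclude $M_I(z;\ka)\ne 0$ as a polynomial. Then the columns of the corresponding $(d(\ka)\times d(\ka))$-submatrix are linearly independent over $\K(z)$, hence so are the full columns $Q^{\ell p-1}(z;\ka)$, $\ell=1,\dots,d(\ka)$. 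This is the whole argument; it is essentially a one-line reduction to the already-proven minor lemma.

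The only substantive point, and the place where the real work sits, is Lemma \ref{lem minor} itself, i.e.\ why the specific minor $M_I$ is nonzero. The mechanism is the leading-term formula \eqref{L bar}: the leading monomial of $Q^{\ell p-1}(z;\ka)$ is $(z_1\cdots z_{r(\ell)})^k z_{r(\ell)+1}^a$ times the singular vector $u_\ell$, whose $(r(\ell)+1)$-st coordinate is $(-1)^{nk-\ell p}(k-a)/k\in\F_p^\times$ (nonzero since $p\nmid k$ and $p\nmid(k-a)$, as noted in Section \ref{sec lt}). When one expands $M_I$ along the rows indexed by $I=\{r(d(\ka)),\dots,r(1)\}$ and tracks the length-lexicographically largest monomial, the contribution from the "diagonal" term — pairing the $\ell$-th column with the row $r(\ell)+1$ is off here, so more precisely one pairs column $\ell$ with row $r(\ell)+1$ after reindexing — dominates all others and cannot be cancelled, because each $Q^{\ell p-1}$ has a strictly larger block of leading variables than the next. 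Thus the leading term of $M_I$ is the product of the nonzero scalars from the $u_\ell$'s times a nonzero monomial, so $M_I\ne 0$. Since that lemma is already stated and (will be) proved earlier, the present theorem follows immediately; there is no additional obstacle in deriving Theorem \ref{thm ind} from it.
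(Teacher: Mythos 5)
Your proposal is correct and takes essentially the same route as the paper: the paper deduces the theorem in one line from the leading-term formula \eqref{L bar} together with the linear independence of the echelon vectors $u_\ell$, while you pass through Lemma \ref{lem minor}, which is itself just a corollary of \eqref{L bar}, so the substance is identical. Your remark that the distinguished rows are $r(\ell)+1$ rather than $r(\ell)$ is a correct reading of the leading-term formula and of the triangular structure of the coefficient matrix $\bigl((u_\ell)_{r(m)+1}\bigr)$.
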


\begin{proof} The statement follows from formula \eqref{L bar} and the fact that the vectors
$u_\ell$ are linearly independent over $\K$.
\end{proof}

\section{Orthogonality relations}\label{}

\subsection{Statement}

Recall that $d(-\ka) = \big[\frac{n(p-k)}p\big]$.

\begin{thm}\label{thm orth}Let $p>n$
and $0< d(\ka) <n-1$.
 Then for any $\ell \in \{1, \dots,d(\ka)\}$ and
$m \in \{1, \dots, d(-\ka)\}$,
we have
\begin{gather}
\label{ortA}
S\big(Q^{m p-1}(-z;-\ka), Q^{\ell p-1}(z;\ka)\big)
=
\sum_{a=1}^n Q^{m p-1}_a(-z;-\ka) Q^{\ell p-1}_a(z;\ka)
 = 0,
\end{gather}
where $S$ is the Shapovalov form.
\end{thm}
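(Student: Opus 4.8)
The plan is to realize the pairing $S(Q^{mp-1}(-z;-\ka),Q^{\ell p-1}(z;\ka))$ as the coefficient extracted from a \emph{single-variable Pochhammer bilinear object}, and then show that object is a quasi-constant of degree too small to carry the relevant coefficient. Concretely, recall from \eqref{U} that $Q(t,z;\ka)=\sum_i Q^i(z;\ka)(t;\ka)_i$ and similarly $Q(t,z;-\ka)=\sum_j Q^j(z;-\ka)(t;-\ka)_j$, and that $Q_a(t,z;\ka)=\Phi(t,z;\ka)\eta_a(t,z)$ with $\eta_a$ independent of $\ka$. Therefore the scalar
\[
\sum_{a=1}^n Q_a(t,z;-\ka)\,Q_a(t',z;\ka)
=\Phi(t,z;-\ka)\,\Phi(t',z;\ka)\sum_{a=1}^n\eta_a(t,z)\,\eta_a(t',z).
\]
First I would compute the ``weight-function contraction'' $\sum_a\eta_a(t,z)\eta_a(t',z)$: this is a standard residue/telescoping identity (it appears in the hypergeometric pairing literature and in \cite{MV}, \cite{V}, \cite{VV1}) and should collapse to something like $\dfrac{1}{t-t'}\Bigl(\prod_j\frac{t-z_j+1}{t-z_j}-\prod_j\frac{t'-z_j+1}{t'-z_j}\Bigr)$ up to sign, i.e.\ a difference of two products that is regular on the diagonal $t=t'$. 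The point of doing the computation with two spectral variables $t,t'$ rather than one is that the individual summands $\eta_a$ have poles, but the symmetric combination is polynomial after multiplying by the master polynomials.

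Second, I would substitute $-z$ for $z$ in the $-\ka$ factor (as the statement requires) and use the Pochhammer identity $(t-z_a;-\ka)_{p-k}=(-1)^{p-k}(z_a-t;\ka)_{p-k}$ together with the relation $(u;\ka)_k\,(u-\ka;\ka)_{p-k}$-type products from Section \ref{sec PP}; the goal is to rewrite $\Phi(t,-z;-\ka)\,\Phi(t',z;\ka)$ in a form whose $t$- and $t'$-degrees are controlled. Then I would expand everything in the Pochhammer bases $(t;-\ka)_i$ and $(t';\ka)_j$, so that
\[
S\bigl(Q^{mp-1}(-z;-\ka),Q^{\ell p-1}(z;\ka)\bigr)
\]
is exactly the coefficient of $(t;-\ka)_{mp-1}(t';\ka)_{\ell p-1}$ in that bivariate expansion, with coefficients in $\K[z]$. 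The crucial structural fact to extract is a degree bound: $Q_a(t,z;\ka)$ has $t$-degree $nk-1$ and $Q_a(t,-z;-\ka)$ has $t$-degree $n(p-k)-1$, so the total spectral bidegree is $(n(p-k)-1,\,nk-1)$, whereas we are asking for the coefficient at bidegree $(mp-1,\ell p-1)$ with $m\le d(-\ka)=[n(p-k)/p]$ and $\ell\le d(\ka)=[nk/p]$ — these are the \emph{top} admissible powers, essentially the ``$p$-shifted'' endpoints.

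Third, the kill mechanism: after the contraction the bivariate polynomial is a \emph{quasi-constant} in each spectral variable (it is built from $h(t)=t^p-\ka^{p-1}t$-type blocks because $p$ divides the relevant Pochhammer lengths up to the $-1$ shift), together with a single ``non-quasi-constant'' factor of low degree coming from the $1/(t-t')$ telescoping term and the $(t-z_a-\ka;\ka)_{k-1}$ pieces. One then argues, exactly as in \cite{VV1} for the differential KZ case and using $p>n$, that the coefficient of the extremal monomial $(t;-\ka)_{mp-1}(t';\ka)_{\ell p-1}$ in such an object must vanish: the quasi-constant part can only produce Pochhammer powers that are multiples of $p$, so to reach $\equiv -1\pmod p$ in \emph{both} variables simultaneously one would need the low-degree correction factor to supply a total shift of $-2$ while only having degree $<k+k<2p$ but, more sharply, the combined constraint $mp-1+\ell p-1\le n(p-k)-1+nk-1=np-2$ forces $m+\ell\le n$ with equality, and equality is obstructed because $p\nmid n$ makes $d(\ka)+d(-\ka)=n-1<n$, so $m+\ell\le n-1<n$ strictly — hence the extremal bidegree coefficient lies outside the support. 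I expect the main obstacle to be precisely this bookkeeping step: correctly identifying which part of the contracted product is quasi-constant versus the genuinely $z$- and spectral-dependent remainder, and turning the numerical gap ``$d(\ka)+d(-\ka)=n-1$'' into a clean vanishing statement; a cheaper route, which I would try first, is to invoke Lemma \ref{lem inv} (the $\ka$-periodicity of $S(\tilde s(-z),s(z))$) to reduce the claim to: a quasi-constant $V\ox V$-valued object of the appropriate degree that is $\ka$-periodic in each $z_a$ and whose top-degree part vanishes (by the analogous orthogonality for $\bar Q$ proved in \cite{VV1}, via Corollary \ref{cor q-kz}/Lemma \ref{lem lead}) must vanish identically — this converts Theorem \ref{thm orth} into the degree count ``a periodic polynomial of degree $<p\cdot(\text{something})$ with vanishing leading term is zero'' plus the known differential-KZ result, which is the shape of argument already used elsewhere in the paper.
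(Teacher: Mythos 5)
Your setup is partly on track: the identification of $G_{\ell,m}(z;\ka)=S\big(Q^{mp-1}(-z;-\ka),Q^{\ell p-1}(z;\ka)\big)$ with the coefficient of $(t;-\ka)_{mp-1}(t';\ka)_{\ell p-1}$ in the bivariate product is correct, and your ``cheaper route'' reproduces the first half of the paper's argument: by Lemma \ref{lem inv}, $G_{\ell,m}$ is a polynomial in the quasi-constants $h(z_j)=z_j^p-z_j$, and the degree bounds $\deg G_{\ell,m}\le (n-\ell-m)p$ and $\deg_{z_j}G_{\ell,m}\le p$ show it is multilinear in the $h(z_j)$ with at most $n-\ell-m$ factors (this is Lemma \ref{lem deg}). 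But both of your kill mechanisms fail. In the bivariate route, the bidegree $(mp-1,\ell p-1)$ lies componentwise inside the support, since $mp\le n(p-k)$ and $\ell p\le nk$; your inference ``$m+\ell\le n-1$ strictly, hence the extremal coefficient lies outside the support'' is backwards --- a smaller total degree places the coefficient more comfortably \emph{inside} the support, and indeed this coefficient is exactly the nontrivial quantity to be shown zero, so no bookkeeping of degrees alone can kill it. (Also, the claim that the contracted object is quasi-constant in each spectral variable is unjustified: the master polynomials have $t$- and $t'$-degrees $n(p-k)$ and $nk$, which are not multiples of $p$.) In the cheaper route, ``a quasi-constant polynomial with vanishing top-degree part is zero'' is false: the KZ orthogonality \eqref{ortAkz} from \cite{VV1} only kills the coefficients $c_{j_1,\dots,j_b}$ with $b=n-\ell-m$, and a nonzero lower-order quasi-constant such as $c_0+c_1h(z_1)$ survives every constraint you impose.

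The missing idea is the paper's special-restriction mechanism (Section \ref{sec 6.2}). For $I=\{i_1<\dots<i_a\}$ one substitutes $z_{i_b}=((b-1)k-1)\ka$; then $Q(t,z;\ka)_{S_I}$ acquires the factor $(t;\ka)_{ak-1}$, so its Pochhammer expansion involves only $(t;\ka)_i$ with $i\ge ak-1$, whence $Q^{\ell p-1}(z;\ka)_{S_I}=0$ whenever $\ell p<ak$ (Corollary \ref{cor restr}). Using the symmetry $G_{\ell,m}(-z;\ka)=G_{m,\ell}(z;-\ka)$ one may assume the ``good parameter'' inequality $\ell(p-k)<mk$ (equality is impossible since it would force $p\mid \ell+m$ while $\ell+m\le n-1<p$), which is exactly $\ell p<(\ell+m)k$, so the restriction with $a=\ell+m$ annihilates $Q^{\ell p-1}(z;\ka)$ and hence $G_{\ell,m}$. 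Since the substituted values lie in $\F_p$, they kill $h(z_i)$ for $i\in I$, so the restricted $G_{\ell,m}$ is precisely the subsum of its $h$-monomials indexed by subsets of the complement $J$; its identical vanishing for every $I$ with $|I|=\ell+m$, together with the multilinearity bound $b\le n-\ell-m$, forces all coefficients, including $c_0$, to vanish. Without this step (or some substitute for it), your proof does not close.
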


The theorem is proved in Sections \ref{sec 6.2} and \ref{sec proof}.

\begin{rem}
It is easy to see that the Shapovalov form on $V$ is nondegenerate if $p$ does not divide~$n$. Indeed, the vectors
$e_1=(1,-1,0,\dots,0)$,
$e_2=(0, 1,-1,0,\dots,0)$, \dots, $e_{n-1}=(0, \dots, 0, 1,-1)$ form a basis of $V$, and the determinant of the Shapovalov form in this basis equals $n$.
\end{rem}

\begin{rem}Formula \eqref{ortA} and Corollary
\ref{cor q-kz} imply the orthogonality relations for the $p$-hypergeo\-metric solutions of the KZ equations,
\begin{gather}
\label{ortAkz}
S\big(\bar Q^{m p-1}(-z;-\ka), \bar Q^{\ell p-1}(z;\ka)\big)
=
\sum_{a=1}^n \bar Q^{m p-1}_a(-z;-\ka) \bar Q^{\ell p-1}_a(z;\ka).
 = 0.
\end{gather}
Two different proofs of formula \eqref{ortAkz} are given in
\cite[Theorem~3.11]{VV1}
and
\cite[Appendix~A]{VV1}.
\end{rem}

\subsection{Special restrictions}
\label{sec 6.2}

Let $I\subset \{1,\dots,n\}$ be a nonempty subset,
 $I = \{1\leq i_1<i_2<\dots<i_a\leq n\}$.
Denote $S_I$ the system of equations
\begin{gather}
 \label{S_I}
 z_{i_b} = ((b-1)k -1)\ka,
 \qquad
 b=1,\dots,a.
\end{gather}
 For a polynomial $f(z)$, define $f(z)_{S_I}$ to be the polynomial $f(z)$ in which the variables
 $(z_i)_{i\in I}$ are replaced by multiples of $\ka$ according to formulas \eqref{S_I}.

\begin{lem}\label{lem S_I}
Let $Q(t,z;\ka)$ be the vector of polynomials defined by \eqref{U}.
Let $I = \{1\leq i_1<i_2<\dots<i_a\leq n\}\subset\{1,\dots,n\}$
be a nonempty subset. Then
\[
Q(t,z;\ka)_{S_I} = (t;\ka)_{ak-1} (P_1(t,z), \dots, P_n(t,z))^\intercal,
\]
where $P_1(t,z), \dots, P_n(t,z)$ are suitable polynomials of degree $(n-a)k$.
\end{lem}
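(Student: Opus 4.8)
The plan is to analyze the vector of Pochhammer polynomials
$Q(t,z;\ka)$ entry by entry using the product formula from the earlier lemma, namely
\[
Q_c(t,z;\ka) = \left(\prod_{j<c}(t-z_j-\ka;\ka)_k\right)(t-z_c-\ka;\ka)_{k-1}\left(\prod_{j>c}(t-z_j;\ka)_k\right),
\]
and to substitute the values $z_{i_b} = ((b-1)k-1)\ka$ prescribed by $S_I$. The key observation is that each such substitution replaces a factor of the form $(t-z_{i_b};\ka)_k$ or $(t-z_{i_b};\ka)_{k-1}$ (or $(t-z_{i_b}-\ka;\ka)_k$, etc., depending on whether $i_b$ is before, equal to, or after the index $c$) by a Pochhammer polynomial whose argument is shifted by an integer multiple of $\ka$, so that all the factors coming from indices in $I$ telescope into a single long Pochhammer block divisible by $(t;\ka)_{ak-1}$.

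Concretely, I would first record how $(t-z;\ka)_m$ behaves under $z\mapsto z+\ka$: from the first displayed identity in Section~\ref{sec PP}, $(t-(z+\ka);\ka)_m$ differs from $(t-z;\ka)_m$ by the scalar factor $(t-z-\ka m)/(t-z)$, i.e.\ shifting $z$ by $\ka$ deletes the first factor $(t-z)$ and appends a new factor $(t-z-\ka m)$. Iterating, the substitution $z_{i_b} = ((b-1)k-1)\ka$ turns each relevant Pochhammer factor into $(t;\ka)_{\bullet}$-type blocks that start at $t$ (because the constant $(b-1)k-1$ is chosen precisely so the blocks chain up): for $c\notin I$ one gets, after reordering the commuting Pochhammer factors along the index set $I$, a factor $(t;\ka)_{ak}$, while for $c\in I$ one loses exactly one unit and gets $(t;\ka)_{ak-1}$. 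In either case $(t;\ka)_{ak-1}$ divides $Q_c(t,z;\ka)_{S_I}$; the remaining factors are the $(t-z_j;\ka)_k$ or $(t-z_j-\ka;\ka)_k$ for $j\notin I$ (and the residual single-unit factor when $c\in I$), whose total degree in $t$ is $(n-a)k$. This gives $Q_c(t,z;\ka)_{S_I} = (t;\ka)_{ak-1}P_c(t,z)$ with $\deg_t P_c = (n-a)k$, which is the claim.

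The main obstacle will be the bookkeeping of the three cases $i_b < c$, $i_b = c$, $i_b > c$ simultaneously for all $b$, making sure that the telescoping of the $a$ Pochhammer factors attached to the indices of $I$ really produces a clean block $(t;\ka)_{ak}$ or $(t;\ka)_{ak-1}$ with no leftover stray factors, rather than merely a polynomial divisible by $(t;\ka)_{ak-1}$ up to some rational correction. The cleanest way to handle this is to exploit that the Pochhammer factors attached to distinct indices commute as polynomials in $t$, reorder them so that the $I$-indexed factors are adjacent and sorted by $b$, and then apply the shift identity inductively in $b$; the choice $z_{i_b}=((b-1)k-1)\ka$ guarantees that the $b$-th block begins exactly where the $(b-1)$-st block ended, so the induction goes through. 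Finally I would note that the claim about $P_c$ being an honest polynomial (not just a rational function) follows because $(t;\ka)_{ak-1}$ genuinely divides the polynomial $Q_c(t,z;\ka)_{S_I}$ in $\K[z][t]$, which is clear once the telescoping is carried out, and the degree count $\deg_t Q_c = nk-1 = (ak-1) + (n-a)k$ pins down $\deg_t P_c = (n-a)k$.
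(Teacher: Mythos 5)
Your approach --- substituting $z_{i_b}=((b-1)k-1)\ka$ into the factored form of $Q_c(t,z;\ka)$ and chaining the resulting Pochhammer blocks --- is exactly the direct computation the paper has in mind (it omits the proof as ``straightforward''), and your degree count $nk-1-(ak-1)=(n-a)k$ is correct. However, one intermediate claim is wrong as stated: for $c\notin I$ the $I$-indexed factors do \emph{not} in general merge into $(t;\ka)_{ak}$. Writing $b_0=\#\{b:\,i_b<c\}$, the substituted factor for $i_b<c$ is $(t-(b-1)k\ka;\ka)_k$, with roots $(b-1)k,\dots,bk-1$ (times $\ka$), while for $i_b>c$ it is $(t-((b-1)k-1)\ka;\ka)_k$, with roots $(b-1)k-1,\dots,bk-2$. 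So the union of the $a$ blocks is $\{0,1,\dots,ak-2\}$ with the single value $b_0k-1$ repeated (and when $b_0=0$ one gets the extra root $-1$ instead); the product of the $I$-factors is therefore $(t;\ka)_{ak-1}\cdot\bigl(t-(b_0k-1)\ka\bigr)$, which equals $(t;\ka)_{ak}$ only in the extreme case $b_0=a$. This does not damage the lemma: divisibility by $(t;\ka)_{ak-1}$ still holds, the leftover linear factor is simply absorbed into $P_c$, and the degree of $P_c$ is $(n-a-1)k+(k-1)+1=(n-a)k$, as required. For $c=i_{b_0}\in I$ your telescoping is exact: the blocks $[(b-1)k,\,bk-1]$ for $b<b_0$, $[(b_0-1)k,\,b_0k-2]$ for $b=b_0$, and $[(b-1)k-1,\,bk-2]$ for $b>b_0$ tile $\{0,\dots,ak-2\}$ without repetition, giving precisely $(t;\ka)_{ak-1}$, and the $n-a$ untouched factors contribute degree $(n-a)k$. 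With this correction your argument is complete, and it is all that is needed for Corollary \ref{cor restr}, which uses only the divisibility by $(t;\ka)_{ak-1}$ together with \eqref{prod}.
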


\begin{proof}
The proof is straightforward.
\end{proof}

\begin{cor}\label{cor restr}
Let $Q^{\ell p-1}(z;\ka)$ be a $p$-hypergeometric solution and $\ell p < ak$. Then
\begin{gather}
\label{sol res}
Q^{\ell p-1}(z;\ka)_{S_I} =0.
\end{gather}
\end{cor}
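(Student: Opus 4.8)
\textbf{Proof proposal for Corollary \ref{cor restr}.}

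The plan is to extract the claimed vanishing directly from Lemma \ref{lem S_I} by reading off the $(t;\ka)_i$-expansion coefficients. By Lemma \ref{lem S_I}, after the substitution $S_I$ the vector $Q(t,z;\ka)_{S_I}$ acquires the common factor $(t;\ka)_{ak-1}$:
\[
Q(t,z;\ka)_{S_I} = (t;\ka)_{ak-1}\,\bigl(P_1(t,z),\dots,P_n(t,z)\bigr)^\intercal,
\]
with each $P_b(t,z)$ a polynomial in $t$ of degree $(n-a)k$. First I would expand each $P_b(t,z)$ in the $\K[z]$-basis $\{(t;\ka)_j\mid j\ge 0\}$ of $\K[z][t]$, writing $P_b(t,z)=\sum_{j=0}^{(n-a)k} c^j_b(z)\,(t;\ka)_j$. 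Using the product formula \eqref{prod} for $(t;\ka)_{ak-1}(t;\ka)_j$, the product $(t;\ka)_{ak-1}(t;\ka)_j$ is a $\K$-linear combination of Pochhammer polynomials $(t;\ka)_{ak-1+j-l}$ with $0\le l\le\min(ak-1,j)$; the lowest index occurring is $ak-1+j-\min(ak-1,j)=\max(j,ak-1)\ge ak-1$. Therefore, summing over $b$ and over $j$, every $(t;\ka)_i$ appearing in the expansion of $Q(t,z;\ka)_{S_I}$ has index $i\ge ak-1$.

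Now recall from \eqref{U} that by definition $Q^i(z;\ka)$ is precisely the coefficient of $(t;\ka)_i$ in $Q(t,z;\ka)$, and this expansion is compatible with the substitution $S_I$ (substituting for some of the $z$-variables does not affect the fact that $\{(t;\ka)_i\}$ is a basis of $\K[z][t]$ over $\K[z]$, and the substitution acts coefficientwise on that basis). Hence $Q^i(z;\ka)_{S_I}$ is the coefficient of $(t;\ka)_i$ in $Q(t,z;\ka)_{S_I}$, and by the previous paragraph this coefficient vanishes whenever $i<ak-1$, i.e.\ whenever $i\le ak-2$. Under the hypothesis $\ell p<ak$ we have $\ell p-1<ak-1$, so $i=\ell p-1$ falls in this range and $Q^{\ell p-1}(z;\ka)_{S_I}=0$, which is \eqref{sol res}.

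I do not anticipate a genuine obstacle here: the only point requiring a little care is the interchange of the two operations ``expand in the $(t;\ka)_i$-basis'' and ``perform the substitution $S_I$,'' which is legitimate because $S_I$ is a $\K$-algebra substitution in the $z$-variables and the Pochhammer polynomials $(t;\ka)_i$ involve only $t$ and $\ka$, hence are fixed by $S_I$; thus $S_I$ simply acts on the coefficient vectors $Q^i(z;\ka)$. The degree bookkeeping — that the minimal index in $(t;\ka)_{ak-1}(t;\ka)_j$ is $\ge ak-1$ — is an immediate consequence of \eqref{prod}, so the argument is essentially a one-line consequence of Lemma \ref{lem S_I} once these expansions are set up.
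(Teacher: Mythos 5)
Your proposal is correct and follows essentially the same route as the paper: apply Lemma \ref{lem S_I}, use the product formula \eqref{prod} to see that $(t;\ka)_{ak-1}P_j(t,z)$ expands only in Pochhammer polynomials $(t;\ka)_i$ with $i\ge ak-1$, and conclude that the coefficient of $(t;\ka)_{\ell p-1}$ vanishes since $\ell p-1<ak-1$. Your extra remark that the substitution $S_I$ commutes with the Pochhammer-basis expansion is a fine point the paper leaves implicit, but it is the same argument.
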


\begin{proof}
 For $j=1,\dots,n$, the $j$-th coordinate of $Q(t,z;\ka)_{S_I}$ equals
$(t;\ka)_{ak-1} P_j(t,z)$. Using~\eqref{prod}, we rewrite this as
$\sum_{i\geq ak-1} c_i(z) (t;\ka)_i$ for suitable $c_i(z)$.
We observe that $(t;\ka)_{\ell p-1}$ does not enter this sum.
 This proves the corollary.
\end{proof}

\subsection{Proof of Theorem \ref{thm orth}}\label{sec proof}

Denote
\[
G_{\ell,m}(z;\ka) = S\big(Q^{m p-1}(-z;-\ka), Q^{\ell p-1}(z;\ka)\big).
\]
 Then
\begin{align*}
G_{\ell,m}(-z;\ka)
={}&
 S\big(Q^{m p-1}(z;-\ka), Q^{\ell p-1}(-z;\ka)\big)\\
={}&
S\big(Q^{\ell p-1}(-z;\ka), Q^{m p-1}(z;-\ka)\big) = G_{m,\ell}(z;-\ka).
\end{align*}
Hence, $G_{\ell,m}(z;\ka)=0$
for all $\ell$, $m$ if and only if
$G_{m,\ell}(z;-\ka)=0$.

By Lemma \ref{lem inv},
the function $G_{\ell,m}(z;\ka)$ is a polynomial in
$\F_p[z_1^p-z_1,\dots, z_n^p-z_n]$.
Denote $h(x) = x^p-x$.

\begin{lem}\label{lem deg}
Given $\ell$ and $m$, we have
\[
G_{\ell,m}(z;\ka) = c_0 +
\sum_{b=1}^{n-\ell-m}
\sum_{1\leq j_1<\dots < j_b\leq n} c_{j_1,\dots,j_b}
h(z_{j_1})\cdots h(z_{j_b}), \qquad c_0, c_{j_1,\dots,j_b}\in
\F_p .
\]
\end{lem}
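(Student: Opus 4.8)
The plan is to pin down the degree of $G_{\ell,m}(z;\ka)$ in two complementary ways --- its total degree and its degree in each separate variable $z_j$ --- and then feed these bounds into the fact, established just before the lemma, that $G_{\ell,m}(z;\ka)$ is a polynomial in the quasi-constants $h(z_1),\dots,h(z_n)$ with coefficients in $\F_p$.

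First I would bound the relevant degrees of the vectors $Q^{\ell p-1}(z;\ka)$ and $Q^{mp-1}(z;-\ka)$. Using the factorization of $Q_a(t,z;\ka)$ into $n$ Pochhammer polynomials (the Lemma following the definition of the weight functions), the total degree of $Q_a(t,z;\ka)$ in the variables $t,z$ is $nk-1$, and the only factor containing $z_j$ is one of $(t-z_j-\ka;\ka)_k$, $(t-z_a-\ka;\ka)_{k-1}$, $(t-z_j;\ka)_k$, whence $\deg_{z_j}Q_a(t,z;\ka)\le k$. By \eqref{PS2}, the transition between the bases $\{t^i\}$ and $\{(t;\ka)_i\}$ of $\K[z][t]$ has scalar (Stirling-number times power of $\ka$) coefficients, so each coordinate $Q^i_a(z;\ka)$ of the expansion \eqref{U} is a $\K$-linear combination of the $t$-coefficients of $Q_a(t,z;\ka)$; therefore $\deg_z Q^{\ell p-1}_a(z;\ka)\le nk-\ell p$ (this also follows from Corollary~\ref{cor q-kz}) and $\deg_{z_j}Q^{\ell p-1}_a(z;\ka)\le k$. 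Replacing $k$ by $p-k$ (the analogue of \eqref{C} for the step $-\ka$), I get $\deg_z Q^{mp-1}_a(z;-\ka)\le n(p-k)-mp$ and $\deg_{z_j}Q^{mp-1}_a(z;-\ka)\le p-k$; substituting $z\mapsto-z$ preserves both. Summing over the coordinate index $a$ in $G_{\ell,m}(z;\ka)=\sum_a Q^{mp-1}_a(-z;-\ka)\,Q^{\ell p-1}_a(z;\ka)$ then gives
\[
\deg_z G_{\ell,m}(z;\ka)\le (n-\ell-m)p,\qquad \deg_{z_j}G_{\ell,m}(z;\ka)\le p\quad(j=1,\dots,n).
\]

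Finally I would exploit the shape $G_{\ell,m}(z;\ka)=P\big(h(z_1),\dots,h(z_n)\big)$ with $P\in\F_p[x_1,\dots,x_n]$. Because $h(z_j)=z_j^p-z_j$ has degree $p$ in $z_j$ and degree $0$ in every other variable, $\deg_{z_j}G_{\ell,m}=p\,\deg_{x_j}P$, so $\deg_{x_j}P\le 1$ for each $j$ and $P=\sum_{J\subseteq\{1,\dots,n\}}c_J\prod_{j\in J}x_j$ is multilinear. For distinct subsets $J$ of equal size, the top-degree monomials $\prod_{j\in J}z_j^p$ of $\prod_{j\in J}h(z_j)$ are distinct, so there is no cancellation in the highest degree and $\deg_z G_{\ell,m}=p\cdot\max\{\,|J|:c_J\ne0\,\}$; the total-degree bound then forces $|J|\le n-\ell-m$ whenever $c_J\ne 0$. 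Writing $c_0=c_{\emptyset}$ and $c_{j_1,\dots,j_b}=c_{\{j_1,\dots,j_b\}}$ yields the claimed formula. The only step requiring any care is the passage from a degree bound on $Q_a(t,z;\ka)$ as a polynomial in $t$ and $z$ to the same bound on the coefficient vectors $Q^i(z;\ka)$ --- this is where \eqref{PS2} (equivalently: $(t;\ka)_i$ is monic in $t$ with scalar lower-order coefficients) enters; everything else is degree bookkeeping.
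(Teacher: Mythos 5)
Your proof is correct and follows essentially the same route as the paper: the total-degree bound $\deg G_{\ell,m}\le (n-\ell-m)p$ from $\deg Q^{\ell p-1}(z;\ka)=nk-\ell p$ and $\deg Q^{mp-1}(-z;-\ka)=n(p-k)-mp$, the per-variable bound $\deg_{z_j}G_{\ell,m}\le k+(p-k)=p$, combined with the previously established fact that $G_{\ell,m}\in\F_p[h(z_1),\dots,h(z_n)]$. You merely spell out the steps the paper leaves implicit (the passage through the Pochhammer-basis expansion via \eqref{PS2} and the no-cancellation argument giving multilinearity and the bound on the number of factors), and these details are all sound.
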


\begin{proof}
On the one hand, we have
\[
\deg Q^{\ell p-1}(z;\ka) = kn-\ell p\qquad \text{and}\qquad
\deg Q^{m p-1}(-z;-\ka) = (p-k)n-m p.
\]
 Hence
$\deg G_{\ell,m}(z;\ka) \leq (n-\ell-m)p$.

On the other hand, for any $j=1,\dots, n$, we have $\deg_{z_j}
G_{\ell,m}(z;\ka)\leq p$
since
$\deg_{z_j} Q(t,z;\ka) = k$
and
$\deg_{z_j} Q(t,-z;-\ka) = p-k$.
These two remarks prove the lemma.
\end{proof}

The $p$-hypergeometric solution $Q^{\ell p-1}(z, \ka)$ is associated with the
 integer
$0<k<p$ such that~${\ka k\equiv -1\pmod{p}}$, while
the $p$-hypergeometric solution $Q^{m p-1}(z, -\ka)$ is associated with the integer
$0<p-k<p$ such that $-\ka (p-k)\equiv -1\pmod{p}$.
 Given $m$, $\ell$, we say that $k$ is a~good parameter if
$\ell(p-k)< mk$.
We say that $p-k$ is a good parameter if $mk < \ell(p-k)$.
Notice that $ mk\ne \ell(p-k)$. Otherwise $p$ must divide $ \ell+m$ which is impossible since
$\ell + m\leq n-1<p$.

Having the two integers $k$, $p-k$ and two solutions
$Q^{\ell p-1}(z;\ka)$ and $Q^{m p-1}(z;-\ka)$, we may and will assume that $k$ denotes the good parameter.

\begin{lem}\label{lem fin}
Given $Q^{\ell p-1}(z;\ka)$ and $Q^{m p-1}(z;-\ka)$, assume that $k$ is a good parameter.
Then $G_{\ell,m}(z;\ka)= 0$.
\end{lem}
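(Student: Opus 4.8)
The plan is to combine the degree bound of Lemma~\ref{lem deg} with the vanishing of the special restrictions from Corollary~\ref{cor restr}; the ``good parameter'' hypothesis is precisely what makes these two inputs interlock. By Lemma~\ref{lem inv} and Lemma~\ref{lem deg}, $G_{\ell,m}(z;\ka)$ is a polynomial with $\F_p$-coefficients in the quantities $h(z_1),\dots,h(z_n)$, $h(x)=x^p-x$, of total degree at most $n-\ell-m$ in them. Since $z_1^p-z_1,\dots,z_n^p-z_n$ are algebraically independent over $\F_p$, I can write $G_{\ell,m}(z;\ka)=\tilde G\big(h(z_1),\dots,h(z_n)\big)$ for a polynomial $\tilde G\in\F_p[y_1,\dots,y_n]$ with $\deg\tilde G\le n-\ell-m$, and it suffices to show $\tilde G=0$.

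Set $a:=\ell+m$. Goodness of $k$ means $\ell(p-k)<mk$, hence $ak=(\ell+m)k>\ell p$; moreover $1\le a\le d(\ka)+d(-\ka)=n-1<n$ since $p\nmid n$. For any $a$-element subset $I=\{i_1<\dots<i_a\}\subset\{1,\dots,n\}$ consider the special restriction $S_I$ from \eqref{S_I}. Since $\ell p<ak=|I|\,k$, Corollary~\ref{cor restr} gives $Q^{\ell p-1}(z;\ka)_{S_I}=0$, hence
\[
G_{\ell,m}(z;\ka)_{S_I}=\sum_{b=1}^n Q^{mp-1}_b(-z;-\ka)_{S_I}\,Q^{\ell p-1}_b(z;\ka)_{S_I}=0 .
\]
On the other hand, under $S_I$ each $z_{i_b}$ becomes $((b-1)k-1)\ka\in\F_p$, so $h(z_{i_b})=z_{i_b}^p-z_{i_b}=0$, while $h(z_j)$ is untouched for $j\notin I$; since the $z_j^p-z_j$, $j\notin I$, remain algebraically independent over $\F_p$, the vanishing of $G_{\ell,m}(z;\ka)_{S_I}$ forces $\tilde G$ to vanish upon setting $y_i=0$ for $i\in I$, i.e.\ $\tilde G\in(y_i\mid i\in I)$.

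Intersecting over all $a$-element subsets $I$, any monomial of $\tilde G$ must meet every $a$-element subset of $\{1,\dots,n\}$, so its support --- and hence its degree --- exceeds $n-a=n-\ell-m$; as $\deg\tilde G\le n-\ell-m$, this leaves only $\tilde G=0$, i.e.\ $G_{\ell,m}(z;\ka)=0$. The delicate point, and the reason the ``good parameter'' hypothesis enters, is the matching of the two bounds: the restriction argument kills $G_{\ell,m}$ along $S_I$ only when $|I|\,k>\ell p$, and to beat the degree bound $n-\ell-m$ one needs such an $I$ of size at most $\ell+m$ --- which exists precisely when $(\ell+m)k>\ell p$, that is, when $k$ is good. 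I expect the bookkeeping of these inequalities, rather than any single algebraic step, to be the main obstacle.
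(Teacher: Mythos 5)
Your proposal is correct and follows essentially the paper's own route: the degree/multilinearity bound of Lemma~\ref{lem deg} combined with the vanishing of $Q^{\ell p-1}(z;\ka)_{S_I}$ from Corollary~\ref{cor restr} for all $(\ell+m)$-element subsets $I$ (valid exactly because goodness gives $\ell p<(\ell+m)k$), then the observation that $h$ vanishes on the restricted variables forces all coefficients to be zero. Your final combinatorial step (every monomial must meet every $(\ell+m)$-subset, contradicting the degree bound) is just a slight rephrasing of the paper's conclusion that every coefficient is indexed by a subset of some complement $J$ and hence vanishes.
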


\begin{proof}
Let $\{1,\dots,n\}= I\cup J$ be a partition where
$|I|=\ell+m$ and $|J|=n-\ell-m$.
We may apply formula \eqref{sol res} to
$Q^{\ell p-1}(z;\ka)$ with $a=\ell + m$ since $k$ is a good parameter and hence
$\ell p < (\ell+m)k$. Hence we have $G_{\ell,m}(z;\ka)_I = 0$, where
\begin{gather}
\label{SQQI}
G_{\ell,m}(z;\ka)_I = c_0 +
\sum_{b=1}^{n-\ell-m}
\sum_{ \{j_1<\dots < j_b\}\subset J} c_{j_1,\dots,j_b}
h(z_{j_1})\dots h(z_{j_b}).
\end{gather}
Hence the right-hand side polynomial at \eqref{SQQI} is the zero polynomial for any subset
$I$ with $|I|=\ell + m$. Therefore, $G_{\ell,m}(z;\ka)$ is the zero polynomial.
\end{proof}

Lemma \ref{lem fin} implies Theorem \ref{thm orth}.

\section{Invariant subbundles}

\subsection{Subbundle of qKZ connection}

Assume that $p>n$ and $\ka\in\F_p^\times$.
Consider the discrete qKZ connection
on the direct product~${V\times \K^n \to \K^n}$, where
$
V=\{ x\in \K^{n}\mid x_1+\dots + x_n=0\}
$
and the qKZ operators are defined by formula \eqref{K},
\begin{align*}
&K_a(z;\ka) =
R^{(a,a-1)}(z_a-z_{a-1}-\ka) \cdots R^{(a,1)}(z_a-z_1-\ka)
\\
&\hphantom{K_a(z;\ka) =}{} \times
R^{(a,n)}(z_a-z_n) \cdots R^{(a,a+1)}(z_a\<-z_{a+1}) .
\end{align*}
The connection has singularities at the points of $\K^n$
where the qKZ operators have poles or are degenerate.

The $R$-matrix
$R(u) = \frac{u-P}{u-1}$ has a pole if $u=1$ and is degenerate if
$u=-1$.
Denote by $H_{i,j,m}$ the affine hyperplane in $\K^n$ defined by the equation
$z_i-z_j-m=0$ where $1\leq i<j\leq n$, $m\in\F_p$.
 Let $\bar{\mc A}^\circ$ be the arrangement in $\K^n$
of all hyperplanes $H_{i,j,m}$. Let
$\bar{\mc A}=\K^n-\bar{\mc A}^\circ$ denote its complement.
The qKZ operators are well-defined over $\bar{\mc A}$ and are nondegenerate.

Assume that $0<d(\ka)<n-1$.
Then the $p$-hypergeometric solutions
$Q^{\ell p-1}(z;\ka)$, $
\ell = 1, \dots, d(\ka)$,
 define flat sections of the qKZ connection which we call the $p$-hypergeometric sections.

 Recall the minors $M_I(z;\ka)$
 defined for any $I=\{1\leq i_1<\dots < i_{d(\ka)} \leq n\}$ in Section \ref{sec lt}
 with the help of these $p$-hypergeometric sections.
 Denote by $\mc A(\ka)$ the Zariski open subset of $\bar{\mc A}$ consisting of points
$b\in \bar{\mc A}$ such that at least one of the minors $M_I(z;\ka)$ is nonzero at $b$.

For any point $b \in \mc A(\ka)$, the vectors
$Q^{\ell p-1}(b,\ka)$, $\ell = 1, \dots, d(\ka)$, are linearly independent
and span a $d(\ka)$-dimensional $\K$-vector subspace
$\mc S(b,\ka)$ of $V$. These subspaces $\mc S(b,\ka)$, $b\in\mc A(\ka)$, form a
vector subbundle $\mc S(\ka) \to \mc A(\ka)$ of the trivial bundle
 $V\times \mc A(\ka)\to \mc A(\ka)$.

 \begin{rem}
 Notice that the minors $M_I(z;\ka)$ are polynomials in $z$ with coefficients in $\F_p$ and are independent of the field $\K$.
 Notice also that the base $\mc A(\ka)$ is invariant with respect to the affine translations,
$
 (z_1,\dots,z_n) \mapsto (z_1,\dots,z_a-\ka,\dots, z_n)$, $ a=1,\dots,n$.
 \end{rem}

 The subbundle $\mc S(\ka) \to \mc A(\ka)$ is invariant under the qKZ connection,
 and the $p$-hypergeometric sections form a flat basis of the space of its sections.

 We also consider the quotient bundle $\mc Q(\ka)\to \mc A(\ka)$ with fibers
$V/\mc S(b,\ka)$. The qKZ connection on $V\times \mc A(\ka)\to\mc A(\ka)$ induces
a discrete connection on $\mc Q(\ka)\to \mc A(\ka)$ which we also call the qKZ connection.
Notice that the rank of $\mc Q(\ka)\to \mc A(\ka)$ equals
\[
\dim V- \on{rank} \mc S(\ka)=n-1 - \left[\frac{n{k}}p\right] = \left[\frac{n{(p-k)}}p\right] = d(-\ka).
\]

 If $d(\ka)=0$, we define $\mc A(\ka) = \bar{\mc A}$. In this case, we define $\mc S(\ka) \to \mc A(\ka)$
 to be the rank 0 subbundle of $V \times \mc A(\ka)$ and also define $\mc Q(\ka)\to \mc A(\ka)$ to be
$V \times \mc A(\ka)$.

 \subsection{Subbundle of dual qKZ connection} \label{sec 7.2}

 Assume that $p>n$ and $\ka\in \F_p^\times$.
 Consider the dual discrete qKZ connection on $V\times \K^n \to \K^n$ defined by formulas \eqref{Kid} and \eqref{pm}.
 The dual qKZ operators $ K_a(-z_1,\dots,-z_n;-\ka)$, $a=1,\dots,n$,
 are well-defined over $\bar{\mc A}$ and are nondegenerate.

Assume that $0<d(-\ka)<n-1$. This assumption is equivalent to the assumption $0<d(\ka)<n-1$.
Consider the $p$-hypergeometric solutions
$Q^{m p-1}(z;-\ka)$,
$m = 1, \dots, d(-\ka)$, of the qKZ equations step $-\ka$. Then the $V$-valued polynomials
$
Q^{m p-1}(-z;-\ka)$, $
m = 1, \dots, d(-\ka)$,
 define flat sections of the dual qKZ connection which we also call the $p$-hypergeometric sections.

 Recall the minors $M_I(z;-\ka)$
 defined for any $I=\{1\leq i_1<\dots < i_{d(-\ka)} \leq n\}$ in Section~\ref{sec lt}
 with the help of the $p$-hypergeometric solutions $Q^{m p-1}(z;-\ka)$,
$m = 1, \dots, d(-\ka)$.
 Denote by~$\mc B(\ka)$ the Zariski open subset of $\bar{\mc A}$ consisting of points
$b\in \bar{\mc A}$ such that at least one of the polynomials $M_I(-z;-\ka)$ is nonzero at $b$.

For any point $b \in \mc B(\ka)$, the vectors
$Q^{m p-1}(-b,-\ka)$, $m = 1, \dots, d(-\ka)$,
are linearly inde\-pendent
and span a $d(-\ka)$-dimensional $\K$-vector sub\-space
$\mc S^*(b,-\ka)$ of space~$V$. These subspaces $\mc S^*(b,-\ka)$, $b\in\mc B(\ka)$, form a
vector subbundle $\mc S^*(\ka) \to \mc B(\ka)$ of the trivial bundle
 ${V\times \mc B(\ka)\to \mc B(\ka)}$.

The subbundle $\mc S^*(\ka) \to \mc B(\ka)$ is invariant with respect to the dual qKZ connection,
 and the $p$-hypergeometric sections
 $Q^{m p-1}(-z;-\ka)$, $m = 1, \dots, d(-\ka)$,
 form a flat basis of the space of its sections.

Consider the restriction of the bundles
$\mc S^*(\ka) \to \mc B(\ka)$ and
$\mc Q(\ka)\to \mc A(\ka)$ to
$\mc A(\ka) \cap \mc B(\ka)$. For any $b \in \mc A(\ka) \cap \mc B(\ka)$, the Shapovalov form
defines a nondegenerate pairing
\[S\colon\ \mc S^*(b,-\ka)\ox V/\mc S(b,\ka) \to \K
\]
of the fibers of these bundles, by Theorem \ref{thm orth}. The discrete connections on
$\mc S^*(\ka) \to \mc B(\ka)$ and~${\mc Q(\ka)\to \mc A(\ka)}$ are dual with respect to the Shapovalov form, that is, for any
$u\in \mc S^*(b,-\ka)$, $v\in V/\mc S(b,\ka)$, and $a=1,\dots,n$,
we have
$
S(u,v) = S( K_a(-b;-\ka)u, K_a(b ;\ka)v)$.

Define $p$-quasi-hypergeometric sections $T^\ell(z;\ka)$,
$\ell =1,\dots, d(-\ka)$, of the bundle $\mc Q(\ka)\to \mc A(\ka)$ over
$\mc A(\ka) \cap \mc B(\ka)$ by the formulas
\[
S\big(Q^{m p-1}(-b,-\ka), T^\ell(b,\ka)\big) = \dl_{\ell, m},
\qquad m=1,\dots, d(-\ka).
\]

 \begin{lem} \label{lem 6.1}
Assume that $p>n$ and
$0 < d(-\ka)\leq n-1$.
Then $p$-quasi-hypergeometric sections~${T^\ell(z;\ka)}$,
$\ell =1,\dots, d(-\ka)$, of the quotient bundle $\mc Q(\ka)\to \mc A(\ka) \cap \mc B(\ka)$
 form a flat basis of the space of sections of that bundle.
 \end{lem}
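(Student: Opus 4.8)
The plan is to show two things: first, that the sections $T^\ell(z;\ka)$, $\ell=1,\dots,d(-\ka)$, are flat for the induced qKZ connection on $\mc Q(\ka)\to\mc A(\ka)\cap\mc B(\ka)$; and second, that at each point $b$ they form a basis of the fiber $V/\mc S(b,\ka)$. The second point is the easy one: by Theorem \ref{thm orth}, the Shapovalov form induces a nondegenerate pairing $S\colon \mc S^*(b,-\ka)\ox V/\mc S(b,\ka)\to\K$, and since $\{Q^{mp-1}(-b,-\ka)\}_{m=1}^{d(-\ka)}$ is a basis of $\mc S^*(b,-\ka)$ by construction, the dual basis $\{T^\ell(b,\ka)\}_{\ell=1}^{d(-\ka)}$ of $V/\mc S(b,\ka)$ exists, is unique, and is a basis for every $b\in\mc A(\ka)\cap\mc B(\ka)$; this also shows $T^\ell(z;\ka)$ is a well-defined (rational, regular on the stated base) section.

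For flatness, I would argue as follows. Fix $a\in\{1,\dots,n\}$ and write $\tilde b = (b_1,\dots,b_a-\ka,\dots,b_n)$. We must check that $K_a(b;\ka)T^\ell(b,\ka)$ equals $T^\ell(\tilde b,\ka)$ in $V/\mc S(\tilde b,\ka)$. Since the dual basis characterization is by the pairings with the $Q^{mp-1}(-z;-\ka)$, it suffices to compute $S\big(Q^{mp-1}(-\tilde b,-\ka),\,K_a(b;\ka)T^\ell(b,\ka)\big)$ for each $m$ and show it equals $\dl_{\ell,m}$. Now apply the duality of the two connections with respect to the Shapovalov form, recorded just before the lemma: for $u\in\mc S^*(b,-\ka)$ and $v\in V/\mc S(b,\ka)$ one has $S(u,v)=S\big(K_a(-b;-\ka)u,\,K_a(b;\ka)v\big)$. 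Equivalently, $S\big(K_a(-b;-\ka)u,\,K_a(b;\ka)v\big)$ with $u=Q^{mp-1}(-b,-\ka)$ becomes, using that $Q^{mp-1}(-z;-\ka)$ is a flat section of the dual qKZ connection (the operators $K_a(-z;-\ka)$ being the dual qKZ operators), $K_a(-b;-\ka)Q^{mp-1}(-b,-\ka)=Q^{mp-1}(-\tilde b,-\ka)$. Hence
\[
S\big(Q^{mp-1}(-\tilde b,-\ka),\,K_a(b;\ka)T^\ell(b,\ka)\big)
= S\big(Q^{mp-1}(-b,-\ka),\,T^\ell(b,\ka)\big)=\dl_{\ell,m},
\]
which is exactly the defining property of $T^\ell(\tilde b,\ka)$. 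By uniqueness of the dual basis, $K_a(b;\ka)T^\ell(b,\ka)=T^\ell(\tilde b,\ka)$ in $V/\mc S(\tilde b,\ka)$, so $T^\ell$ is flat. Since this holds for every $a$, and the $T^\ell(b,\ka)$ are a basis of each fiber, they form a flat basis of the space of sections of $\mc Q(\ka)\to\mc A(\ka)\cap\mc B(\ka)$.

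The only genuinely delicate point is bookkeeping with the passage $z\mapsto -z$ and $\ka\mapsto-\ka$ when invoking the flatness of $Q^{mp-1}(-z;-\ka)$ as a section of the dual connection: one must be careful that the translation in the variable $z_a$ on the dual side matches the translation on the primal side under the identification used to define the pairing, and that the shift $-(z_a-\ka)=-z_a+\ka$ is the correct argument for the dual qKZ operator $K_a(-z;-\ka)$. This is precisely the content of Lemma \ref{lem inv} together with the formula $\tilde K_a(z;\ka)=K_a(-z;-\ka)$, so no new work is needed; it is just a matter of threading the signs correctly. Everything else is formal linear algebra with the nondegenerate pairing from Theorem \ref{thm orth}.
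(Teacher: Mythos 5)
Your proposal is correct and is essentially the argument the paper intends: the paper dismisses the proof as ``straightforward,'' and your dual-basis computation using the Shapovalov duality $S(u,v)=S\bigl(K_a(-b;-\ka)u,\,K_a(b;\ka)v\bigr)$, the flatness of $Q^{mp-1}(-z;-\ka)$ for the dual connection, and the nondegenerate pairing from Theorem \ref{thm orth} is exactly the natural fleshing-out of that claim. The only small points left implicit (also glossed in the paper) are that the base $\mc A(\ka)\cap\mc B(\ka)$ is preserved by the shift $z_a\mapsto z_a-\ka$, which follows since the invertible operators $K_a$ carry the spanning flat sections to the shifted point, and that in the edge case $d(\ka)=0$ the nondegeneracy of the pairing is just that of $S$ on $V$ (valid since $p>n$) rather than an appeal to the orthogonality relations.
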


\begin{proof}
The proof is straightforward.
\end{proof}

If $d(-\ka)=0$, we define $\mc B(\ka) = \bar{\mc A}$. We also define $\mc S^*(\ka) \to \mc B(\ka)$
 to be the rank 0 subbundle of $V \times \mc B(\ka)$.

\begin{exmp}
Let $n=2$. Then $V$ is of dimension 1.
For $p=5$, $k=\ka=3$, we have $d(3) =1$, and the qKZ connection has a flat basis given by the $p$-hypergeometric solution
\[
Q^{4}(z_1,z_2)= (-2z_1 + 2z_2 +2, 2z_1 - 2z_2 - 2).
\]
For $p=5$. $k=\ka=2$, we have $d(2)=0$, and the qKZ connection has a flat basis given by the $p$-quasi-hypergeometric solution
\[
T^1(z_1,z_2, 2) = \left(\frac 3{2z_1 - 2z_2 +2}, \frac3{-2z_1 + 2z_2 - 2}\right) .
\]
\end{exmp}

\subsection[Reduced p-curvature operators]{Reduced $\boldsymbol{ p}$-curvature operators}

Let $\hat C_a(z;\ka)$, $a=1,\dots,n$, be the reduced $p$-curvature operators of the qKZ discrete connection
on $V\times \K^n \to \K^n$, see \eqref{red p-cu}.

\begin{thm}\label{thm p-cu}
If $p>n$, \smash{$\ka\in\F_p^\times$}, and $0< d(\ka) < n-1$,
then the span of the $p$-hypergeo\-metric sections
$Q^{\ell p-1}(z;\ka)$, $\ell = 1, \dots, d(\ka)$, lies in the kernel of~$\hat C_a(z,\ka)$ and contains the image of~$\hat C_a(z;\ka)$
for every $ a=1,\dots,n$.

If $d(\ka) =p-1$ or $0$, then all reduced $p$-curvature operators equal zero.
\end{thm}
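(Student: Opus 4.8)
The plan is to settle the generic range $0<d(\ka)<n-1$ first, proving the two inclusions $\mc S(b,\ka)\subseteq\ker\hat C_a(b;\ka)$ and $\on{im}\hat C_a(b;\ka)\subseteq\mc S(b,\ka)$ separately, and then to deduce the boundary statement with the same tools. (Note that for $p>n$ the value $d(\ka)=p-1$ cannot occur, so the second assertion is really about $d(\ka)=n-1=\dim V$ and $d(\ka)=0$.) The inclusion into the kernel is immediate: by Theorem \ref{thm MV} each $p$-hypergeometric section $Q^{\ell p-1}(z;\ka)$ is a flat section of the qKZ connection with step $\ka$, so by \eqref{fCf} it satisfies $C_a(z;\ka)Q^{\ell p-1}(z;\ka)=Q^{\ell p-1}(z;\ka)$, i.e.\ $\hat C_a(z;\ka)Q^{\ell p-1}(z;\ka)=0$; since these sections span $\mc S(b,\ka)$ for $b\in\mc A(\ka)$, this half is done.

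For $\on{im}\hat C_a\subseteq\mc S$ I would combine the orthogonality relations of Theorem \ref{thm orth} with a $p$-curvature analogue of the pairing identity \eqref{pm}. Writing $C_a(z;\ka)$ and $C_a(-z;-\ka)$ as the corresponding products of $p$ shifted $R$-matrix factors, one iterates \eqref{pm} $p$ times, matching the $j$-th factor of $C_a(-z;-\ka)$ with the $j$-th factor of $C_a(z;\ka)$, to obtain
\[
S\big(C_a(-z;-\ka)x,\,C_a(z;\ka)y\big)=S(x,y)\qquad\text{for all }x,y\in V .
\]
Substituting $C_a=1+\hat C_a$ on both sides and cancelling $S(x,y)$ gives
\[
S\big(\hat C_a(-z;-\ka)x,y\big)+S\big(x,\hat C_a(z;\ka)y\big)+S\big(\hat C_a(-z;-\ka)x,\hat C_a(z;\ka)y\big)=0 .
\]
Now specialize $x=Q^{m p-1}(-z;-\ka)$: this is a flat section of the qKZ connection with step $-\ka$ (Corollary \ref{cor -ka}) evaluated at $-z$, so $\hat C_a(-z;-\ka)Q^{m p-1}(-z;-\ka)=0$ by \eqref{fCf} applied with step $-\ka$, and the displayed relation collapses to $S\big(Q^{m p-1}(-z;-\ka),\hat C_a(z;\ka)y\big)=0$ for every $y\in V$ and every $m=1,\dots,d(-\ka)$. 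Since $\hat C_a(z;\ka)$ annihilates $\mc S(b,\ka)$, the class of $\hat C_a(z;\ka)y$ in $V/\mc S(b,\ka)$ pairs to zero with all of $\mc S^*(b,-\ka)$; but the Shapovalov pairing $S\colon\mc S^*(b,-\ka)\otimes V/\mc S(b,\ka)\to\K$ is nondegenerate (Section \ref{sec 7.2}), so $\hat C_a(z;\ka)y\in\mc S(b,\ka)$. As $y$ is arbitrary, $\on{im}\hat C_a(z;\ka)\subseteq\mc S(b,\ka)$ for every $b\in\mc A(\ka)\cap\mc B(\ka)$, and since this containment amounts to the vanishing of certain minors, which are rational functions regular on $\mc A(\ka)$, it extends over all of $\mc A(\ka)$.

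For the boundary cases: if $d(\ka)=n-1=\dim V$, then by Theorem \ref{thm ind} the $n-1$ flat sections $Q^{\ell p-1}(z;\ka)$ are linearly independent over $\K(z)$, hence span $V$ generically, and being annihilated by $\hat C_a(z;\ka)$ via \eqref{fCf} they force $\hat C_a(z;\ka)=0$ generically, hence identically. If $d(\ka)=0$, then $d(-\ka)=n-1$, so the case just treated, applied with $\ka$ replaced by $-\ka$, gives $\hat C_a(z;-\ka)=0$, i.e.\ $C_a(z;-\ka)\equiv 1$, whence $C_a(-z;-\ka)\equiv 1$; inserting this into $S\big(C_a(-z;-\ka)x,C_a(z;\ka)y\big)=S(x,y)$ yields $S(x,C_a(z;\ka)y)=S(x,y)$ for all $x,y\in V$, and nondegeneracy of $S$ on $V$ (which holds since $p>n$, so $p$ does not divide $n$) forces $C_a(z;\ka)\equiv 1$, i.e.\ $\hat C_a(z;\ka)=0$.

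The main obstacle is the inclusion $\on{im}\hat C_a\subseteq\mc S$, and within it the derivation of $S\big(C_a(-z;-\ka)x,C_a(z;\ka)y\big)=S(x,y)$: one must check that the single-step identity \eqref{pm} telescopes correctly through the $p$ shifted $R$-matrix products that make up $C_a$, i.e.\ that the $j$-th factors line up at the point $z^{(j)}$ obtained by shifting $z_a$ by $-j\ka$. The remaining ingredients — identifying the obstruction with the nondegenerate Shapovalov pairing between $\mc S^*(b,-\ka)$ and $V/\mc S(b,\ka)$ provided by Theorem \ref{thm orth}, and spreading the generic statement to all of $\mc A(\ka)$ — are routine.
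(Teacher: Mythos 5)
Your proposal is correct, and its key step is organized differently from the paper's. For the containment $\on{im}\hat C_a(z;\ka)\subseteq\mc S(z,\ka)$ the paper does not use the adjointness of the $p$-curvatures at all: it observes that $\hat C_a$ descends to the quotient bundle $\mc Q(\ka)\to\mc A(\ka)$, where it is the reduced $p$-curvature of the induced connection, and then kills it by invoking the flat basis of $p$-quasi-hypergeometric sections $T^\ell$ of Lemma \ref{lem 6.1}, so that \eqref{fCf} applied to that basis forces the induced operator to vanish. You instead telescope \eqref{pm} through the $p$ shifted factors to get $S\big(C_a(-z;-\ka)x,C_a(z;\ka)y\big)=S(x,y)$ (your alignment of the $j$-th factors at $z^{(j)}=(z_1,\dots,z_a-j\ka,\dots,z_n)$ is correct, since the $j$-th factor of $C_a(-z;-\ka)$ is $K_a(-z^{(j)};-\ka)$), specialize $x=Q^{mp-1}(-z;-\ka)$, and conclude that $\hat C_a(z;\ka)y$ is $S$-orthogonal to the span of the dual sections; Theorem \ref{thm orth}, nondegeneracy of $S$ (here $p>n$) and a dimension count then place it in $\mc S(z,\ka)$, with the passage from $\mc A(\ka)\cap\mc B(\ka)$ to an identity of rational functions being routine. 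In effect you prove the first identity in the proof of Lemma \ref{lem c=c} up front and bypass the quotient-bundle formalism and the sections $T^\ell$ entirely, which also lets you handle $d(\ka)=0$ by applying the $d=n-1$ case to $-\ka$ and dualizing, where the paper again uses the $T^\ell$. What the paper's route buys is a uniform bundle-theoretic picture (the quotient connection is literally trivialized by flat sections, which is reused elsewhere); what yours buys is independence from Lemma \ref{lem 6.1} and a more self-contained linear-algebra argument, at the cost of having to verify the telescoped pairing identity, which the paper only needs later. Two cosmetic points: your remark that the stated case $d(\ka)=p-1$ should read $d(\ka)=n-1=\dim V$ agrees with the introduction and with the paper's own proof; and the clause ``since $\hat C_a(z;\ka)$ annihilates $\mc S(b,\ka)$'' is not what makes the quotient pairing applicable --- well-definedness of $S\colon\mc S^*(b,-\ka)\ox V/\mc S(b,\ka)\to\K$ comes from Theorem \ref{thm orth} --- but this does not affect the argument, since all you use is $S\big(Q^{mp-1}(-z;-\ka),\hat C_a(z;\ka)y\big)=0$ together with $\big(\mc S^*(b,-\ka)\big)^\perp=\mc S(b,\ka)$.
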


\begin{cor}\label{cor p-cu}
We have
$
\hat C_a(z;\ka) \hat C_b(z;\ka) = 0$, $ a,b = 1,\dots, n$.
\end{cor}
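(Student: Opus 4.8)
The plan is to read the corollary straight off Theorem~\ref{thm p-cu}, which already isolates the two relevant regimes of $d(\ka)$ and carries the standing hypotheses $p>n$, $\ka\in\F_p^\times$. First I would dispose of the boundary cases $d(\ka)=0$ and $d(\ka)=n-1$: here the last sentence of Theorem~\ref{thm p-cu} gives that every reduced $p$-curvature operator $\hat C_a(z;\ka)$ is identically zero, so $\hat C_a(z;\ka)\hat C_b(z;\ka)=0$ for all $a,b$ with nothing further to check.

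In the main case $0<d(\ka)<n-1$, let $\Sc(z;\ka)\subset V$ denote the span of the $p$-hypergeometric sections $Q^{\ell p-1}(z;\ka)$, $\ell=1,\dots,d(\ka)$; this is a well-defined $d(\ka)$-dimensional subspace at every point of the Zariski-open set $\mc A(\ka)$. By Theorem~\ref{thm p-cu}, for each $b$ we have $\on{Im}\hat C_b(z;\ka)\subseteq\Sc(z;\ka)$ at every $z\in\mc A(\ka)$, while for each $a$ we have $\Sc(z;\ka)\subseteq\ker\hat C_a(z;\ka)$ at every $z\in\mc A(\ka)$. Composing these inclusions yields $\hat C_a(z;\ka)\hat C_b(z;\ka)v=0$ for all $v\in V$ and all $z\in\mc A(\ka)$, i.e.\ the operator-valued function $\hat C_a(z;\ka)\hat C_b(z;\ka)$ vanishes on $\mc A(\ka)$.

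To conclude I would observe that each matrix entry of $\hat C_a(z;\ka)\hat C_b(z;\ka)$ is a rational function of $z$ with denominator dividing $D_a(z;\ka)D_b(z;\ka)$, and a rational function vanishing on the Zariski-dense subset $\mc A(\ka)\subseteq\K^n$ is identically zero; hence $\hat C_a(z;\ka)\hat C_b(z;\ka)=0$ for all $a,b\in\{1,\dots,n\}$. There is no real obstacle here — all the content sits in Theorem~\ref{thm p-cu} — and the only point deserving a word of care is that the kernel and image inclusions are stated only fiberwise over $\mc A(\ka)$, so one must pass from pointwise vanishing there to an identity of rational operators by Zariski density.
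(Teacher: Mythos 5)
Your argument is exactly the paper's intended one: Corollary~\ref{cor p-cu} is read off directly from Theorem~\ref{thm p-cu}, since the image of $\hat C_b(z;\ka)$ lies in the span of the $p$-hypergeometric sections, which in turn lies in the kernel of $\hat C_a(z;\ka)$, with the boundary cases $d(\ka)=0$ or $n-1$ handled by the vanishing of all reduced $p$-curvature operators. Your extra remark on passing from the Zariski-open set $\mc A(\ka)$ to an identity of rational operators is a correct (and harmless) elaboration of what the paper leaves implicit.
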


\begin{proof}[Proof of Theorem \ref{thm p-cu}.]
The span lies in the kernel of $\hat C_a(z;\ka)$ by formula \eqref{fCf}.

The operator $\hat C_a(z;\ka)$ annihilates the span, hence $\hat C_a(z;\ka)$
induces a well-defined operator on the fibers of the quotient bundle
$\mc Q(\ka)\to \mc A(\ka)$. This induced operator is the $a$-th
reduced $p$-curvature operator of the qKZ connection on the quotient bundle.
The quotient bundle has a~flat basis of $p$-quasi-hypergeometric sections over the Zariski open subset
$\mc A(\ka) \cap \mc B(\ka)$. Hence all
reduced $p$-curvature operators of the qKZ connection on the quotient bundle are zero. Therefore the image of
$\hat C_a(z;\ka)$ is contained in the span.

If $d(\ka)=p-1$, then $\mc S(\ka) \to \mc A(\ka)$ coincides with
 $V\times \mc A(\ka)\to \mc A(\ka)$, and all reduced $p$-curvature operators
are zero by formula \eqref{fCf}.

If $d(\ka)=0$, then $p$-quasi-hypergeometric sections form a flat basis of the space of sections
of~${V\times \mc A(\ka)\to \mc A(\ka)}$, and again all reduced $p$-curvature operators
are zero by formula \eqref{fCf}.
\end{proof}

\begin{lem}\label{lem C ne 0}
If $p>n$, \smash{$\ka\in\F_p^\times$}, and $0< d(\ka) < n-1$,
 then every reduced $p$-curvature operator~${\tilde C_a(z;\ka)}$ is nonzero.
\end{lem}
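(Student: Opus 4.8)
The plan is to argue by contradiction using the limiting relation to the differential KZ equations established in Lemma \ref{lem C li C} together with Corollary \ref{cor deg C}. Suppose that $\tilde C_a(z;\ka)=0$ for some $a$; equivalently, the reduced $p$-curvature operator $\hat C_a(z;\ka)$ vanishes identically. By Lemma \ref{lem C li C}, the KZ $p$-curvature operator $C_a^{\on{KZ}}(z;\ka)$ is obtained as the limit $\lim_{\al\to 0}\hat C_a(z_1/\al,\dots,z_n/\al;\ka)/\al^p$, so if $\hat C_a(z;\ka)\equiv 0$ then $C_a^{\on{KZ}}(z;\ka)\equiv 0$ as well, i.e.\ the $a$-th KZ $p$-curvature operator on $V\times\K^n\to\K^n$ vanishes. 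The key point is then that the corresponding statement is already known to fail for the differential KZ equations in the regime $0<d(\ka)<n-1$: by the results of \cite{VV1}, in this range the KZ $p$-curvature is nonzero (its image is the span of the $p$-hypergeometric KZ solutions, which is a proper nonzero subspace of $V$). This contradiction forces $\tilde C_a(z;\ka)\neq 0$.

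Concretely, the first step is to translate the hypothesis $\tilde C_a(z;\ka)=0$ into $\hat C_a(z;\ka)=0$, which is immediate since $\tilde C_a = D_a \hat C_a$ and $D_a$ is a nonzero polynomial. Second, apply Lemma \ref{lem C li C} (or equivalently Corollary \ref{cor deg C}, which identifies the top-degree part of $\tilde C_a$ with $\tilde C_a^{\on{KZ}}$) to conclude $\tilde C_a^{\on{KZ}}(z;\ka)=0$, hence $C_a^{\on{KZ}}(z;\ka)=0$. Third, invoke the description of the KZ $p$-curvature from \cite{VV1}: when $p>n$, $\ka\in\F_p^\times$, and $0<d(\ka)<n-1$, the operator $C_a^{\on{KZ}}(z;\ka)$ is nonzero for every $a$ — indeed the $p$-hypergeometric KZ solutions $\bar Q^{\ell p-1}(z;\ka)$, $\ell=1,\dots,d(\ka)$, span a proper subspace of $V$ that simultaneously contains the image and is contained in the kernel of each $C_a^{\on{KZ}}$, and the corresponding induced operator on the quotient is nonzero because the quotient carries no flat basis of solutions in this degenerate range. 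This yields the desired contradiction.

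The main obstacle is ensuring that the nonvanishing of $C_a^{\on{KZ}}(z;\ka)$ is genuinely available from \cite{VV1} under exactly the hypotheses $p>n$, $\ka\in\F_p^\times$, $0<d(\ka)<n-1$, and in particular that the case distinction ``$d(\ka)=0$ or $d(\ka)=n-1$ versus $0<d(\ka)<n-1$'' in \cite{VV1} matches the one used here (note $d(\ka)=p-1$ cannot occur when $p>n$ unless $n-1=p-1$, so in the relevant range $d(\ka)=n-1$ is the correct boundary case, consistent with Theorem \ref{thm p-cu}). Once that citation is pinned down, the rest is the short limiting argument above; no new computation with the $R$-matrix is needed.
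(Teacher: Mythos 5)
Your proposal is correct and follows essentially the same route as the paper: you use Corollary \ref{cor deg C} (equivalently Lemma \ref{lem C li C}) to identify the top-degree part of $\tilde C_a(z;\ka)$ with $\tilde C_a^{\on{KZ}}(z;\ka)$, and the nonvanishing of the latter is exactly what the paper cites, namely \cite[Theorem~1.13]{VV1}, which asserts that for $p>n$ and $\ka\in\F_p^\times$ the KZ reduced $p$-curvature is a nonzero operator of rank $1$ --- the citation you said needed pinning down. The only inaccuracy is your parenthetical gloss: the operator induced by $C_a^{\on{KZ}}$ on the quotient of $V$ by the span of the $\bar Q^{\ell p-1}(z;\ka)$ is zero (since the image lies in that span), so the nonvanishing comes from the rank-$1$ statement itself rather than from any nonzero induced operator on the quotient.
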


\begin{proof}Consider a normalized $p$-curvature operator
$\tilde C_a^{\on{KZ}}(z,\ka)$
of the associated differential KZ equations. In a basis of $V$, the entries of the matrix of the
operator $\tilde C_a^{\on{KZ}}(z,\ka)$ are homogeneous polynomials in $z$ of degree $(n-2)p$.
By Corollary \ref{cor deg C}, in the same basis, the entries of the matrix of the
operator $\tilde C_a(z,\ka)$ are polynomials in $z$ of degree $(n-2)p$ whose top-degree parts equal
the corresponding entries of the matrix of the operator $\tilde C_a^{\on{KZ}}(z,\ka)$.

It is proved in \cite[Theorem 1.13]{VV1} that
if $p>n$, \smash{$\ka\in\F_p^\times$}, then every reduced $p$-curvature operator
$\tilde C_a^{\on{KZ}}(z,\ka)$ is a (nonzero) operator of rank 1. Hence every reduced $p$-curvature operator~$\tilde C_a(z,\ka)$ is a nonzero operator.
\end{proof}

\begin{exmp}
For $n=3$, we have $\dim V=2$. Let $p>3$ and $d(\ka)=1$. Then $d(-\ka)=1$.
For $a=1,2,3$, the kernel of the reduced $p$-curvature operator
$\hat C_a(z,\ka)$ is generated by $Q^{p-1}(z,\ka)$
and the image
of $\hat C_a(z,\ka)$ is generated by $Q^{p-1}(z,\ka)$. Such an operator is determined
uniquely up to multiplication
by a scalar rational function in~$z$.
\end{exmp}

For an operator $F\colon V\to V$, denote by $F^* \colon V\to V$ the operator dual to $F$ under the Shapovalov form,
$S( F^*x, y) =S(x, F y)$.

\begin{lem}\label{lem c=c}We have
\begin{gather}
\label{c=c}
\hat C_a(z,-\ka)
 = -
\hat C_a(-z;\ka)^*.
\end{gather}
\end{lem}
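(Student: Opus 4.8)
The claim $\hat C_a(z,-\ka) = -\hat C_a(-z;\ka)^*$ should follow from the relation between the qKZ connection with step $-\ka$ and the dual qKZ connection with step $\ka$, together with the definition of the $p$-curvature operators as $p$-th powers of connection operators. The key identity to exploit is the one recorded just before Lemma \ref{lem inv}: $\tilde K_a(z_1,\dots,z_n;\ka) = K_a(-z_1,\dots,-z_n;-\ka)$ together with the pairing property \eqref{pm}, namely $S(K_a(-z;-\ka)x, K_a(z;\ka)y) = S(x,y)$ for all $x,y \in V$. In operator language, this says precisely that $K_a(-z;-\ka)$ is the inverse of the Shapovalov-adjoint of $K_a(z;\ka)$, i.e.\ $K_a(-z;-\ka) = \big(K_a(z;\ka)^*\big)^{-1}$, where the adjoint is taken fiberwise. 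This is exactly the statement that the step-$(-\ka)$ qKZ connection, read at the point $-z$, is the transpose-inverse of the step-$\ka$ qKZ connection at $z$.

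First I would write the connection operator $\nabla_a(z;\ka) = \Tt_a^{-1} K_a(z;\ka)$ and form $C_a(z;\ka) = \nabla_a(z;\ka)^p$ as the composition of $p$ shifted copies of $K_a$, as in the definition of $C_a$ in Section \ref{sec 2}. Using the pairing property $S(K_a(-z;-\ka)x, K_a(z;\ka)y) = S(x,y)$ applied at each of the $p$ shifted points $z_a, z_a-\ka, \dots, z_a-(p-1)\ka$ — noting that the corresponding shifted points for step $-\ka$ evaluated at $-z$ are $-z_a, -z_a+\ka, \dots$, which match up correctly — I would telescope the product to obtain $S\big(C_a(-z;-\ka)x,\, C_a(z;\ka)y\big) = S(x,y)$, i.e.\ $C_a(-z;-\ka) = \big(C_a(z;\ka)^*\big)^{-1}$. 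Here I must be careful about the direction of the composition and about which shifted arguments $\Tt_a^{-1}$ feeds into $K_a$; getting the bookkeeping of the $p$ shifts to line up between the two signs of $\ka$ is the one genuinely fiddly point.

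Then I would pass from $C_a$ to $\hat C_a = C_a - 1$. From $C_a(-z;-\ka) = \big(C_a(z;\ka)^{-1}\big)^*$ I get
\[
\hat C_a(z;-\ka) = C_a(-z;-\ka) - 1 = \big(C_a(-z;\ka)^{-1}\big)^* - 1 .
\]
Wait — I need to re-evaluate at the right point: replacing $z$ by $-z$ throughout, $C_a(-z;-\ka)$ with the substitution gives an expression in terms of $C_a(z;\ka)$; so $\hat C_a(z;-\ka) = \big(C_a(z;\ka)^{-1}\big)^* - 1 = \big(C_a(z;\ka)^{-1} - 1\big)^*$. Now here is where Corollary \ref{cor p-cu} (equivalently Theorem \ref{thm p-cu}, which gives $\hat C_a \hat C_b = 0$, hence in particular $\hat C_a^2 = 0$) enters decisively: since $C_a = 1 + \hat C_a$ with $\hat C_a^2 = 0$, we have $C_a^{-1} = 1 - \hat C_a$, so $C_a^{-1} - 1 = -\hat C_a$. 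Therefore $\hat C_a(z;-\ka) = (-\hat C_a(z;\ka))^* = -\hat C_a(z;\ka)^*$ — but evaluated at the right arguments this is $\hat C_a(z;-\ka) = -\hat C_a(-z;\ka)^*$, which is \eqref{c=c}.

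**Main obstacle.** The substantive mathematical input is the nilpotency $\hat C_a^2 = 0$ (Corollary \ref{cor p-cu}), without which inverting $C_a = 1 + \hat C_a$ would not simplify; fortunately that is already available. The main technical hurdle is therefore purely organizational: carefully matching the $p$ shifted evaluation points for step $\ka$ at $z$ against those for step $-\ka$ at $-z$ in the telescoping product, so that the pairing identity \eqref{pm} can be applied term by term to yield $C_a(-z;-\ka) = \big(C_a(z;\ka)^*\big)^{-1}$. Once that bookkeeping is done, the rest is the short algebraic manipulation above. One should also double-check that all operators involved are honestly invertible on the relevant Zariski-open set (the qKZ operators are nondegenerate over $\bar{\mc A}$, as noted in Section \ref{sec 7.2}), so that $C_a^{-1}$ makes sense as a rational operator-valued function.
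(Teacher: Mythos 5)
Your proposal is correct and follows essentially the same route as the paper: the pairing identity \eqref{per}--\eqref{pm} telescoped over the $p$ shifts gives $C_a(-z;-\ka)=\big(C_a(z;\ka)^{-1}\big)^*$, and the nilpotency $\hat C_a^2=0$ from Corollary \ref{cor p-cu} lets you write $C_a^{-1}=1-\hat C_a$ and conclude. Your intermediate bookkeeping of where $z$ versus $-z$ appears is briefly garbled, but you land on the correct evaluation $\hat C_a(z;-\ka)=-\hat C_a(-z;\ka)^*$, matching the paper's argument.
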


\begin{proof}
We have
$
S(x, y) = S( C_a(-z;-\ka) x, C_a(z;\ka) y)
$
by formulas \eqref{per} and \eqref{pm}. Hence
$
C_a(-z;-\ka) = \big(C_a(z;\ka)^{-1}\big)^*$.
We also have $(C_a(z;\ka)-1)^2=0$ by Corollary \ref{cor p-cu}. Then
\[
C_a(z;\ka)^{-1} = (1+ (C_a(z;\ka)-1))^{-1} = 1 - (C_a(z;\ka)-1)
\]
and $C_a(-z;-\ka) -1 = 1-C_a(z;\ka)^*$.
\end{proof}

\begin{cor}\label{cor c=c}
The normalized $p$-curvature operators satisfy the equation
\begin{gather}
\tilde C_a(z;-\ka)
= (-1)^n\tilde C_a(-z;\ka)^* .\label{tc=c}
\end{gather}
\end{cor}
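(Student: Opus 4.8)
The plan is to transfer the identity \eqref{c=c} for the reduced $p$-curvature operators $\hat C_a$ into the claimed identity \eqref{tc=c} for the normalized operators $\tilde C_a$ by multiplying through by the appropriate denominator polynomial and tracking how that denominator transforms under $z\mapsto -z$ and $\ka\mapsto -\ka$. Recall from Section \ref{sec 2} that $\tilde C_a(z;\ka) = D_a(z;\ka)\hat C_a(z;\ka)$, where
\[
D_a(z;\ka) = \prod_{j\ne a} \prod_{m=0}^{p-1}(z_a-z_j-m\ka -1) = \prod_{j\ne a}\big((z_a;\ka)_p - (z_j;\ka)_p + (1;\ka)_p\big),
\]
using $(t;\ka)_p = t^p-\ka^{p-1}t$ and the additivity $(t+z;\ka)_p=(t;\ka)_p+(z;\ka)_p$ noted in Section \ref{sec PP}. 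The first step is to compute $D_a(-z;-\ka)$: since $(t;\ka)_p$ is odd in $t$ and even in $\ka$ up to sign — more precisely $(-t;-\ka)_p = -t^p - (-\ka)^{p-1}(-t)\cdot(-1)=\dots$ — one checks directly from the product form that each factor $(z_a-z_j-m\ka-1)$ becomes $-( -z_a+z_j+m\ka +1)$ after the substitution, but reindexing $m \mapsto p-1-m$ (a bijection of $\{0,\dots,p-1\}$) together with $\ka^{p-1}=1$ in $\F_p$ returns the product to $D_a(z;\ka)$ up to a global sign. Since there are $(n-1)p$ linear factors and $p$ is odd, the total sign is $(-1)^{(n-1)p} = (-1)^{n-1}$. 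Hence $D_a(-z;-\ka) = (-1)^{n-1} D_a(z;\ka)$.

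The second step is to observe that the Shapovalov-duality operation $F\mapsto F^*$ is compatible with scalar multiplication: if $\tilde C_a(z;\ka) = D_a(z;\ka)\hat C_a(z;\ka)$ then $\tilde C_a(-z;\ka)^* = D_a(-z;\ka)\,\hat C_a(-z;\ka)^*$, because $D_a(-z;\ka)$ is a scalar and $(\lambda F)^* = \lambda F^*$. Combining this with Lemma \ref{lem c=c} gives
\[
\tilde C_a(z;-\ka) = D_a(z;-\ka)\hat C_a(z;-\ka) = -D_a(z;-\ka)\hat C_a(-z;\ka)^*.
\]
Now I must relate $D_a(z;-\ka)$ to $D_a(-z;\ka)$. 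From the explicit product, $D_a(z;-\ka) = \prod_{j\ne a}\prod_{m=0}^{p-1}(z_a - z_j + m\ka - 1)$, and substituting $z\mapsto -z$ in $D_a(-z;\ka)$ one recovers exactly this; equivalently $D_a(z;-\ka) = D_a(-z;\ka)$ by the same reindexing $m\mapsto p-1-m$ combined with the $\ka$-sign bookkeeping (each of the $(n-1)p$ factors picks up a sign, but this is the same count as above). Rather than do this twice, the cleanest route is: $D_a(z;-\ka) = D_a(-(-z);-\ka) = (-1)^{n-1}D_a(-z;\ka)$ by the first step applied at the point $-z$, so $D_a(z;-\ka) = (-1)^{n-1}D_a(-z;\ka)$.

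Finally, assembling: $\tilde C_a(z;-\ka) = -D_a(z;-\ka)\hat C_a(-z;\ka)^* = -(-1)^{n-1}D_a(-z;\ka)\hat C_a(-z;\ka)^* = (-1)^n\,\tilde C_a(-z;\ka)^*$, which is \eqref{tc=c}. The main obstacle is purely the sign bookkeeping in the denominator polynomial: one must be careful that $p$ is odd so $(-1)^p=-1$, that $\ka^{p-1}=1$ in characteristic $p$ so $(t;\ka)_p=(t;-\ka)_p$ as polynomials in $t$, and that the reindexing $m\mapsto p-1-m$ is what converts $+m\ka$ into $-m\ka$ after pulling out the scalar $\ka$; once these three facts are lined up the computation is routine, and everything else follows formally from Lemma \ref{lem c=c} and the linearity of $(\,\cdot\,)^*$.
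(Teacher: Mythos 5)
Your proposal is correct and takes essentially the same route as the paper: both deduce \eqref{tc=c} from Lemma \ref{lem c=c} by multiplying by the denominator $D_a$ and checking that it transforms by the sign $(-1)^{n-1}$ under $(z,\ka)\mapsto(-z,-\ka)$ (using that $p$ is odd and $\ka^{p-1}=1$ for $\ka\in\F_p^\times$), the paper phrasing this via the Pochhammer factors $(z_a-z_j;\ka)_p$. One stray remark of yours, ``equivalently $D_a(z;-\ka)=D_a(-z;\ka)$'', is off by $(-1)^{n-1}$, but it is harmless since you immediately supersede it with the correct relation $D_a(z;-\ka)=(-1)^{n-1}D_a(-z;\ka)$, which is what your final assembly actually uses.
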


\begin{proof}Formula \eqref{tc=c} follows from equation \eqref{c=c} and
the following formulas:
\begin{gather*}
\tilde C_a(z;\ka)=
(C_a(z;\ka)-1) \prod_{j\ne i} (z _i-z_j, \ka)_p,
\\
\tilde C_a(-z;-\ka)
=
(C_a(-z;-\ka)-1) \prod_{j\ne i} (-z _i+z_j, -\ka)_p .\tag*{\qed}
\end{gather*}\renewcommand{\qed}{}
\end{proof}

\subsection[All solutions of qKZ equations for ka in F\_p\^times]{All solutions of qKZ equations for $\boldsymbol{\ka \in \F_p^\times}$}
\label{sec 7.4}

\begin{thm}\label{thm main}
Let $p>n$, $\ka\in\F_p^\times$, and $0<d(\ka)<n-1$.
Let $f(z)$ be a $V$-valued rational function in $z$ which is a solution of
 the qKZ equations with step $\ka$.
Then $f(z)$ is a linear combination of the $p$-hypergeometric solutions
$Q^{\ell p-1}(z;\ka)$, $\ell = 1, \dots, d(\ka)$, with coefficients which are rational functions
in $z_i^p-z_i$, $i=1,\dots, n$.
\end{thm}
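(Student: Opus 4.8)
The plan is to prove that any such $f$ takes values, for generic $z$, in the fiber $\mc S(z;\ka)=\on{span}_\K\{Q^{\ell p-1}(z;\ka)\mid\ell=1,\dots,d(\ka)\}$ of the subbundle $\mc S(\ka)\to\mc A(\ka)$, and then to extract the coefficients. Since $f$ is a flat section of the qKZ discrete connection, formula \eqref{fCf} gives $\hat C_a(z;\ka)f(z)=0$ for $a=1,\dots,n$, so $f(z)\in\bigcap_{a=1}^n\ker\hat C_a(z;\ka)$ at every point where $f$ is regular. It is therefore enough to show that for generic $z$
\[
\bigcap_{a=1}^n\ker\hat C_a(z;\ka)=\mc S(z;\ka).
\]

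The inclusion $\supseteq$ is Theorem \ref{thm p-cu}, and $\dim\mc S(z;\ka)=d(\ka)$ generically by Theorem \ref{thm ind}, so it remains to prove $\dim\bigcap_a\ker\hat C_a(z;\ka)\le d(\ka)$. For this I would degenerate to the differential KZ equations: by Lemma \ref{lem C li C}, $\al^{-p}\hat C_a(z/\al;\ka)\to C_a^{\on{KZ}}(z;\ka)$ as $\al\to0$, while $\ker\hat C_a(z/\al;\ka)=\ker\big(\al^{-p}\hat C_a(z/\al;\ka)\big)$ for $\al\neq0$. Treating $\al$ as a parameter and passing to the flat limit of the subspaces $\bigcap_a\ker\hat C_a(z/\al;\ka)$ as $\al\to0$ — whose dimension equals their generic dimension and which is contained in $\bigcap_a\ker C_a^{\on{KZ}}(z;\ka)$, as any limiting vector is annihilated by every $C_a^{\on{KZ}}(z;\ka)$ — one sees that the generic value of $\dim\bigcap_a\ker\hat C_a(z;\ka)$ is at most $\dim\bigcap_a\ker C_a^{\on{KZ}}(z;\ka)$. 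The companion result of \cite{VV1} for the differential KZ equations (on which the present theorem is based) identifies this last common kernel, for generic $z$, with the span of the $p$-hypergeometric KZ solutions $\bar Q^{\ell p-1}(z;\ka)$, $\ell=1,\dots,d(\ka)$, whose dimension is $d(\ka)$ because these vectors are linearly independent over $\K(z)$ by Lemma \ref{lem lcKZ}. Since $z/\al$ is generic when $z$ and $\al$ are, this yields the desired bound, hence the displayed equality.

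It then remains to read off $f$. Choosing a subset $I$ with $M_I(z;\ka)\not\equiv0$ (which exists by Lemma \ref{lem minor}), on the Zariski open set where $f$ is regular and $M_I(z;\ka)\neq0$ the relation $f(z)=\sum_{\ell=1}^{d(\ka)}a_\ell(z)\,Q^{\ell p-1}(z;\ka)$ can be solved uniquely, by Cramer's rule applied to the coordinates indexed by $I$, for rational functions $a_\ell\in\K(z)$. Applying the qKZ relation $g(z_1,\dots,z_a-\ka,\dots,z_n)=K_a(z;\ka)g(z)$ — which holds for $g=f$ and for each $g=Q^{\ell p-1}(\,\cdot\,;\ka)$ — to this identity and comparing both sides, linear independence of the $Q^{\ell p-1}$ over $\K(z)$ (Theorem \ref{thm ind}) forces $a_\ell(z_1,\dots,z_a-\ka,\dots,z_n)=a_\ell(z)$ for every $a=1,\dots,n$. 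Thus each $a_\ell$ is $\ka$-periodic in every variable; since $\ka\in\F_p^\times$ generates the additive group of $\F_p$, $a_\ell$ is invariant under $z_i\mapsto z_i+c$ for all $c\in\F_p$ and all $i$, hence is a rational function of $z_1^p-z_1,\dots,z_n^p-z_n$. This is the assertion of the theorem.

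The main obstacle is the upper bound $\dim\bigcap_a\ker\hat C_a(z;\ka)\le d(\ka)$: the joint kernel of the difference $p$-curvature operators is not directly tractable, and the only route I see is to degenerate to the KZ $p$-curvature and import the (already nontrivial) description of $\bigcap_a\ker C_a^{\on{KZ}}$ from \cite{VV1}. Care is also needed so that the flat-limit/semicontinuity step is carried out at a generic point of $\K^n$, ensuring both that the rank of $C_a^{\on{KZ}}$ invoked is the generic one and that the rescaled argument $z/\al$ remains generic, and so that the entire argument is confined to the dense open locus where $f$ and the minors $M_I(z;\ka)$ are regular and nonzero.
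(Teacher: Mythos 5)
Your proposal is correct and takes essentially the same route as the paper: both arguments bound the generic dimension of the joint kernel of the qKZ $p$-curvature operators by $d(\ka)$ by degenerating them to the differential KZ $p$-curvature operators (Lemma \ref{lem C li C}, equivalently Corollary \ref{cor deg C}) and importing the KZ result of \cite{VV1}, and then conclude via the linear independence of Theorem \ref{thm ind} together with $\ka$-periodicity (hence rationality in $z_i^p-z_i$) of the coefficients. The only difference is in packaging: you phrase the comparison as a Grassmannian (flat-limit) semicontinuity of kernels along $z\mapsto z/\al$, whereas the paper runs the equivalent rank argument using that the entries of the normalized KZ curvature are the top-degree parts of those of the normalized qKZ curvature.
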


Recall that if $d(\ka) = n-1$, then the qKZ connection has a basis of flat sections given by
 the $p$-hypergeometric sections by Theorem \ref{thm ind}, and
if $d(\ka) = 0$, then the qKZ connection has a~basis of flat sections given
 by the $p$-quasi-hypergeometric solutions, by Lemma \ref{lem 6.1}.

\begin{proof}For $a=1,\dots,n$, consider the normalized $p$-curvature operators
$\tilde C_a^{\on{KZ}}(z,\ka)$ and $\tilde C_a(z,\ka)$. Both of these operators are polynomials in $z$,
and the polynomial $\tilde C_a^{\on{KZ}}(z,\ka)$ is the top-degree part of the polynomial
$\tilde C_a(z,\ka)$. The polynomial $\tilde C_a^{\on{KZ}}(z,\ka)$ is nonzero by \cite[Theorem 1.13]{VV1}
and hence $\tilde C_a(z,\ka)$ is a nonzero operator.

It was proved in \cite[Theorem 1.8]{VV1} that if $p>n$, $\ka\in \F_p^\times$, and
$0<d(\ka)<n-1$, then all solutions of the KZ equations are linear combinations of the
$p$-hypergeometric solutions. Hence the intersection of kernels of the operators $\tilde C_a^{\on{KZ}}(z,\ka)$, $a=1,\dots, n$,
is of dimension~$d(\ka)$ for generic~$z$, and the span of images of the operators
$\tilde C_a^{\on{KZ}}(z,\ka)$, $a=1,\dots, n$, is of dimension~${n-1-d(\ka)}$ for generic $z$. Therefore, the span of images of the operators
$\tilde C_a(z,\ka)$, $a=1,\dots, n$, has dimension at least $n-1-d(\ka)$ for generic $z$.
This implies that the span of values of flat sections of the qKZ connection is of dimension
not larger than $d(\ka)$ for generic $z$. But we have $d(\ka)$ flat linear independent $p$-hypergeometric sections
$Q^{\ell p-1}(z;\ka)$, $\ell = 1, \dots, d(\ka)$. Hence any flat section of the qKZ connection is a linear combination of the $p$-hypergeometric sections with 1-periodic coefficients.
\end{proof}

\begin{cor}\label{cor ort}
Let $p>n$, $\ka\in\F_p^\times $, and $0<d(\ka)<n-1$.
Let $f(z)$ and $g(z)$ be $V$-valued rational functions in $z$
where $f(z)$ is a solution
 the qKZ equations with step $\ka$ and $g(z)$ is a~solution
 the qKZ equations with step $-\ka$. Then
\begin{gather}
 \label{ort fg}
 S(g(-z), f(z))=0.
\end{gather}
\end{cor}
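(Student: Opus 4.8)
The plan is to derive \eqref{ort fg} directly from Theorem~\ref{thm main} together with the orthogonality relations of Theorem~\ref{thm orth}. The first thing to record is that since $p>n$ we have $p\nmid n$, so $d(\ka)+d(-\ka)=n-1$; hence the hypothesis $0<d(\ka)<n-1$ is equivalent to $0<d(-\ka)<n-1$, and Theorem~\ref{thm main} applies both for step $\ka$ and, after replacing $\ka$ by $-\ka$, for step $-\ka$. In particular both $f$ and $g$ are completely described as combinations of the relevant $p$-hypergeometric solutions with $1$-periodic coefficients.

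The key steps, in order, are: (i) apply Theorem~\ref{thm main} with step $\ka$ to write $f(z)=\sum_{\ell=1}^{d(\ka)}c_\ell(z)\,Q^{\ell p-1}(z;\ka)$ with each $c_\ell$ a scalar rational function of $z_1^p-z_1,\dots,z_n^p-z_n$; (ii) apply Theorem~\ref{thm main} with step $-\ka$ to write $g(z)=\sum_{m=1}^{d(-\ka)}b_m(z)\,Q^{mp-1}(z;-\ka)$ with each $b_m$ a scalar rational function of the same quantities; (iii) substitute $z\mapsto -z$, noting that $p$ is odd (as $p>n\ge 2$), so $(-z_i)^p-(-z_i)=-(z_i^p-z_i)$ and hence $b_m(-z)$ is again a rational function of the $z_i^p-z_i$; (iv) extend $S$ to a $\K(z)$-bilinear form on $V\otimes_\K\K(z)$ and expand
\[
S(g(-z),f(z))=\sum_{\ell=1}^{d(\ka)}\sum_{m=1}^{d(-\ka)}b_m(-z)\,c_\ell(z)\,S\big(Q^{mp-1}(-z;-\ka),\,Q^{\ell p-1}(z;\ka)\big);
\]
(v) invoke Theorem~\ref{thm orth}, which states that every summand $S\big(Q^{mp-1}(-z;-\ka),Q^{\ell p-1}(z;\ka)\big)$ vanishes, so the entire sum is $0$, giving \eqref{ort fg}.

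I do not expect a genuine obstacle: the substantive content is already in place, namely Theorem~\ref{thm main} (reducing an arbitrary $V$-valued rational solution to a combination of $p$-hypergeometric solutions with $1$-periodic coefficients) and Theorem~\ref{thm orth} (pairwise orthogonality of the $\ka$- and $(-\ka)$-hypergeometric solutions under $S$). The only points needing a line of care are the bookkeeping that the coefficients $b_m,c_\ell$ act as scalars for $S$ once one passes to $\K(z)$, and the elementary parity identity $(-z_i)^p-(-z_i)=-(z_i^p-z_i)$ used in step (iii); both are immediate. This corollary is precisely the global (all-solutions) counterpart of the section-level statement at the end of Section~\ref{sec 7.2}.
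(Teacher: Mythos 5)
Your proposal is correct and follows exactly the paper's route: the paper derives Corollary \ref{cor ort} precisely by combining Theorem \ref{thm main} (applied to both steps $\ka$ and $-\ka$) with the orthogonality relations of Theorem \ref{thm orth}, which is what you spell out. The extra bookkeeping you include (bilinearity of $S$ over $\K(z)$ and the parity of $z_i^p-z_i$) is fine and consistent with the paper's one-line proof.
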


Formula \eqref{ort fg} follows from Theorems \ref{thm main}
and \ref{thm orth}.

\subsection[qKZ connection with ka in K setminus F\_p]{qKZ connection with $\boldsymbol{ \ka \in \K\setminus \F_p}$}

\begin{lem}\label{lem ne F}
Let $p>n$ and $\ka\in\K\setminus \F_p$ . Then all the normalized $p$-curvature operators~${\tilde C_a(z;\ka)}$,
$a=1,\dots,n$, are nondegenerate for generic $z$.
\end{lem}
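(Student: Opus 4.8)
The plan is to reduce the assertion to the corresponding nondegeneracy result for the differential KZ equations, using the relation between the two $p$-curvatures recorded in Corollary~\ref{cor deg C}.

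First I would fix a basis of $V$ and read Corollary~\ref{cor deg C} in matrix form: the entries of $\tilde C_a(z;\ka)$ are polynomials in $z$ of degree at most $(n-2)p$, and the homogeneous part of top degree $(n-2)p$ of the matrix $\tilde C_a(z;\ka)$ is exactly the matrix of the normalized $p$-curvature operator $\tilde C_a^{\on{KZ}}(z;\ka)$ of the associated differential KZ connection; that is, $\tilde C_a(z;\ka) = \tilde C_a^{\on{KZ}}(z;\ka) + f_1(z)$ with $\deg f_1 < (n-2)p$ entrywise.

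Next I would invoke the differential analogue, proved in \cite{VV1}: for $p>n$ and $\ka\in\K\setminus\F_p$ the normalized $p$-curvature operators $\tilde C_a^{\on{KZ}}(z;\ka)$, $a=1,\dots,n$, are nondegenerate for generic $z$, i.e.\ each $\det\tilde C_a^{\on{KZ}}(z;\ka)$ is a nonzero polynomial. Since the entries of $\tilde C_a^{\on{KZ}}(z;\ka)$ are homogeneous of the common degree $(n-2)p$ and $\dim V = n-1$, this determinant is a nonzero homogeneous polynomial of degree $(n-1)(n-2)p$. Expanding $\det\tilde C_a(z;\ka)$ as the signed sum of products of matrix entries then shows it is a polynomial of degree at most $(n-1)(n-2)p$ whose homogeneous component of that top degree equals $\det\tilde C_a^{\on{KZ}}(z;\ka)$; as the latter is nonzero, $\det\tilde C_a(z;\ka)$ is a nonzero polynomial for every $a$. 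Hence $\tilde C_a(z;\ka)$ is invertible for $z$ outside the zero locus of $\prod_{a=1}^n\det\tilde C_a(z;\ka)$, a proper Zariski-closed subset, which is the claim. As a byproduct, since $\hat C_a(z;\ka) = \tilde C_a(z;\ka)/D_a(z;\ka)$ with $D_a(z;\ka)$ a nonzero polynomial, every rational $V$-valued solution of the qKZ equations with step $\ka\in\K\setminus\F_p$ lies in $\ker\hat C_a(z;\ka)=0$ for generic $z$ and therefore vanishes identically, giving Corollary~\ref{cor ne F} and Theorem~\ref{thm 1.1}.

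The step carrying the actual content is the cited result of \cite{VV1}; everything else is the formal observation that the top-degree part of $\tilde C_a(z;\ka)$ is $\tilde C_a^{\on{KZ}}(z;\ka)$. The one point to be careful about is that the nondegeneracy of $\tilde C_a^{\on{KZ}}(z;\ka)$ is exactly what legitimizes the passage ``top-degree part of the determinant $=$ determinant of the top-degree part'': if $\det\tilde C_a^{\on{KZ}}(z;\ka)$ happened to vanish identically, as it does when $\ka\in\F_p^\times$, where the differential $p$-curvature has rank $1$ by \cite[Theorem~1.13]{VV1}, this argument would give no information, and the conclusion would genuinely fail.
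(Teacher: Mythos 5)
Your proposal is correct and follows essentially the same route as the paper: both cite the nondegeneracy of the differential $p$-curvature $\tilde C_a^{\on{KZ}}(z;\ka)$ for $\ka\in\K\setminus\F_p$ from \cite{VV1} and then use Corollary~\ref{cor deg C} to conclude that $\det\tilde C_a(z;\ka)$ has the nonzero homogeneous polynomial $\det\tilde C_a^{\on{KZ}}(z;\ka)$ as its top-degree part, hence is nonzero. Your explicit justification of the step ``top-degree part of the determinant equals the determinant of the top-degree part'' only makes precise what the paper leaves implicit.
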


\begin{proof}
Formula (3.19) in \cite{VV1} describes the spectrum of the $p$-curvature operators $C_a^{\on{KZ}}(z;\ka)$
of differential KZ equations. The formula shows that all $p$-curvature operators $C_a^{\on{KZ}}(z;\ka)$
are nondegenerate for generic $z$.
In a basis of $V$, the matrices of $\tilde C_a^{\on{KZ}}(z;\ka)$ are homogeneous polynomials
in $z$ of degree $(n-2)p$. Hence their determinants are nonzero homogeneous polynomials in $z$. By Corollary \ref{cor deg C},
the determinants of the normalized $p$-curvature operators $\tilde C_a(z;\ka)$ are nonzero polynomials. The lemma follows.
\end{proof}

\begin{cor}\label{cor ne F}
For $p>n$ and $\ka\in\K\setminus \F_p$, there does not exist a nonzero rational $V$-valued function $f(z)$ which is a flat section
of the qKZ connection with parameter $\ka$.
\end{cor}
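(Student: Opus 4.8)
The plan is to confront the existence of such an $f$ with the nondegeneracy of the normalized $p$-curvature operators recorded in Lemma~\ref{lem ne F}, and derive a contradiction.

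First I would use the elementary fact, already stated as \eqref{fCf}, that any flat section $f(z)$ of the qKZ connection with parameter $\ka$ satisfies $C_a(z;\ka) f(z) = f(z)$ for $a = 1,\dots,n$, i.e.\ $\hat C_a(z;\ka) f(z) = 0$ for all $a$ at every point $z$ of the domain of $f$. Multiplying through by the polynomial denominator $D_a(z;\ka)$ turns this into the polynomial-coefficient identity $\tilde C_a(z;\ka) f(z) = 0$ for all $a$ and all such $z$.

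Next, by Lemma~\ref{lem ne F}, since $p>n$ and $\ka \in \K \setminus \F_p$, the operator $\tilde C_a(z;\ka)$ is nondegenerate for generic $z$; equivalently, $\det \tilde C_a(z;\ka)$ is a nonzero polynomial in $z$. Hence on the nonempty Zariski-open subset of $\K^n$ where this determinant is nonzero and where $f$ has no pole, the relation $\tilde C_a(z;\ka) f(z) = 0$ forces $f(z) = 0$. A nonzero rational $V$-valued function cannot vanish on a nonempty Zariski-open set, so $f \equiv 0$, contradicting the assumption. This proves Corollary~\ref{cor ne F}, and with it Theorem~\ref{thm 1.1}.

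I do not expect a genuine obstacle in this argument itself: all of the substantive content sits in Lemma~\ref{lem ne F} (and, behind it, in \cite[Theorem~1.13]{VV1} together with Corollary~\ref{cor deg C}, which pass the nondegeneracy of $\tilde C_a^{\on{KZ}}$ up to $\tilde C_a$ via top-degree parts). The only points that need a word of care are the passage from "$\hat C_a f = 0$" to "$\tilde C_a f = 0$" by clearing the rational denominator $D_a(z;\ka)$, and the standard density remark that a rational function vanishing on a Zariski-dense set is identically zero.
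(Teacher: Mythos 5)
Your argument is exactly the paper's proof: flatness gives $\hat C_a(z;\ka) f(z)=0$ via \eqref{fCf}, clearing the denominator gives $\tilde C_a(z;\ka) f(z)=0$, and the nondegeneracy of $\tilde C_a(z;\ka)$ from Lemma~\ref{lem ne F} forces $f\equiv 0$. Correct, and the same route as the paper, just with the routine denominator-clearing and Zariski-density remarks spelled out.
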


\begin{proof}
If $f(z)$ is a flat section, then it lies in the kernel of every normalized $p$-curvature operator~$\tilde C_a(z;\ka)$. That contradicts to Lemma \ref{lem ne F}.
\end{proof}

\subsection*{Acknowledgements}
E.M. was supported in part by Simons Foundation grant 709444. A.V.\ was supported in part by Simons Foundation grant TSM-00012774. The authors thank P.~Etingof, R.~Rimanyi, and V.~Vologodsky for useful discussions.

\pdfbookmark[1]{References}{ref}
\LastPageEnding

\end{document}